\documentclass[
onecolumn, aps, pra,
tightenlines,
11pt,
longbibliography,
showpacs,
nofootinbib,
notitlepage,
superscriptaddress]{revtex4-2}

\usepackage{subfiles}
\usepackage{amsmath, amssymb, amsthm, bbold, mathtools}
\usepackage{mathrsfs}
\usepackage{chngcntr}
\usepackage{braket, physics, bm, enumitem}
\usepackage{booktabs}
\usepackage{subcaption}

\makeatletter
\long\def\revtex@flush@makecaption#1#2{%
  \par
  \vskip\abovecaptionskip
  \begingroup
    \footnotesize\rmfamily
    \samepage
    \flushing
    \let\footnote\@footnotemark@gobble
    \@make@capt@title{#1}{#2}\par
  \endgroup
  \vskip\belowcaptionskip
}%
\AtBeginDocument{\let\@makecaption\revtex@flush@makecaption}%
\makeatother
\newcommand{\mlc}[1]{\begin{tabular}[c]{@{}l@{}}#1\end{tabular}}
\newcommand{\thc}[1]{\multicolumn{1}{l}{\mlc{#1}}}

\usepackage[colorlinks, allcolors=black]{hyperref}
\usepackage[capitalize, nameinlink]{cleveref}
\usepackage{setspace}
\usepackage{quantikz, adjustbox}
\usetikzlibrary{matrix}

\usepackage{algorithm, algcompatible}
\usepackage{algpseudocode}

\allowdisplaybreaks

\newtheorem{lem}{Lemma}

\newtheorem{rmk}{Remark}
\newtheorem{cor}{Corollary}
\newtheorem{prop}{Proposition}
\newtheorem{thm}{Theorem}
\newtheorem{exm}{Example}
\newtheorem*{thm*}{Theorem}

\Crefname{prop}{Proposition}{Propositions}
\Crefname{thm}{Theorem}{Theorems}
\Crefname{lem}{Lemma}{Lemmas}
\Crefname{defn}{Definition}{Definitions}
\Crefname{rmk}{Remark}{Remarks}
\Crefname{exm}{Example}{Examples}
\Crefname{alg-line}{Line}{Lines}
\crefname{alg}{Algorithm}{Algorithms}
\Crefname{cor}{Corollary}{Corollaries}
\Crefname{appendix}{Appendix}{Appendices}

\DeclareMathOperator{\polylog}{polylog}
\DeclareMathOperator{\poly}{poly}
\DeclareMathOperator{\diag}{diag}

\DeclarePairedDelimiter{\ceil}{\lceil}{\rceil}

\renewcommand{\epsilon}{\varepsilon}
\renewcommand{\tilde}[1]{\widetilde{#1}}
\renewcommand{\hat}[1]{\widehat{#1}}
\newcommand{\llangle}{\langle\!\langle}
\newcommand{\rrangle}{\rangle\!\rangle}

\mathcode`l="8000
\begingroup
\makeatletter
\lccode`\~=`\l
\DeclareMathSymbol{\lsb@l}{\mathalpha}{letters}{`l}
\lowercase{\gdef~{\ifnum\the\mathgroup=\m@ne \ell \else \lsb@l \fi}}%
\endgroup
\makeatother

\begin{document}

\title{Breaking the Curse of Dimensionality in Quantum PDE Solvers \texorpdfstring{\\}{ }via Gevrey Regularity}

\author{Pooya~Ronagh}
\email{Corresponding author: pooya.ronagh@uwaterloo.ca}
\affiliation{Institute for Quantum Computing, University of Waterloo, Waterloo, ON, N2L 3G1, Canada}
\affiliation{Department of Physics \& Astronomy, University of Waterloo, Waterloo, ON, N2L 3G1, Canada}
\affiliation{Perimeter Institute for Theoretical Physics, Waterloo, ON, N2L 2Y5, Canada}
\affiliation{Microsoft Quantum, Redmond, WA, 98052, USA}

\author{Mariia~Sobchuk}
\affiliation{Institute for Quantum Computing, University of Waterloo, Waterloo, ON, N2L 3G1, Canada}
\affiliation{Department of Physics \& Astronomy, University of Waterloo, Waterloo, ON, N2L 3G1, Canada}

\author{Xiaoran~Li}
\affiliation{Institute for Quantum Computing, University of Waterloo, Waterloo, ON, N2L 3G1, Canada}

\author{Grecia~Castelazo}
\affiliation{Department of Electrical Engineering \& Computer Science, Massachusetts Institute of Technology, Cambridge, MA 02139, USA}
\affiliation{Department of Computer Science, University of California,
Santa Barbara, CA 93106, USA}

\author{Ala~Shayeghi}
\affiliation{Institute for Quantum Computing, University of Waterloo, Waterloo, ON, N2L 3G1, Canada}
\affiliation{Department of Applied Mathematics, University of Waterloo, Waterloo, ON, N2L 3G1, Canada}
\affiliation{National Research Council Canada, Waterloo, ON, N2L 3G1, Canada}

\date{\today}

\begin{abstract}
We connect different degrees of smoothness of real-valued periodic functions to the number of qubits required for their high-precision Fourier-basis amplitude encodings as quantum states. Our resulting central observation is that the Gevrey hierarchy, which stratifies the space between smooth and analytic functions, provides a natural class for high-precision quantum algorithms. We then specialize to solving general linear partial differential equations (PDEs) with periodic boundary conditions, showing how our Fourier methods do so efficiently at varying target precisions on a quantum computer. This also demonstrates how our framework enables passage from query-complexity results to explicit elementary gate counts. As an application, we introduce a hierarchy of many-body quantum simulation pipelines that harness these high-precision algorithms to probe the linear response of atomistic systems in first quantization. Each level of the hierarchy unlocks a further polynomial-degree quantum speedup, yielding a gradual improvement in simulation efficiency as quantum computers scale.
\end{abstract}

\maketitle

\section{Introduction}
\label{sec:intro}

Quantum computing is anticipated to provide striking computational advantages for specific problems, such as period finding in integer rings, which lies at the core of Shor's factoring algorithm. Such problems possess particular structures that a quantum computer can exploit \cite{aaronson2009need}. This raises a key question: what kind of structure enables such advantages, and do these structures arise in meaningful real-world applications? For discrete problems, namely those involving Boolean functions $f:\{0, 1\}^n \to \{0, 1\}$, this question has been studied extensively, yielding deep insights into the mechanisms of quantum speedup, and in particular the crucial role of the quantum Fourier transform (QFT) in algorithms exhibiting exponential speedups \cite{aaronson2009need, childs2010quantum, chailloux2018note, aaronson2020quantum}. Far less is understood, however, in the continuous-domain setting, where a comparable systematic effort to guide the search for impactful quantum applications has been lacking.

Several previous works have used quantum linear algebra algorithms \cite{harrow2009quantum, childs2017quantum, costa2022optimal} to solve differential equations. Among these, many leverage the QFT to solve the equation in the Fourier (or plane-wave) basis \cite{leyton2008quantum, childs2021high, tong2021fast, linden2022quantum, liu2023efficient}, or generalize to Euclidean domains by preparing the solution on a pseudo-spectral lattice (such as the Chebyshev lattice) \cite{childs2021high}, or impose all-zero Dirichlet boundary conditions on a hypercube \cite{cao2013quantum, liu2021efficient, liu2023efficient}. Na\"ively, however, one still requires a discretization of size $\mathcal O((1/\epsilon)^d)$ in either of the Euclidean or Fourier domains to achieve a target accuracy $\epsilon>0$ on the solution (in the $\ell^2$ norm), leading to an exponential dependence on $d$.

Classical numerical methods for general PDEs can improve on this na\"ive bound to $\mathcal O(\log^d(1/\epsilon))$-type complexities at the price of stronger structural assumptions such as isotropic analyticity or bounded mixed derivatives (see \cref{tab:classical-pde-methods}). In every such case, however, $d$ still appears as an exponent rather than as a bounded polynomial factor, so the well-known \emph{curse of dimensionality} persists.

\begin{table}[t]
\footnotesize
\centering
\begin{tabular}{p{3.1cm} p{5.2cm} p{3.6cm} p{3.9cm}}
\toprule
Algorithm &
Mathematical assumptions &
Complexity in $\epsilon$ and $d$ &
Reference
\\
\midrule
\mlc{Uniform-grid \\ FDM/FEM/FVM} &
Finitely differentiable &
$\mathrm{poly}((1/\epsilon)^d)$ &
folklore
\\[2ex]

\mlc{\\ Adaptive h--p-FEM \\ on geometrically \\ graded meshes} &
\mlc{\\ Analytic on a polygonal domain \\ 
with possibly finitely many \\
corner singularities} &
$\mathrm{poly}(\log^d(1/\epsilon))$ &
\cite[Sec 6]{babuvska1987hp}
\\[2ex]

\mlc{Spectral methods \\ on full tensor grids} &
\mlc{\\ Isotropically analytic in each \\ 
variable (Bernstein ellipse / \\
periodic strip)} &
$\mathrm{poly}(\log^d(1/\epsilon))$ &
\mlc{\\
\cite[Thms 3.1, 4.4]{tadmor1986exponential} \\
\cite[Rmk 27]{gheorghiu2007spectral} \\
\cite[Rmk 1.10]{shen2011spectral}}
\\[2ex]

\mlc{\\ Sparse-grid FEM }&
\mlc{\\ Bounded mixed second derivatives \\
(dominating mixed smoothness)} &
$\epsilon^{-1/2}\log^{3(d-1)/2}(1/\epsilon)$ &
\mlc{\\
\cite[Eq 2.3, 2.9]{zenger1991sparse} \\
\cite[Thm 3.8, Lem 3.13]{bungartz2004sparse}}
\\[2.5ex]

\mlc{\\ Sparse spectral \\ methods (hyperbolic \\ cross)} &
\mlc{\\ Bounded mixed derivatives up to \\
a fixed order $m$ (Jacobi-weighted \\
Korobov space)} &
$\epsilon^{-1/m}\log^{d-1}(1/\epsilon)$ &
\cite[Thm 8.10]{shen2011spectral}
\\[2ex]

\mlc{\\ \textbf{This work}} &
\mlc{\\ Gevrey regularity ($s \geq 0$)} &
$(sd)^{\ceil{\mathcal J}s}\log^{\ceil{\mathcal J}s + 1}(\frac{1}{\epsilon})$ &
\cref{thm:linear-pde-complexity}
\\[2.5ex]
\bottomrule
\end{tabular}
\caption{Classical complexity of solving a general $d$-dimensional PDE to accuracy $\epsilon$, contrasted with the quantum algorithm of this work. For each classical method we report its sharpest published rate together with the structural assumption required to attain it, stating the strongest claim for each method. In our own row, $s$ is the Gevrey order of the solution and $\ceil{\mathcal J}$ is the PDE order.}
\label{tab:classical-pde-methods}
\end{table}

In this paper, we discuss regularity conditions that allow quantum algorithms to alleviate this curse of dimensionality. We study the quantum computational hardness of preparing amplitude encodings of a broad class of periodic real functions, whose smoothness is much weaker than analyticity. This gives a full picture of how the various regularity classes affect the time and space complexity of preparing such encodings (see \cref{tab:fourier-tailedness}). We also construct efficient block encodings for the Fourier approximations of all higher-order differential operators, which is critical to achieving full $\poly(d)$ complexity. These novel block encodings also allow us to obtain concrete gate and qubit counts for solving general linear time-independent PDEs (see \cref{thm:linear-pde-complexity}).

\begin{table}[t]
\small
\centering
\begin{tabular}{p{4.35cm} p{3.85cm} p{4cm} p{3.7cm}}
\toprule
Regularity class &
\thc{Decay rate of Fourier \\ coefficients ($|\hat f_\omega|$)}&
\thc{Tailedness at cutoff $N$ \\ (truncation error, $E_N$)}&
\thc{Cutoff $N$ to guarantee \\ target precision $\epsilon$}\\
\midrule
Continuous $(\mathcal C^{0})$ &
$o(1)$ &
$o(1)$ &
none
\\[2ex]

$p$-smooth $(\mathcal C^{p})$ &
$C\norm{\omega}_\infty^{-p}$ &
$C \sqrt\frac{d}{d-2p} N^{-p + d/2}$ &
$\left(\frac{C}\epsilon\right)^{1 / (p - d/2)}$
\\[2ex]

$(p, q)$-H\"older $(\mathcal C^{p,q})$ &
$C\norm{\omega}_\infty^{-p - q}$ &
$C \sqrt\frac{d}{d-2p-2q} N^{-p - q + d/2}$ &
$\left(\frac{C}\epsilon\right)^{1 / (p + q - d/2)}$
\\[2.5ex]

smooth $(\mathcal C^\infty)$ &
sub-polynomial &
sub-polynomial &
super-logarithmic
\\[2ex]

$s$-\text{Gevrey} ($\mathcal G^s$ for $s>1$) &
$C e^{-r \norm{\omega}_\infty^{1/s}}$ &
$C \sqrt{\frac{ds}r N^{d-1/s}}\, e^{-r N^{1/s}}$ &
$\big(\frac{1}{2r}\, \log \frac{C^2 ds}{r\epsilon}\big)^{s}$
\\[2ex]

\text{Analytic} ($\mathcal C^\omega$ or $\mathcal G^1$) &
$C e^{-r \norm{\omega}_\infty}$ &
$C \sqrt{\frac{d}r N^{d-1}}\, e^{-r N}$ &
$\frac{1}{2r}\, \log \frac{C^2d}{r\epsilon}$
\\[2ex]

Entire ($\mathcal E$) &
\mlc{faster than $Ce^{-r\norm{\omega}_\infty}$, \\ $\forall\, r>0$} &
\mlc{faster than $Ce^{-rN}$, \\ $\forall\, r>0$} &
sub-linear in $\log \frac1\epsilon$
\\[2ex]

\mlc{Entire of finite order \\ ($\mathcal G^s$ for $0\leq s <1$)} &
$C e^{-r \norm{\omega}_\infty^{1/s}}$ &
$C \sqrt{\frac{ds}r N^{d-1/s}}\, e^{-r N^{1/s}}$ &
$\big(\frac{1}{2r}\, \log \frac{C^2 ds}{r\epsilon}\big)^{s}$
\\[2ex]

Finite bandwidth ($s=0$) &
$0$ for $\norm{\omega}_\infty > 1/r$ &
$0$ for $N\geq 1/r$ &
\mlc{$N=\lceil 1/r\rceil$ \\ (exact, any $\epsilon\geq0$)}
\\[1ex]
\bottomrule
\end{tabular}
\caption{Relating quantum computational hardness of problems regarding a periodic $d$-variate function $f: \mathbb T^d \to \mathbb R$ to its smoothness. The second column provides the decay law of Fourier coefficient $|\hat f_\omega|$ in each of the regularity classes enumerated as a function of the infinite norm of the Fourier mode, $\norm{\omega}_\infty$. The third column shows the truncation error (tailedness), $E_N:= \| \hat f - \hat f_N\|_2$, of the truncated Fourier expansion $\hat f_N$. Here the integer $N$ is a cutoff value for the infinite norm of the Fourier modes. The last column exhibits the required Fourier cutoff threshold $N$ for meeting a target precision $\epsilon>0$. So for a quantum register storing an amplitude encoding of $f$ in the Fourier basis, this translates to $\mathcal O(d\log N)$ qubits required. Throughout, $r$ measures the scale of the decay (e.g.\ the radius of convergence for analytic functions, or the growth rate of an entire extension into the complex plane, where it no longer corresponds to a distance to a singularity), while $C$ measures the size of $f$, or of its analytic continuation, relative to that scale. See \cref{sec:fourier} for a detailed discussion and derivations.}
\label{tab:fourier-tailedness}
\end{table}

We envisage many high-impact applications of our framework as future research directions. For instance, a sequel paper by some of the authors uses the machinery developed here to obtain a provable quantum--classical separation result in continuous-domain Gibbs sampling \cite{olivucci2026provable}. In the present paper, we instead demonstrate our framework on atomistic simulations in first quantization, in line with recent developments in \cite{su2021fault, da2025comprehensive, pocrnic2026efficient}. These references choose the number of basis-set elements, which we denote by $N$, using chemical intuition and heuristic arguments, thereby treating the precision parameter $\epsilon$ and the truncation threshold $N$ as seemingly independent. However, the choice of $N$ is ultimately about precision as well, so one should seek to resolve $N$ in terms of $\epsilon$ to claim guaranteed performance from the quantum algorithm and the resulting resource estimates. The obstacle is that the exact many-body wavefunction, being governed by a Hamiltonian with singular electron--nucleus and electron--electron Coulomb interactions, develops non-differentiable Kato cusps \cite{kato1957eigenfunctions}, which by itself precludes any Gevrey-smoothness guarantee. In \cref{sec:material-application} we resolve this by mollifying these singularities with a parameter $\gamma>0$ that directly controls the resulting Gevrey radius of analyticity, and we work out how $\gamma$ must be chosen as a function of the target precision $\epsilon$ so that this mollification's effect on the solution's approximation error remains controlled.

Our results pertain to general linear PDEs; nonetheless, they contribute to the prior body of work on high-precision quantum algorithms \cite{cao2013quantum, childs2021high, tong2021fast}. These references, however, focus on specific types of PDEs, such as the Poisson or general second-order elliptic equations, and do not rigorously scrutinize the technical assumptions underlying their claimed complexity. For instance, \cite{childs2021high} states complexity results for adaptive finite-difference and spectral methods that implicitly require a uniform bound on the solution's derivatives at \emph{arbitrarily high} orders. Such a requirement is considerably stronger than any finite-order smoothness, or even analyticity, and fails for simple periodic functions such as $\sin(kx)$ with any $k > 1$. We also note that the need for ``sufficient smoothness'' has been mentioned in previous papers \cite{cao2013quantum, childs2021high, leng2025operator} but, to our knowledge, has not been rigorously investigated before our work.

We speculate that our framework can be generalized to bounded Euclidean domains either by direct transfer of the Fourier spectral methods or by using Chebyshev pseudo-spectral techniques. However, we leave this as a future direction of research and refer the reader to \cite[Appendix A.5]{motamedi2022gibbs} for further discussion of the challenges of relaxing periodic boundary conditions. In contrast to periodicity, which we believe can eventually be relaxed, we suspect that Gevrey regularity itself cannot be: high-precision $\poly(d)$ quantum algorithms for all the non-Gevrey function classes (i.e., the first four rows of \cref{tab:fourier-tailedness}) may be fundamentally obstructed (see \cref{rmk:smooth-dft-no-go}).

The paper is organized as follows. We conclude this introduction with an intuitive preamble on our Fourier-analytic framework. We then state our main results in \cref{sec:results}. More specifically, \cref{thm:linear-pde-precision} and \cref{cor:ultimate-N} show how the Fourier truncation threshold $N$ must be chosen when solving linear PDEs, and \cref{thm:linear-pde-complexity} presents a detailed query and gate complexity analysis for linear PDEs with Gevrey smooth solutions. In \cref{sec:applications}, we derive corollaries for solving the Poisson equation (\cref{sec:poisson}) and the inhomogeneous Schr\"odinger equation (\cref{sec:material-application}). Finally, we provide our concluding remarks in \cref{sec:discussion}.

\subsection*{A Fourier-analytic prelude}

Throughout this work we consider $2\pi$-periodic $d$-variate functions $f$, which we view as functions defined on the flat torus $\mathbb T^d := \mathbb R^d / (2\pi \mathbb Z)^d$. Periodicity of $f$ lets us manipulate it efficiently through its Fourier expansion $\sum_{\omega \in \mathbb Z^d} \hat f_\omega \, e^{i\langle \omega, x\rangle}$. The Fourier coefficients are defined as $\hat f_\omega = \mathbb E_{x \in \mathbb T^d}[f(x)e^{-i\langle \omega, x\rangle}]$. A computationally more accessible set of coefficients associated with $f$ is given by the \emph{discrete} Fourier transform coefficients:
\begin{equation}
\tilde f_\omega= \mathbb E_{x \in \Gamma_N}[f(x) e^{-i \langle \omega, x\rangle}],
\end{equation}
obtained by averaging $f$ over the grid
\begin{equation}
\label{eq:spatial-grid}
\Gamma_N:= \left\{\frac{2\pi n}{2N+1}: n \in \mathbb Z, |n|\leq N \right\}^d.
\end{equation}
In fact, we can obtain an amplitude encoding of the discrete Fourier transform (DFT) of $f$ from the amplitude encoding of $f$ itself on the discrete spatial lattice:
\begin{equation}
\label{eq:DFT-transform}
F_N^{\otimes d}: f_N \mapsto (2N+1)^{d/2} \tilde f_N.
\end{equation}
Here $f_N$ and $\tilde f_N$ denote the vectors $(f(x))_{x\in \Gamma_N}$ and $(\tilde f_\omega)_{\omega \in \Lambda_N}$, respectively, and the discrete Fourier coefficient modes $\omega$ reside on another integer lattice with $(2N+1)^d$ grid points:
\begin{equation}
\label{eq:frequency-grid}
\Lambda_N= \{\omega\in \mathbb Z^d: \norm{\omega}_\infty \leq N \}.
\end{equation}
One can choose the discrete Fourier modes to be all non-negative; however, centering this grid at the origin makes it possible to prove $\tilde f_N \to \hat f$ strongly in the $\ell^2$ norm as $N \to \infty$ when $f$ is differentiable (see \cref{prop:dft-distance-bound}). Note also that the DFT operator \eqref{eq:DFT-transform} and the $d$-fold QFT matrix are identical unitary matrices, and we have:
\begin{equation}
\ket{f_N}
:= \frac{1}{\sqrt{\llangle f^2 \rrangle_{\Gamma_N}}}\sum_{x\in \Gamma_N} f(x) \ket{x}
\xrightarrow{\quad F_N^{\otimes d} \quad}
|\tilde f_N\rangle:= \frac{1}{\sqrt{\langle f^2 \rangle_{\Gamma_N}}} \sum_{\omega\in \Lambda_N} \tilde f_\omega\,\ket{\omega}.
\end{equation}
Here and elsewhere, the notations $\langle - \rangle_{S}$ and $\llangle - \rrangle_S$ stand, respectively, for the average and the sum over a set $S$.

\begin{table}[t]
\footnotesize
\begin{center}
\begin{tabular}{l l r}
\toprule
$f(x)$
& $\displaystyle = \sum_{\omega\in\mathbb{Z}^d} \hat f_\omega \,
e^{i\langle \omega, x\rangle}$
& Fourier transform / series
\\[3.5ex]
$\hat f_\omega$
& $\displaystyle = \frac{1}{(2\pi)^d}\int_{\mathbb T^d}
f(x) e^{-i\langle \omega,x\rangle} \, dx
= \mathbb E_{\mathbb T^d} [f(x) e^{-i \langle \omega, x \rangle}]$
& Fourier coefficients
\\[4.2ex]
$\tilde f_\omega$
& $\displaystyle = \mathbb E_{\Gamma_N} [f(x) e^{-i \langle \omega, x\rangle}]$
& Discrete Fourier coefficients
\\[3.2ex]
$\lvert \tilde f_N \rangle$
& $\displaystyle = \frac{1}{\sqrt{(2N+1)^d}} \sum_{\omega\in \Lambda_N} \tilde f_\omega\,\ket{\omega}
=F_N^{\otimes d} \ket{f_N}$
& Discrete / quantum Fourier transform
\\[4ex]
$\tilde f_\omega$
& $\displaystyle = \sum_{m \in \mathbb Z^d} \hat f_{\omega + (2N+1)m}$
& Aliasing sums\\
\bottomrule
\end{tabular}
\end{center}
\caption{A summary of the main Fourier-analytic objects used throughout this work, for a $2\pi$-periodic function $f:\mathbb T^d \to \mathbb R$. From top to bottom: the Fourier series of $f$; its Fourier coefficients $\hat f_\omega$, written as the average of $f$ against the plane wave $e^{-i\langle\omega,x\rangle}$ over the torus; the discrete coefficients $\tilde f_\omega$, the same average over the finite grid $\Gamma_N$; the amplitude encoding $\ket{\tilde f_N}=F_N^{\otimes d}\ket{f_N}$ produced by the $d$-fold quantum Fourier transform; and the aliasing relation, which expresses each discrete coefficient as a sum of continuous coefficients at frequencies spaced by $2N+1$. See \cref{sec:fourier} for full definitions and derivations.}
\label{tab:Fourier-summary}
\end{table}

We can better understand the role of differentiability as follows. Let $\hat f_N:= (\hat f_\omega)_{\omega \in \Lambda_N}$ and $\tilde f_N= (2N+1)^{-d/2} F_N^{\otimes d} (f_N)$ denote the finitely supported vectors of the truncated Fourier and discrete Fourier coefficients of $f$ in the Hilbert space $\ell^2= \ell^2(\mathbb Z^d)$, respectively. The infinitely supported vector $\hat f$ also resides in $\ell^2$ when $f$ is itself in $L^2$ by Parseval's theorem. Note that in general the continuous and discrete Fourier coefficients are not identical. Instead, the former ``roll up'' to form the latter in the so-called \emph{aliasing sums},
\begin{equation}
\label{eq:aliasing-sum-intro}
\tilde f_\omega= \sum_{m \in \mathbb Z^d} \hat f_{\omega + (2N+1)m}.
\end{equation}
To ensure the approximation $\hat f_N \simeq \tilde f_N$, we require fast decay of the high-frequency Fourier coefficients, so that the $m=0$ term $\hat f_\omega$ dominates the right-hand sum in \eqref{eq:aliasing-sum-intro}. We refer the reader to \cref{sec:fourier} for a thorough account of this framework. A summary of the concepts introduced therein is provided in \cref{tab:Fourier-summary}.

Gevrey smoothness \cite{gevrey1918}, namely
\begin{equation}
\|D^\alpha f\|_\infty \leq C\, (\alpha!)^s / r^{|\alpha|}, \qquad \forall\, \alpha \in \mathbb Z_{\geq 0}^d,
\end{equation}
for some $s \geq 0$ is equivalent to exponentially fast decay of $|\hat f_\omega|$ as $\norm{\omega} \to \infty$ (see \cref{prop:all-decay-rates}). This also implies that the entire \emph{tail} of the Fourier series, $E_N:= \|\hat f- \hat f_N\|_2$, decays exponentially fast in $N$ (\cref{prop:tail-decay-rate}). \cref{tab:fourier-tailedness} shows the tailedness of the Fourier expansion of various classes of functions. The less smooth $f$ is, the larger the tail of its Fourier expansion. For example, for $p$-smooth functions the tail decays only polynomially, $E_N \in \mathcal O(N^{-p})$, and therefore a Fourier cutoff $\mathcal O((1/\epsilon)^{1/p})$ is required to guarantee a precision $\epsilon$ in approximating $f$.

For analytic functions (i.e., when $s=1$) $r$ is the radius to which $f$ extends holomorphically as a bounded function $|f(z)| < C$ on the strip $|\mathrm{Im}\, (z)| < r$. Note that, more precisely, $C$ is the maximum modulus at some height strictly less than $r$ (see \cref{rmk:sigma-to-r-subtlety}). This recovers, in the univariate case, the classical exponential-accuracy results for Fourier and Chebyshev differencing of analytic functions \cite{tadmor1986exponential,gottlieb1997gibbs}, which \cref{prop:all-decay-rates} extends to the full Gevrey hierarchy and to arbitrary dimension $d$ (see \cref{rmk:classical-analytic-precedent}). Beyond functions with complex singularities lie the \emph{entire} functions, whose decay rate is super-exponential. There, $r$ no longer measures a distance to a singularity, but instead the growth rate of the entire extension, while $C$ retains its role as an overall size constant. Each class in \cref{tab:fourier-tailedness} strictly contains the one below it. Notably, there are smooth functions that are not Gevrey for any $s$ (\cref{exm:smooth-not-gevrey}) and there are entire functions that are not Gevrey for any $s < 1$ (\cref{exm:entire-not-finite-order}).

\section{Results}
\label{sec:results}

Consider a general linear PDE with periodic boundary conditions in the form:
\begin{equation}
\label{eq:linear-PDE}
\sum_{\alpha\in \mathcal J}g_{\alpha}(x)D^{\alpha}u(x)= \eta(x).
\end{equation}
\begin{sloppypar}
\noindent Here the index set $\mathcal J \subset \mathbb{Z}^d_{\geq 0} \setminus \{0\}$ contains finitely many nonzero $d$-tuples of integers determining the order of differential operators $D^\alpha$. The function $\eta$ is called the \emph{source} term in what follows. We assume all $g_\alpha$, $\eta$, and the solution $u$ are $2\pi$-periodic. Moreover, we assume oracle access to the coefficient functions $g_\alpha$ through a family of block encodings $\{U_{g_{\alpha,N}}\}_{N \in \mathbb N}$ of the diagonal matrices $\diag\{g_{\alpha,N}\}$ (up to normalization factors), one for each truncation threshold $N$, and to the discretized source function $\eta_N$ through a family of state-preparation oracles $\{U_{\eta_N}\}_{N \in \mathbb N}$ satisfying $U_{\eta_N}\ket0=\ket{\eta_N}$. If we are only given access to bit-oracles of the form
\begin{equation}
\label{eq:pde-coeff-oracle}
O_{g_\alpha}:\ket x\ket 0^{\otimes b} \mapsto \ket x \ket{g_\alpha (x)},
\end{equation}
they can be used to construct block encodings $U_{g_\alpha}$ with a target accuracy $\delta > 0$ using $\mathcal O(\log^{2.5}(1/\delta)+d\log N)$ elementary gates, $\mathcal O(\log^{2.5}(1/\delta))$ ancilla qubits, and two queries to $O_{g_{\alpha}}$ as shown in \cite[Lemma 48]{gilyen2019quantum}.
\end{sloppypar}

In \cref{sec:choosing-N} we show how the truncation threshold $N$ for the norms of the Fourier modes must be chosen to achieve a user-defined target error $\epsilon>0$ in solving the PDE. Given $N$, we consider the grid discretizations $\Gamma_N$ and $\Lambda_N$ of the spatial and frequency domains, respectively, as introduced earlier in \eqref{eq:spatial-grid} and \eqref{eq:frequency-grid}. We seek to prepare an approximation of the state $\ket{u_N}= \sum_{x\in \Gamma_N} u(x) \ket x$ by discretizing the operator
\begin{equation}
\label{eq:linear-PDE-op}
\mathcal L = \sum_{\alpha\in \mathcal J}g_{\alpha}(x)D^{\alpha}
\end{equation}
as a matrix $\mathbb L_N$ described later in \eqref{eq:def-discrete-L}, and solving the linear system
\begin{equation}
\label{eq:linear-PDE-discretized}
\mathbb L_N \ket{u_N} = \ket{\eta_N}.
\end{equation}

For homogeneous linear PDEs, i.e., when $\eta = 0$, the equation can be solved on a quantum computer via quantum singular value thresholding \cite{leng2025operator}. However, in this paper we focus on the inhomogeneous case ($\eta \neq 0$) and tackle it using a state-of-the-art quantum linear system algorithm (QLSA) \cite{costa2022optimal, tong2021fast, dalzell2024shortcut, low2026quantum}. Each of these algorithms has slightly different assumptions and trade-offs, so for simplicity we choose the approach of \cite{dalzell2024shortcut} in our exposition, with query complexity $\mathcal O(\kappa \log(1/\epsilon))$ to both the block encoding of $\mathbb L_N$ and the state-preparation oracle for $\ket{\eta_N}$. Here $\kappa= \sigma_{\max} / \sigma_{\min}^+$ is the effective condition number of $\mathbb L_N$ restricted to the orthogonal complement of its kernel, where $\sigma_{\min}^+$ denotes its smallest nonzero singular value; \cite{dalzell2024shortcut} is defined precisely for such matrices with a nontrivial kernel, returning the minimum-Euclidean-norm solution, i.e., the Moore--Penrose pseudo-inverse solution $\mathbb L_N^+ \eta_N$, whenever $\eta_N$ has no component in $\ker(\mathbb L_N^T)$. More precisely, the QLSA's query complexity depends on $\alpha/\sigma_{\min}^+$ for whatever normalization factor $\alpha$ our specific block encoding of $\mathbb L_N$ achieves; since $\alpha \geq \sigma_{\max}$ always, with equality only for a tight block encoding, and our LCU-based construction of $\mathbb L_N$ (\cref{fig:L-circuit}) achieves $\alpha = \Theta(|\mathcal J| \, \sigma_{\max})$ (see the proof of \cref{thm:linear-pde-complexity}), this introduces an additional factor of $|\mathcal J|$ into the complexity beyond the idealized $\kappa$.

The relationship between $\sigma_{\min}^+$ and the singular values of $\mathcal L$ is complicated so we confine the statement of our results to be in terms of $\sigma_{\min}^+$ which is sufficient for applications such as the ones discussed in \cref{sec:material-application}. Even if the continuous operator $\mathcal L$ is gapped (see for example \cref{rmk:gap-lower-bound}), more technical conditions such as ellipticity are required so that $\sigma_{\min}^+$ approaches the smallest positive singular value of $\mathcal L$ as $N \to \infty$ (see \cref{sec:discussion} for further discussions). Note that \cite{dalzell2024shortcut} is not necessarily optimal in queries to the source state oracle (see \cite[Theorem 33]{tong2021fast} and \cite{low2026quantum}). This can be consequential in applications such as the one discussed in \cref{sec:material-application} where a ground state solver subroutine constitutes the state-preparation oracle for QLSA. The end-to-end PDE solver is summarized in \cref{alg:linear-pde}.

\begin{algorithm}[H]
\caption{Quantum algorithm for linear PDEs}
\label{alg:linear-pde}
\begin{algorithmic}[1]

\Require \,
Parameter $s>0$ and the promise that \eqref{eq:linear-PDE} has an $s$-Gevrey minimum-norm solution with parameters $C$ and $r$;\newline
\hspace*{7.3mm} Target precision $\epsilon > 0$; Families of block encodings $\{U_{g_{\alpha,N}}\}_{N\in\mathbb N}$; source state oracles $\{U_{\eta_N}\}_{N\in\mathbb N}$.

\State Choose $N$ according to \cref{cor:ultimate-N}.
\State Prepare the block encoding of $\mathbb L_N$ defined in \eqref{eq:def-discrete-L}.
\State Apply the QLSA of \cite{dalzell2024shortcut} to obtain a state $\ket{v}=\ket{\mathbb{L}^{+}_N \eta_N}$, the Moore--Penrose pseudo-inverse (minimum-norm) solution.

\Ensure $N \in \mathbb{N}$, and $\ket{v}\in\mathbb{C}^{(2N+1)^d}$ such that $\norm{\ket{v}-\ket{u_N}}\leq \epsilon$ where $u$ is the minimum-norm solution to \eqref{eq:linear-PDE}.
\end{algorithmic}
\end{algorithm}

Throughout, the target precision $\epsilon$ in \cref{alg:linear-pde} is meant to bound the total error accumulated from every source of approximation: the Fourier truncation error from choosing a finite $N$, the accuracy of the state-preparation oracle $U_{\eta_N}$, the accuracy of the block encodings $U_{g_{\alpha,N}}$, and the QLSA's own inversion error.

\subsection{Choosing the truncation threshold \texorpdfstring{$N$}{N}}
\label{sec:choosing-N}

Our first main result is to show how large $N$ must be in order for the state $\ket{u_N}$ to be $\epsilon$-close to the true solution. First we bound the distance between the true differential operator $\mathcal L$ and the discretization of it $\mathbb L_N$. Then we infer a bound on the approximate solution.

For any $2\pi$-periodic $d$-variate function $f$, and any derivative index $\alpha \in \mathbb Z^d_{\geq 0}$, the Fourier transform of $D^\alpha f$ has Fourier coefficients $(i\omega)^\alpha \hat f_\omega$. This means that differentiation is a diagonal operator in the Fourier domain:
\begin{equation}
\label{eq:diff-from-fourier-results}
D^\alpha f = \mathcal F^{-1} \diag\left\{(i\omega)^\alpha: \omega \in \mathbb Z^d\right\} \mathcal F (f)
\end{equation}
where $\mathcal F$ denotes the Fourier transform as an operator. It is therefore natural to ask whether the truncation of $\mathcal F$ and the diagonal operator up to a cutoff mode $N$ approximate $D^\alpha$ well. Therefore, we define the approximate derivative operator
\begin{equation}
\label{eq:approx-derivative-op-results}
\tilde D^\alpha_N = F_N^{\dagger \otimes d} \diag\left\{(i\omega)^\alpha: \omega \in \Lambda_N\right\} F_N^{\otimes d}
\end{equation}
and seek to bound the distance between the approximate derivative applied to the discretization of $f$ and the discretization of the true derivative: $\norm{\tilde D^\alpha_N f_N - (D^\alpha f)_N}_2$.

\begin{thm}
\label{thm:fourier-approximation}
Let $f$ be a $2\pi$-periodic real-valued $d$-variate function. If $f$ is merely continuous, then $\hat f_\omega \to 0$ as $\norm{\omega}_\infty \to \infty$. If $f$ is moreover from one of the regularity classes of \cref{tab:fourier-tailedness} then $\hat f_\omega$ decays polynomially or exponentially fast as shown in the second column of \cref{tab:fourier-tailedness}. The truncation error $\norm{\hat f - \hat f_N}_2$ decays according to the error function $E_N$ introduced in the third column of the same table.
\end{thm}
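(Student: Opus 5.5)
The argument has three layers: pointwise decay of $\hat f_\omega$ for each regularity class, then a summation turning pointwise decay into the tail $E_N$, then inversion of $E_N\le\epsilon$ for the last column of \cref{tab:fourier-tailedness}.

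\emph{Continuous case.} For continuous $f$ on the compact torus, $f$ is uniformly continuous. Given $\omega\neq 0$, choose a coordinate $j$ with $|\omega_j|=\norm{\omega}_\infty$ and shift by $h=\pi e_j/\omega_j$. Since $e^{-i\langle\omega,x+h\rangle}=-e^{-i\langle\omega,x\rangle}$, this gives
\[
\hat f_\omega=\tfrac12\,\mathbb E\big[(f(x)-f(x+h))e^{-i\langle\omega,x\rangle}\big],
\]
so $|\hat f_\omega|\le\tfrac12\sup_x|f(x)-f(x+h)|\to 0$ as $\norm{\omega}_\infty\to\infty$ (Riemann--Lebesgue).

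\emph{$\mathcal C^p$ and H\"older classes.} Integrate by parts $p$ times in the coordinate $j$ attaining $\norm{\omega}_\infty$. This gives $\hat f_\omega=(i\omega_j)^{-p}\widehat{(\partial_j^p f)}_\omega$, hence $|\hat f_\omega|\le\norm{\partial_j^pf}_\infty\norm{\omega}_\infty^{-p}$. For the $(p,q)$-H\"older class, apply the shift trick above to $\partial_j^p f$; this contributes an extra factor $\tfrac12[\partial^p_jf]_q(\pi/\norm{\omega}_\infty)^q$.

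\emph{Gevrey classes.} Starting from $|\hat f_\omega|\le \norm{D^\alpha f}_\infty/|\omega^\alpha|$ with $\alpha=ke_j$ and the Gevrey bound, we get $|\hat f_\omega|\le C(k!)^s/(r\norm{\omega}_\infty)^k$ for every $k$. Optimizing over $k$ with Stirling, roughly $k\approx (r\norm{\omega}_\infty)^{1/s}$, gives $Ce^{-s(r\norm{\omega}_\infty)^{1/s}}$ up to polynomial factors. We then relabel the rate constant as the $r$ of the table; this is presumably the content of \cref{prop:all-decay-rates}, which we may cite directly. For the analytic case $s=1$ one can alternatively shift the contour to $\mathrm{Im}\,z_j=\mp\sigma$ with $\sigma<r$, giving $|\hat f_\omega|\le Ce^{-\sigma\norm{\omega}_\infty}$. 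The entire and finite-order cases are the same argument for every $r$, respectively for $s<1$. For $s=0$, finite bandwidth means $\hat f_\omega=0$ beyond $1/r$, so the claim is trivial.

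\emph{From pointwise decay to the tail.} Write
\[
E_N^2=\sum_{\norm{\omega}_\infty>N}|\hat f_\omega|^2 .
\]
The number of lattice points on the shell $\norm{\omega}_\infty=n$ is $(2n+1)^d-(2n-1)^d\le 2d(2n+1)^{d-1}$. This reduces $E_N^2$ to a one-dimensional sum $\sum_{n>N}n^{d-1}\phi(n)^2$, which we bound by $\int_N^\infty$.

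For $\phi(n)=Cn^{-p}$ the integral gives $E_N^2\lesssim C^2\,\tfrac{d}{2p-d}N^{d-2p}$, reproducing the $\sqrt{d/(d-2p)}\,N^{-p+d/2}$ entry. The sign convention inside the square root in the table needs $2p>d$; I would note this. For $\phi(n)=Ce^{-rn^{1/s}}$ substitute $t=n^{1/s}$, which yields $s\int_{N^{1/s}}^\infty t^{sd-1}e^{-2rt}\,dt$. This is an incomplete Gamma function, bounded by $\tfrac{s}{2r}N^{d-1/s}e^{-2rN^{1/s}}$ times a constant once $N^{1/s}\gtrsim (sd)/r$. Taking square roots gives the table entry $C\sqrt{ds N^{d-1/s}/r}\,e^{-rN^{1/s}}$, which is presumably \cref{prop:tail-decay-rate}.

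\emph{Cutoff column and main obstacle.} The last column follows by solving $E_N\le\epsilon$, dropping the polynomial prefactor into the logarithm. I expect the main obstacle to be the incomplete Gamma tail bound with explicit $d,s$ dependence and no hidden $d^d$ constants. That requires the regime condition $N^{1/s}\ge (sd-1)/r$, under which the integrand's log-derivative is at most $-r$ and the integral is dominated by twice its boundary term. The $\mathcal C^\infty$ and entire rows follow qualitatively by applying the above for all $p$, respectively all $r$.
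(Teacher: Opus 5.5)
Your proposal is correct and is essentially the paper's proof: \cref{prop:all-decay-rates} gets pointwise decay from $\widehat{(D^\alpha f)}_\omega=(i\omega)^\alpha\hat f_\omega$ with a Stirling-optimized derivative order or a contour shift, and \cref{prop:tail-decay-rate} sums over $\ell^\infty$-shells with $t=x^{1/s}$. Your departures are all sound: you use the half-period shift plus uniform continuity for Riemann--Lebesgue instead of density of trigonometric polynomials, a single pure derivative instead of summing over $|\alpha|\le p$, and you make explicit the regime $N^{1/s}\gtrsim sd/r$ that the paper's integration-by-parts estimate of $I_\beta$ silently needs; your corrected factor $\tfrac{d}{2p-d}$ with $2p>d$ is right, since the paper's $\tfrac{d}{d-2p-2q}$ is negative whenever the tail integral converges.

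Two loose ends remain. At $s=0$ you should derive finite bandwidth rather than assume it, and your own bound does this at once, because $|\hat f_\omega|\le C(r\norm{\omega}_\infty)^{-k}\to0$ as $k\to\infty$ whenever $r\norm{\omega}_\infty>1$. Also, replacing your shell count $2d(2n+1)^{d-1}$ by $n^{d-1}$ drops an exponential-in-$d$ factor (asymptotically $2^d$ in $E_N^2$). The paper's $\mathcal O(dx^{d-1})$ drops the same factor, so the tabulated $E_N$ holds only up to it, despite your stated aim of avoiding hidden $d$-dependent constants.
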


The proofs are provided in \cref{prop:all-decay-rates} and \cref{prop:tail-decay-rate}. It follows immediately that for all of the regularity classes in \cref{tab:fourier-tailedness} an $\epsilon$-accurate truncated Fourier expansion $\hat f_N$ exists when $N$ is chosen according to the fourth column of this table. However, the same observation does not hold when comparing the truncated QFT (or discrete Fourier transform) and $\hat f$.

\begin{thm}
\label{thm:high-precision-dft}
Let $f$ be a $2\pi$-periodic real-valued $d$-variate function from the regularity classes of \cref{tab:fourier-tailedness}, and $E_N$ be its corresponding truncation error function according to the third column of the same table. Then we have
\begin{align}
\norm{\tilde{f}_{N} - \hat{f}}_2 \lesssim \sqrt{2^d} E_N.
\end{align}
and for any multi-index $\alpha \in \mathbb Z^d_{\geq 0}$ we have
\begin{align}
\norm{\tilde D^\alpha_N f_N - (D^\alpha f)_N}_2 \lesssim 2^d N^{|\alpha|+d/2}E_N.
\end{align}
\end{thm}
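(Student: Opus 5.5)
We prove both estimates by expressing every discrete object through the aliasing sums \eqref{eq:aliasing-sum-intro}, so that each error is a combination of tail coefficients $\hat f_\omega$ with $\norm{\omega}_\infty > N$. These are then controlled by the decay laws of \cref{thm:fourier-approximation}.

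\textbf{First estimate.} Split
\[
\norm{\tilde f_N - \hat f}_2^2 = \sum_{\omega\in\Lambda_N} |\tilde f_\omega - \hat f_\omega|^2 + \sum_{\omega\notin\Lambda_N}|\hat f_\omega|^2 .
\]
The second sum is exactly $E_N^2$. For the first sum, aliasing gives $\tilde f_\omega - \hat f_\omega = \sum_{m\neq 0}\hat f_{\omega+(2N+1)m}$. Each shifted index $\omega+(2N+1)m$ with $m\neq 0$ lies outside $\Lambda_N$, and the map $(\omega,m)\mapsto \omega+(2N+1)m$ is a bijection from $\Lambda_N\times(\mathbb Z^d\setminus\{0\})$ onto $\mathbb Z^d\setminus\Lambda_N$. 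We would not apply Cauchy--Schwarz crudely over all $m$. Instead, we use the monotone decay profile $|\hat f_\omega|\le \phi(\norm{\omega}_\infty)$ from the table. For a fixed $\omega\in\Lambda_N$, the aliased sum is dominated by the nearest shifts $m\in\{-1,0,1\}^d\setminus\{0\}$, and there are at most $2^d$ such shifts that are relevant in the dominant direction pattern. The remaining shells with $\norm{m}_\infty\ge 2$ contribute a geometrically (or polynomially) smaller amount, by comparison of the decay at radius $\ge 3N$ versus $N$.

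Concretely, we aim for $\sum_{\omega\in\Lambda_N}|\sum_{m\neq0}\hat f_{\omega+(2N+1)m}|^2 \lesssim 2^d\sum_{\omega\notin\Lambda_N}|\hat f_\omega|^2$. The route is to write the inner sum as a sum over orthants: for each sign pattern, Cauchy--Schwarz is applied with weights summable by the decay law. This yields $\norm{\tilde f_N-\hat f}_2 \lesssim \sqrt{2^d}\,E_N$, where $\lesssim$ hides the class-dependent constants from the table.

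\textbf{Second estimate.} By \eqref{eq:approx-derivative-op-results}, $\tilde D^\alpha_N f_N = (2N+1)^{d/2}F_N^{\dagger\otimes d}\big((i\omega)^\alpha\tilde f_\omega\big)_{\omega\in\Lambda_N}$. Applying the same identity to $g=D^\alpha f$, whose Fourier coefficients are $\hat g_\omega=(i\omega)^\alpha\hat f_\omega$, gives $(D^\alpha f)_N=(2N+1)^{d/2}F_N^{\dagger\otimes d}\tilde g_N$. Since $F_N$ is unitary,
\[
\norm{\tilde D^\alpha_N f_N-(D^\alpha f)_N}_2=(2N+1)^{d/2}\Big(\sum_{\omega\in\Lambda_N}\Big|\sum_{m\neq0}\big(\omega^\alpha-(\omega+(2N+1)m)^\alpha\big)\hat f_{\omega+(2N+1)m}\Big|^2\Big)^{1/2}.
\]
The $m=0$ terms cancel exactly. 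The polynomial weight on the aliased term is at most $|(\omega+(2N+1)m)^\alpha|$, which is bounded by $\norm{\omega+(2N+1)m}_\infty^{|\alpha|}$; the term $|\omega^\alpha|\le N^{|\alpha|}$ is bounded similarly. Repeating the argument of the first estimate yields a bound of the form $N^{|\alpha|}\cdot 2^d\, E_N$, and the prefactor $(2N+1)^{d/2}$ supplies the factor $N^{d/2}$.

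\textbf{Main obstacle.} The polynomial weight $\norm{\omega'}_\infty^{|\alpha|}$ on the tail indices is unbounded, so it cannot simply be replaced by $N^{|\alpha|}$. To keep the claimed form $N^{|\alpha|}E_N$, we would absorb the excess growth of the weight into the decay law: for Gevrey classes, $x^{|\alpha|}e^{-rx^{1/s}}$ is at most a constant times $N^{|\alpha|}e^{-r'x^{1/s}}$ on $x\ge N$, at the cost of a slightly smaller $r$. For the polynomial classes this requires $p>|\alpha|+d/2$. These constant losses are what the notation $\lesssim$ is meant to hide. Making the orthant counting give exactly $2^d$, rather than $3^d$, is the other delicate point; if it fails, we would settle for $3^d$.
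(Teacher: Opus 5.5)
\textbf{There is a genuine gap.} Your skeleton is the paper's. The split of $\norm{\tilde f_N-\hat f}_2^2$ into the tail plus $\sum_{\omega\in\Lambda_N}\bigl|\sum_{m\neq0}\hat f_{\omega+(2N+1)m}\bigr|^2$ matches, as do the bijection $\Lambda_N\times(\mathbb Z^d\setminus\{0\})\to\mathbb Z^d\setminus\Lambda_N$, unitarity of $F_N^{\otimes d}$ for the $(2N+1)^{d/2}$ prefactor, and the $m=0$ cancellation. Your bound $|\omega^\alpha-(\omega')^\alpha|\le 2\norm{\omega'}_\infty^{|\alpha|}$ is, if anything, cleaner than the paper's step.

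The gap is that the step carrying all the content, bounding the aliased part, is never carried out, and the way you target it cannot work. You aim for $\sum_{\omega\in\Lambda_N}\bigl|\sum_{m\neq0}\hat f_{\omega+(2N+1)m}\bigr|^2\lesssim 2^d\sum_{\omega\notin\Lambda_N}|\hat f_\omega|^2$, which compares with the \emph{actual} tail. No such bound holds uniformly over a regularity class, because the aliases of one frequency add coherently. For example, give $\hat f$ a common tiny real value $c$ at the points $(2N+1)m$, $0<\norm{m}_\infty\le R$, all of which alias to $\omega=0$. For small enough $c$ this respects the decay profile, yet the ratio is $(2R+1)^d-1$, unbounded in $R$. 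The comparison must be with the tail of the decay profile, which is what the tabulated $E_N$ is.

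The missing idea is the paper's weighted Cauchy--Schwarz: for $k\ge d$,
\[
\Bigl|\sum_{m\neq0}\hat f_{\omega+(2N+1)m}\Bigr|^2\le\Bigl(\sum_{m\neq0}\norm{\omega+(2N+1)m}^{-2k}\Bigr)\Bigl(\sum_{m\neq0}\norm{\omega+(2N+1)m}^{2k}\bigl|\hat f_{\omega+(2N+1)m}\bigr|^2\Bigr).
\]
The first factor is at most $2^{d+1}N^{-2k}$ uniformly in $\omega\in\Lambda_N$: write $\omega+(2N+1)m=\frac{2N+1}{2}(x+2m)$ with $x\in[-1,1]^d$ and apply \cref{lem:useful}. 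Summing the second factor over $\omega$ and using your bijection gives the weighted profile tail $\sum_{\omega\notin\Lambda_N}\norm{\omega}^{2k}|\hat f_\omega|^2\lesssim N^{2k}E_N^2$, by the shell integral of \cref{prop:tail-decay-rate}. The powers of $N$ cancel and $2^{d+1}$ falls out of the lemma, with no orthant bookkeeping and no risk of $3^d$.

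Your orthant picture can also be made rigorous. For each $\omega$, the at most $2^d-1$ shifts with $m_j\in\{0,-\mathrm{sgn}\,\omega_j\}$ are bounded, by plain Cauchy--Schwarz and injectivity, by $(2^d-1)$ times the actual tail. Every other shift lands at $\norm{\cdot}_\infty\ge 2N+1$ and must be bounded through the profile, which gives $o(E_N^2)$ in the Gevrey case. The proposal does not do this, however.

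Second, your fix for the unbounded weight in the derivative estimate would not give the stated bound. Replacing $r$ by $r'<r$ loses a factor $e^{(r-r')N^{1/s}}$, which is exponential in $N^{1/s}$, not a constant that $\lesssim$ can absorb. No rate loss is needed. Since $\int_N^\infty x^{\beta}e^{-2rx^{1/s}}\,dx\sim\frac{s}{2r}N^{\beta+1-1/s}e^{-2rN^{1/s}}$ as $N\to\infty$, the tail integral is dominated by its lower endpoint. So the weight $\norm{\omega}^{2k}|\omega^\alpha|^2$ costs exactly a factor $N^{2k+2|\alpha|}$ at the same $r$. The $N^{2k}$ cancels the lattice-sum factor $2^{d+1}N^{-2k}$. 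With the prefactor $(2N+1)^{d/2}$ you then obtain $(2N+1)^{d/2}\,2^{d/2}N^{|\alpha|}E_N\lesssim 2^dN^{|\alpha|+d/2}E_N$, as in \cref{prop:dft-derivative-bound}.
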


These results are proven in \cref{prop:dft-distance-bound} and \cref{prop:dft-derivative-bound}. It follows that for $\tilde f_N$ to approximate $\hat f$ well we require the function to be $s$-Gevrey (for $s>1$) or analytic, pertaining to $s=1$ in the Gevrey class (see \cref{rmk:smooth-dft-no-go}).

\begin{thm}
\label{thm:high-precision-dft-gevrey}
Let $f$ be an $s$-Gevrey, $2\pi$-periodic, and $d$-variate real-valued function. We have
\begin{align}
\norm{\tilde{f}_{N} - \hat{f}}_2
\leq C e^{-\frac{r}2 N^{1/s}},
\end{align}
provided that $N \in \tilde \Omega((ds / r)^s)$. Moreover, for any multi-index $\alpha \in \mathbb Z^d_{\geq 0}$ we have
\begin{align}
\norm{\tilde D^\alpha_N f_N - (D^\alpha f)_N}_2
\leq C e^{-\frac{r}2 N^{1/s}},
\end{align}
provided that $N \in \tilde \Omega \left( \left(\max(d, (d+ |\alpha|)s) / r\right)^s\right)$.
\end{thm}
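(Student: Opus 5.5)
The plan is to specialize \cref{thm:high-precision-dft} to the Gevrey row of \cref{tab:fourier-tailedness}. The resulting prefactors, which are polynomial in $N$ and exponential in $d$, are then absorbed into half of the exponential decay; the condition on $N$ is exactly what makes this absorption valid.

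For the first claim, \cref{thm:high-precision-dft} (via \cref{prop:dft-distance-bound}) gives
\[
\norm{\tilde f_N-\hat f}_2\lesssim 2^{d/2}E_N,
\]
and the Gevrey tail bound of \cref{prop:tail-decay-rate} gives
\[
E_N\le C\sqrt{\tfrac{ds}{r}N^{d-1/s}}\,e^{-rN^{1/s}}.
\]
Writing $e^{-rN^{1/s}}=e^{-\frac r2N^{1/s}}\cdot e^{-\frac r2N^{1/s}}$, it suffices to show
\begin{equation*}
\tfrac d2\log 2+\tfrac12\log\tfrac{ds}{r}+\tfrac12\big(d-\tfrac1s\big)\log N+c\le \tfrac r2 N^{1/s},
\end{equation*}
where $c$ absorbs the constant hidden in $\lesssim$. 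Put $M=N^{1/s}$, so that $\log N=s\log M$. The dominant left-hand term is then $\tfrac{ds}{2}\log M$, and the inequality has the form $rM\gtrsim ds\log M+d+\log(ds/r)$. This holds once $M\ge K\,\tfrac{ds}{r}\log\tfrac{ds}{r}$ for a suitable absolute constant $K$, which is the standard inversion of $M/\log M\gtrsim A$. Hence it holds for $N\ge \big(K\tfrac{ds}{r}\log\tfrac{ds}{r}\big)^s$, i.e.\ $N\in\tilde\Omega((ds/r)^s)$. If $s<1/(2d)$, the exponent $d-1/s$ is negative, the $\log N$ term only helps, and the remaining condition is still implied.

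For the derivative claim I would follow the same route starting from the second bound of \cref{thm:high-precision-dft}, namely $2^dN^{|\alpha|+d/2}E_N$. Combined with the tail bound, the prefactor becomes
\[
2^d\sqrt{\tfrac{ds}{r}}\,N^{|\alpha|+d-\frac1{2s}}.
\]
The required inequality is now
\begin{equation*}
d\log 2+\tfrac12\log\tfrac{ds}{r}+s\big(|\alpha|+d\big)\log M\lesssim \tfrac r2 M .
\end{equation*}
The $\log M$ term forces $M\gtrsim \tfrac{(d+|\alpha|)s}{r}\log(\cdot)$. The $d\log 2$ term, which is not multiplied by $s$, separately forces $M\gtrsim d/r$. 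This explains the $\max(d,(d+|\alpha|)s)$ in the statement. Again inverting $M/\log M$ gives $N\in\tilde\Omega\big((\max(d,(d+|\alpha|)s)/r)^s\big)$.

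The main obstacle is bookkeeping rather than ideas. First, the hidden constants in the $\lesssim$ of \cref{thm:high-precision-dft} must be absolute; otherwise the final bound would carry an extra constant in front of $C$. Second, the $\log\log$ factors produced when inverting $M\ge A\log M$ have to be absorbed by the tilde in $\tilde\Omega$, and this must hold uniformly in $s$ (including $s<1$ and the regime $s\to0$). In the first claim I would also check that the $2^{d/2}$ term forces only $M\gtrsim d/r$, which is dominated by $ds/r$ only when $s\gtrsim1$. If this fails for small $s$, I would either read the first condition with $\max(d,ds)$ implicitly, or note that for $s\le1$ the $\tilde\Omega$ statement tolerates this because $(d/r)^s\le (d/r)$ up to the polylog slack.
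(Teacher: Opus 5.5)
Your proposal is correct and matches the paper's proof (\cref{cor:qft-func-err,cor:qft-diff-err}). The paper likewise inserts the Gevrey tail into the bounds of \cref{prop:dft-distance-bound,prop:dft-derivative-bound}, substitutes $y=N^{1/s}$, and absorbs the $2^{d}$- and $N$-dependent prefactors into half the exponent by inverting $ry\gtrsim A\log y$ (via $\ln y\le y/a+\ln a$); as in your sketch, the $d\ln 2$ versus $s(d+|\alpha|)\ln y$ terms are what produce $\max(d,(d+|\alpha|)s)$.

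On your small-$s$ worry for the first claim: the paper's explicit threshold \eqref{eq:N-bound-for-good-gevrey-fourier} keeps the $d\ln 2$ term. That term is harmless in $\tilde\Omega((ds/r)^s)$ because $(d/r)^s=s^{-s}(ds/r)^s\le e^{1/e}(ds/r)^s$ for $s\le1$, not because $(d/r)^s\le d/r$. Also, the exponent $d-1/s$ is negative for $s<1/d$, not $s<1/(2d)$. In any case, no sign case split is needed, since dropping the $-\log M$ term is valid whenever $N\ge1$.
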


These results are proven in \cref{cor:qft-func-err,cor:qft-diff-err}. It now follows that the QFT approximation of the differential operator $\mathcal L$ in \eqref{eq:linear-PDE}, namely
\begin{equation}
\label{eq:def-discrete-L}
\mathbb L_N= \sum_\alpha \diag\{g_{\alpha,N}\} \tilde D_N^\alpha,
\end{equation}
is accurate for $s$-Gevrey functions when $N$ is sufficiently large as stated in the next corollary.  In the rest of this exposition we use the notation
\begin{equation}
\ceil{g}= \max_x \left(\sum_\alpha |g_\alpha(x)|\right),
\quad \text{ and } \quad
\ceil{\mathcal J} = \max_{\alpha \in \mathcal J} |\alpha|,
\end{equation}
respectively for bounding the coefficients of the linear PDE in \eqref{eq:linear-PDE}, and the order of it.

\begin{cor}
\label{cor:approx-linear-pde-op}
Given the linear differential operator $\mathcal L = \sum_{\alpha\in \mathcal J}g_{\alpha}D^{\alpha}$ we have
\begin{equation}
\norm{\mathbb L_N u_N - (\mathcal L u)_N}
\leq C \ceil{g} e^{-\frac{r}{2}N^{1/s}},
\end{equation}
provided that $u$ is $s$-Gevrey and $N \in \tilde \Omega\left( \left(\max(d, (d+ \ceil{\mathcal J})s) / r\right)^s\right)$.
\end{cor}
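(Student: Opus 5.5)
We use linearity of $\mathbb L_N-\mathcal L$ and the triangle inequality, then apply \cref{thm:high-precision-dft-gevrey} to each multi-index $\alpha\in\mathcal J$ separately. Since $(\mathcal L u)_N=\sum_{\alpha\in\mathcal J}\diag\{g_{\alpha,N}\}(D^\alpha u)_N$, because pointwise multiplication by $g_\alpha$ on the grid $\Gamma_N$ is the diagonal matrix $\diag\{g_{\alpha,N}\}$, we can write
\begin{equation*}
\mathbb L_N u_N-(\mathcal L u)_N=\sum_{\alpha\in\mathcal J}\diag\{g_{\alpha,N}\}\bigl(\tilde D^\alpha_N u_N-(D^\alpha u)_N\bigr).
\end{equation*}
Set $e_\alpha:=\tilde D^\alpha_N u_N-(D^\alpha u)_N\in\mathbb C^{\Gamma_N}$. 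Then the left-hand side, evaluated at a grid point $x$, equals $\sum_\alpha g_\alpha(x)e_\alpha(x)$.

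The main point is to obtain the single factor $\ceil{g}$ rather than $|\mathcal J|\max_\alpha\|g_\alpha\|_\infty$. The crude route would bound each $\|\diag\{g_{\alpha,N}\}\|\le\|g_\alpha\|_\infty$ and then sum, which gives $\sum_\alpha\|g_\alpha\|_\infty\ge\ceil{g}$. To do better, we first obtain a uniform bound $\|e_\alpha\|_2\le E:=Ce^{-\frac r2N^{1/s}}$ for all $\alpha\in\mathcal J$ at once. This follows from \cref{thm:high-precision-dft-gevrey}, since the hypothesis $N\in\tilde\Omega((\max(d,(d+\ceil{\mathcal J})s)/r)^s)$ dominates each per-$\alpha$ threshold $(\max(d,(d+|\alpha|)s)/r)^s$, because $|\alpha|\le\ceil{\mathcal J}$.

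Next, by Cauchy--Schwarz pointwise with weights $|g_\alpha(x)|$,
\begin{equation*}
\Bigl|\sum_\alpha g_\alpha(x)e_\alpha(x)\Bigr|^2\le\Bigl(\sum_\alpha|g_\alpha(x)|\Bigr)\sum_\alpha|g_\alpha(x)|\,|e_\alpha(x)|^2\le\ceil{g}\sum_\alpha|g_\alpha(x)|\,|e_\alpha(x)|^2 .
\end{equation*}
Summing over $x\in\Gamma_N$ and bounding $|g_\alpha(x)|$ by $\|g_\alpha\|_\infty$ inside gives $\ceil{g}\sum_\alpha\|g_\alpha\|_\infty\|e_\alpha\|_2^2$. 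This still leaves $\sum_\alpha\|g_\alpha\|_\infty$ rather than $\ceil g$.

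The cleanest fix I would attempt absorbs the discrepancy into the exponential. We strengthen the per-$\alpha$ estimate to $\|e_\alpha\|_2\le Ce^{-\frac r2N^{1/s}}/|\mathcal J|$, or sharpen it by a constant factor. This is available because the proofs of \cref{cor:qft-diff-err} produce $e^{-rN^{1/s}}$ times polynomial prefactors, and only half the exponent is spent in the stated $e^{-\frac r2N^{1/s}}$. The $\tilde\Omega$ threshold therefore has slack to swallow any fixed factor, including $|\mathcal J|$ or $\sum_\alpha\|g_\alpha\|_\infty/\ceil g\le|\mathcal J|$, up to logarithmic terms hidden in $\tilde\Omega$. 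Alternatively, reading $\ceil g$ loosely as the natural coefficient bound, the triangle inequality $\sum_\alpha\|g_\alpha\|_\infty\|e_\alpha\|_2\le(\sum_\alpha\|g_\alpha\|_\infty)E$ already suffices. I expect this constant bookkeeping, namely justifying that $|\mathcal J|$-type factors are absorbed by the $\tilde\Omega$ condition, to be the only delicate step. Everything else is the direct linear combination of \cref{thm:high-precision-dft-gevrey} over $\alpha\in\mathcal J$.
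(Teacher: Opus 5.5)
Your argument is correct. It shares the paper's skeleton: write $\mathbb L_N u_N-(\mathcal L u)_N=\sum_\alpha\diag\{g_{\alpha,N}\}e_\alpha$, then apply \cref{thm:high-precision-dft-gevrey} to each $\alpha$, with the $\ceil{\mathcal J}$-threshold dominating each $|\alpha|$-threshold. Where you differ is the step that produces the constant $\ceil{g}$.

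The paper does that step in one line. It bounds the norm by $\max_x\sum_\alpha|g_\alpha(x)|$ times $\norm{\tilde D^\alpha_N u_N-(D^\alpha u)_N}$, with $\alpha$ left free. Read as $\ceil{g}\max_\alpha\norm{e_\alpha}$, this is false as a general inequality. Take two grid points with $g_{\alpha_1}=e_{\alpha_1}=(1,0)$ and $g_{\alpha_2}=e_{\alpha_2}=(0,1)$: the left side is $\sqrt2$ and the right side is $1$. The paper uses no further structure of the $e_\alpha$, so the loss you spotted is real. Your move of spending the unused half of the exponent is what actually justifies $\ceil{g}$; the paper's route is shorter but silently drops a factor.

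You can also shrink the factor that must be absorbed. Use the pointwise bound $|\sum_\alpha g_\alpha(x)e_\alpha(x)|\le\ceil{g}\max_\alpha|e_\alpha(x)|$ together with $\max_\alpha|e_\alpha(x)|^2\le\sum_\alpha|e_\alpha(x)|^2$. Equivalently, run your weighted Cauchy--Schwarz with $|g_\alpha(x)|\le\ceil{g}$ instead of $\norm{g_\alpha}_\infty$. Either way you get $\ceil{g}\sqrt{|\mathcal J|}\max_\alpha\norm{e_\alpha}$, so only $\sqrt{|\mathcal J|}$ needs absorbing rather than $|\mathcal J|$.

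Two points to tighten.

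First, $|\mathcal J|$ is not a fixed constant, so "slack to swallow any fixed factor" is not enough on its own. Shrinking each $\norm{e_\alpha}$ by a factor $K$ adds $2\log K$ to the left side of \eqref{eq:log-ineq}. Since $\mathcal J$ consists of nonzero multi-indices of order at most $\ceil{\mathcal J}$, you have $|\mathcal J|<\binom{d+\ceil{\mathcal J}}{d}$, hence $\log|\mathcal J|\le 2d\log(e+\ceil{\mathcal J})$. The extra requirement $N^{1/s}\gtrsim d\log(e+\ceil{\mathcal J})/r$ then fits inside the stated $\tilde\Omega$ threshold, provided the tilde also hides logarithms of $\ceil{\mathcal J}$. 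The explicit bound \eqref{eq:N-bound-for-good-gevrey-fourier-diff} alone need not cover it when $\ceil{\mathcal J}$ is huge relative to $\max(d,(d+\ceil{\mathcal J})s)/r$. As you note, this reopens the proof of \cref{cor:qft-diff-err} rather than citing \cref{thm:high-precision-dft-gevrey} as a black box.

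Second, drop the alternative of reading $\ceil{g}$ loosely. It proves a weaker statement, with $\sum_\alpha\norm{g_\alpha}_\infty$ in place of $\ceil{g}$.
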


\begin{proof}
Note that $(g_\alpha D^\alpha u)_N= \diag\{g_{\alpha,N}\} (D^\alpha u)_N$. Therefore,
\begin{align}
\norm{
\sum_{\alpha}\diag\{g_{\alpha,N}\} \tilde D_N^\alpha u_N
- \left(\sum_{\alpha}g_{\alpha} D^{\alpha}u \right)_N
}
&\leq
\max_x \left\{\sum_\alpha
|g_\alpha(x)| \right\}\norm{\tilde D_N^\alpha u_N
- (D^{\alpha}u)_N} \\
&\leq
\ceil{g}\, C e^{-\frac{r}2 N^{1/s}},
\end{align}
proving the claim.
\end{proof}

\begin{thm}
\label{thm:linear-pde-precision}
Suppose that the minimum-norm solution $u$ to the inhomogeneous linear PDE in \eqref{eq:linear-PDE} is $s$-Gevrey with convergence radius $r>0$, and that $u_N \perp \ker(\mathbb L_N)$. Let $v_N=\mathbb L_N^+\eta_N$ be the minimum-norm solution to the discretized equation $\mathbb L_N v_N= \eta_N$. Then
\begin{equation}
\norm{u_N - v_N}_2 \leq \frac{C \ceil{g}}{\sigma_{\min}^+(\mathbb L_N)} e^{-\frac{r}2 N^{1/s}}
\end{equation}
provided that $N \in \tilde \Omega\left( \left(\max(d, (d+ \ceil{\mathcal J})s) / r\right)^s\right)$ where $\ceil{\mathcal J} = \max_{\alpha \in \mathcal J} |\alpha|$.
\end{thm}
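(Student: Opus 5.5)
The plan is to combine the operator-consistency bound of \cref{cor:approx-linear-pde-op} with a stability argument for the Moore--Penrose pseudo-inverse. Write the residual
\begin{equation}
\rho_N := \mathbb L_N u_N - \eta_N = \mathbb L_N u_N - (\mathcal L u)_N,
\end{equation}
where the second equality uses that $u$ solves \eqref{eq:linear-PDE} pointwise, so that $(\mathcal L u)_N = \eta_N$ on the grid $\Gamma_N$. Under the stated condition on $N$, \cref{cor:approx-linear-pde-op} gives
\begin{equation}
\norm{\rho_N}_2 \leq C\ceil{g}\, e^{-\frac r2 N^{1/s}}.
\end{equation}

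Next I relate $u_N - v_N$ to $\rho_N$ through $\mathbb L_N^+$. Let $P := \mathbb L_N^+\mathbb L_N$ be the orthogonal projector onto $\ker(\mathbb L_N)^\perp$. The hypothesis $u_N \perp \ker(\mathbb L_N)$ gives $P u_N = u_N$. Hence
\begin{equation}
u_N - v_N = \mathbb L_N^+\mathbb L_N u_N - \mathbb L_N^+\eta_N = \mathbb L_N^+ \rho_N .
\end{equation}
Because $\norm{\mathbb L_N^+}_2 = 1/\sigma_{\min}^+(\mathbb L_N)$, we get $\norm{u_N - v_N}_2 \leq \norm{\rho_N}_2/\sigma_{\min}^+$. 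Combining this with the residual bound gives the claim.

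The main subtlety is the identity $u_N - v_N = \mathbb L_N^+\rho_N$. It does not require $\eta_N \in \operatorname{range}(\mathbb L_N)$: the pseudo-inverse is applied to $\eta_N$ directly, and any component of $\eta_N$ in $\ker(\mathbb L_N^{T})$ is simply annihilated by $\mathbb L_N^+$. For the same reason, any component of $\rho_N$ outside the range is also annihilated, which only improves the bound. So the linearity of $\mathbb L_N^+$ together with $Pu_N = u_N$ suffices.

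The remaining care point is bookkeeping. The operator $\mathbb L_N$ here is the one in \eqref{eq:def-discrete-L}, acting on unnormalized grid vectors $u_N$ and $\eta_N$, and we must use the same normalization convention in which \cref{cor:approx-linear-pde-op} was stated. We also need that $u$ being the minimum-norm solution is compatible with the hypothesis $u_N \perp \ker \mathbb L_N$; this is assumed explicitly, so nothing further is needed.
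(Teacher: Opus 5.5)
Your proof is correct and follows essentially the same route as the paper: you bound the consistency residual $\mathbb L_N u_N - \eta_N$ via \cref{cor:approx-linear-pde-op}, identify $u_N - v_N$ with $\mathbb L_N^+$ applied to that residual using orthogonality to $\ker(\mathbb L_N)$, and then use $\norm{\mathbb L_N^+} = 1/\sigma_{\min}^+(\mathbb L_N)$. The only difference is minor: the paper gets the identity by noting that $u_N - v_N$ solves $\mathbb L_N z = \xi_N$ (which tacitly uses $\mathbb L_N v_N = \eta_N$) and invoking uniqueness on $\ker(\mathbb L_N)^\perp$, whereas your $\mathbb L_N^+ \mathbb L_N u_N = u_N$ gives it directly without needing $\eta_N \in \operatorname{range}(\mathbb L_N)$.
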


\begin{proof}
First, let $ \xi_N$ be a vector of discrepancies from applying the discretized operator on the true discretized solution:
\begin{equation}
\mathbb L_N u_N= \eta_N+ \xi_N.
\end{equation}
Then since $u$ is the true solution, $(\mathcal Lu)_N = \eta_N$, and therefore by \cref{cor:approx-linear-pde-op}
\begin{equation}
\norm{\xi_N}= \norm{\mathbb L_N u_N - (\mathcal L u)_N}
\leq C \ceil{g} e^{-\frac{r}{2}N^{1/s}}.
\end{equation}
On the other hand,
\begin{equation}
\mathbb L_N(u_N- v_N)= \eta_N+ \xi_N- \eta_N= \xi_N,
\end{equation}
so $u_N- v_N$ is \emph{a} solution to $\mathbb L_N z = \xi_N$, hence equals $\mathbb L_N^+\xi_N$ plus some element of $\ker(\mathbb L_N)$. Since $v_N$ is orthogonal to $\ker(\mathbb L_N)$ by construction and $u_N \perp \ker(\mathbb L_N)$ by hypothesis, so is $u_N- v_N$; as $\mathbb L_N^+\xi_N$ is also orthogonal to $\ker(\mathbb L_N)$, the two must coincide, $u_N- v_N= \mathbb L_N^+\xi_N$, which implies that
\begin{equation}
\norm{u_N- v_N}
= \norm{\mathbb L_N^{+}\xi_N}
\leq \frac{\norm{\xi_N}}{\sigma_{\min}^+(\mathbb L_N)}
\end{equation}
completing the proof.
\end{proof}

Since we are frequently concerned about the distance between unit vectors, we note that the following inequality holds \cite[Lemma 13]{berry2017quantum}:
\begin{equation}
\label{eq:distance-scaling}
\norm{\frac{a}{\norm{a}}- \frac{b}{\norm{b}}}\leq \frac{2\norm{a-b}}{\max(\norm{a}, \norm{b})}.
\end{equation}
For example, since $F_N$ is a unitary transformation, we may be interested in its action on the normalized amplitude-encoding states $\ket{f_N}$. The following statement follows immediately from \cref{thm:high-precision-dft-gevrey}, \eqref{eq:distance-scaling}, and Parseval's theorem for the Fourier and discrete Fourier transforms.

\begin{cor}
\label{cor:high-precision-qft}
Let $f$ be an $s$-Gevrey, $2\pi$-periodic, and $d$-variate real-valued function. We have
\begin{align}
\norm{F_{N}\ket{f_{N}} - \ket{\hat f}}
\leq \frac{C}{\sqrt{\max(\langle f^2 \rangle_{\Gamma_N}, \langle f^2 \rangle_{\mathbb T^d})}} e^{-\frac{r}2 N^{1/s}},
\end{align}
provided that $N \in \tilde \Omega((ds / r)^s)$. Moreover, for any multi-index $\alpha \in \mathbb Z^d_{\geq 0}$ we have
\begin{align}
\norm{\ket{\tilde D^\alpha_N f_{N}} - \ket{(D^\alpha f)_N}}
\leq \frac{C}{\sqrt{\llangle (D^\alpha f)^2 \rrangle_{\Gamma_N}}} e^{-\frac{r}2 N^{1/s}},
\end{align}
provided that $N \in \tilde \Omega \left( \left(\max(d, (d+ |\alpha|)s) / r\right)^s\right)$.
\end{cor}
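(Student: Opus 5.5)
The plan is to derive both bounds from \cref{thm:high-precision-dft-gevrey} and \eqref{eq:distance-scaling}, with the denominators obtained from Parseval identities for the continuous and discrete transforms. Throughout I use the paper's normalization for amplitude encodings, in which the un-normalized vector attached to $\ket{\tilde f_N}$ is $\tilde f_N$, with squared norm $\langle f^2\rangle_{\Gamma_N}$, and the vector attached to $\ket{\hat f}$ is $\hat f\in\ell^2$, with squared norm $\langle f^2\rangle_{\mathbb T^d}$.

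\textbf{First inequality.} Since $F_N^{\otimes d}$ is unitary, \eqref{eq:DFT-transform} gives $F_N\ket{f_N}=\tilde f_N/\norm{\tilde f_N}$, viewed as an element of $\ell^2(\mathbb Z^d)$ by zero-extension outside $\Lambda_N$. The discrete Parseval identity gives
\begin{equation*}
\norm{\tilde f_N}_2^2=(2N+1)^{-d}\llangle f^2\rrangle_{\Gamma_N}=\langle f^2\rangle_{\Gamma_N}.
\end{equation*}
The continuous Parseval theorem gives $\norm{\hat f}_2^2=\langle f^2\rangle_{\mathbb T^d}$. Applying \eqref{eq:distance-scaling} with $a=\tilde f_N$ and $b=\hat f$ yields
\begin{equation*}
\norm{F_N\ket{f_N}-\ket{\hat f}}\le \frac{2\norm{\tilde f_N-\hat f}_2}{\sqrt{\max(\langle f^2\rangle_{\Gamma_N},\langle f^2\rangle_{\mathbb T^d})}}.
\end{equation*}
The first part of \cref{thm:high-precision-dft-gevrey} bounds the numerator by $2Ce^{-\frac r2N^{1/s}}$ under the same hypothesis $N\in\tilde\Omega((ds/r)^s)$.

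\textbf{Second inequality.} Take $a=\tilde D^\alpha_N f_N$ and $b=(D^\alpha f)_N$. Here $\norm{b}_2^2=\llangle (D^\alpha f)^2\rrangle_{\Gamma_N}$, since this is an unaveraged vector on the grid. Dropping $\norm{a}$ from the maximum only weakens the bound, so \eqref{eq:distance-scaling} gives at most $2\norm{a-b}/\sqrt{\llangle (D^\alpha f)^2\rrangle_{\Gamma_N}}$. The second part of \cref{thm:high-precision-dft-gevrey} then bounds $\norm{a-b}$ by $Ce^{-\frac r2N^{1/s}}$ under the stated condition on $N$.

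\textbf{Main obstacle.} The only delicate point is the factor $2$ coming from \eqref{eq:distance-scaling}. I would absorb it into $C$, consistent with the paper's practice of letting $C$ denote a generic constant in the Gevrey bounds. If an explicit constant is wanted instead, I would note that the $\tilde\Omega$ threshold can be enlarged by a constant so that the bound in \cref{thm:high-precision-dft-gevrey} carries an extra factor $1/2$, because $e^{-\frac r2N^{1/s}}$ can be traded against a slightly larger exponent. Beyond that, one must check the normalization conventions: the two sides of the second inequality both live on the grid, whereas the first inequality compares a finitely supported vector with one in $\ell^2$.
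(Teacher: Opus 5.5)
Your proposal is correct and matches the paper's argument, which derives this corollary in one line from \cref{thm:high-precision-dft-gevrey}, inequality \eqref{eq:distance-scaling}, and Parseval's theorem for the continuous and discrete transforms; the norms $\langle f^2\rangle_{\Gamma_N}$, $\langle f^2\rangle_{\mathbb T^d}$ and $\llangle (D^\alpha f)^2\rrangle_{\Gamma_N}$ arise exactly as you compute them. The paper also silently absorbs the factor $2$ from \eqref{eq:distance-scaling} into $C$, so your explicit handling of it (absorbing it, or enlarging the threshold by a constant) is, if anything, more careful than the original.
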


Similarly, for solutions to linear PDEs, \eqref{eq:distance-scaling} is useful for bounding the distance $\norm{\ket{u_N} - \ket{v_N}}$ in the notation of \cref{thm:linear-pde-precision}:
\begin{equation}
\label{eq:distance-bound-pde}
\norm{\ket{u_N} - \ket{v_N}}_2 \leq \frac{C \ceil{g}}{\sigma_{\min}^+(\mathbb L_N) \sqrt{\llangle u^2 \rrangle_{\Gamma_N}}} e^{-\frac{r}2 N^{1/s}}.
\end{equation}
This results in the following statement.

\begin{cor}
\label{cor:ultimate-N}
Suppose that the minimum-norm solution $u$ to the inhomogeneous linear PDE \eqref{eq:linear-PDE} is $s$-Gevrey with parameters $r$ and $C$, and that $u_N \perp \ker(\mathbb L_N)$. To achieve an $\epsilon >0$ accuracy in solving the linear PDE, it suffices to choose the Fourier truncation threshold $N$ as
\begin{equation}
\label{eq:PDE-truncation-bound}
N= \max\left\{ \frac{2}r \log\left(\frac{\sqrt2\, C \ceil{g}}{\epsilon\, \sigma_{\min}^+(\mathbb L_N)\, (2N+1)^{d/2}\norm{\hat u}}  \right),\ \frac2r\log\left(\frac{2C^2 d}{\norm{\hat u}^2}\right),\ \frac dr,\ \frac{s (d + \ceil{\mathcal J})}r \right\}^s.
\end{equation}
\end{cor}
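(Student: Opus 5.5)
The plan is to combine the normalized distance bound \eqref{eq:distance-bound-pde} with a lower bound on $\llangle u^2\rrangle_{\Gamma_N}$ in terms of $\norm{\hat u}$, and then solve for $N$. Each of the four entries in the maximum in \eqref{eq:PDE-truncation-bound} will account for one requirement. The last two entries, $d/r$ and $s(d+\ceil{\mathcal J})/r$, raised to the power $s$, are meant to enforce the hypothesis $N\in\tilde\Omega((\max(d,(d+\ceil{\mathcal J})s)/r)^s)$ of \cref{thm:linear-pde-precision} and \cref{cor:approx-linear-pde-op}. I will simply take them as the explicit form of that $\tilde\Omega$ condition, up to the logarithmic factors hidden there; the second entry will absorb what remains of those factors.

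The first step converts the sum over the grid into the continuous norm. Since $F_N^{\otimes d}$ is unitary, \eqref{eq:DFT-transform} gives
\[
\llangle u^2\rrangle_{\Gamma_N}=(2N+1)^d\norm{\tilde u_N}_2^2 .
\]
Parseval's theorem gives $\langle u^2\rangle_{\mathbb T^d}=\norm{\hat u}^2$. By \cref{thm:high-precision-dft-gevrey}, $\norm{\tilde u_N-\hat u}\le Ce^{-\frac r2N^{1/s}}$. I want this to be at most $(1-1/\sqrt2)\norm{\hat u}$, so that $\norm{\tilde u_N}\ge\norm{\hat u}/\sqrt2$. A sufficient condition is $N^{1/s}\ge\frac2r\log(2C^2d/\norm{\hat u}^2)$, since then $C^2e^{-rN^{1/s}}\le\norm{\hat u}^2/(2d)$. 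The factor $d$ is harmless slack that also absorbs the $\sqrt{2^d}$-type constants from the tail estimate. The conclusion of this step is
\[
\sqrt{\llangle u^2\rrangle_{\Gamma_N}}\ \ge\ (2N+1)^{d/2}\norm{\hat u}/\sqrt2 .
\]

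The second step substitutes this into \eqref{eq:distance-bound-pde}, which yields
\[
\norm{\ket{u_N}-\ket{v_N}}\le\frac{\sqrt2\,C\ceil g}{\sigma^+_{\min}(\mathbb L_N)(2N+1)^{d/2}\norm{\hat u}}\,e^{-\frac r2N^{1/s}} .
\]
This is at most $\epsilon$ once $N^{1/s}$ is at least $\frac2r$ times the logarithm in the first entry, which is exactly that entry. The $\max$ of the four entries satisfies all requirements at once because each bound is monotone in $N$.

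The main obstacle is that the first entry depends on $N$ itself, through $(2N+1)^{d/2}$ and through $\sigma^+_{\min}(\mathbb L_N)$. Formula \eqref{eq:PDE-truncation-bound} is therefore an implicit fixed-point condition rather than an explicit choice. I would argue that any $N$ satisfying the inequality ``$N\ge$ right-hand side'' suffices; the stated equality can be read in that sense. Because $(2N+1)^{d/2}$ only helps as $N$ grows, the dependence through that factor is benign. I would flag that the dependence through $\sigma^+_{\min}(\mathbb L_N)$ must be taken as given, consistent with the paper's choice to state its results in terms of $\sigma^+_{\min}$. Finally, the errors from the QLSA and from the oracles are handled separately in the total budget, so this corollary only needs to control the truncation error.
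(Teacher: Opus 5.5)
Your proof is correct up to constants of the kind the paper itself hides in $\lesssim$. Its skeleton matches the paper's: lower-bound $\llangle u^2\rrangle_{\Gamma_N}$ in terms of $\norm{\hat u}$, substitute into \eqref{eq:distance-bound-pde}, solve for $N$, and take the max with the validity conditions. The one substantive step, however, is done differently.

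The paper works with the square. It notes via the Leibniz rule that $u^2$ is $s$-Gevrey with parameters $(C^2,r/2)$. It then applies the aliasing sum \eqref{eq:aliasing-sum-intro} to $u^2$ at $\omega=0$, so that $\langle u^2\rangle_{\Gamma_N}-\norm{\hat u}^2=\sum_{m\neq0}\widehat{(u^2)}_{(2N+1)m}$, and bounds this by $\lesssim C^2d\,e^{-(\frac r2(2N+1))^{1/s}}$ with a lattice-shell sum. Demanding this be at most $\frac12\norm{\hat u}^2$ is what the second entry encodes.

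You instead use discrete Parseval, $\langle u^2\rangle_{\Gamma_N}=\norm{\tilde u_N}^2$, together with the already-proved bound of \cref{thm:high-precision-dft-gevrey} and the reverse triangle inequality. Your route is more economical: it needs no closure of the Gevrey class under products and no new aliasing computation, and its extra requirement $N\in\tilde\Omega((ds/r)^s)$ is already implied by the fourth entry. The paper's route compares squared quantities directly, so its $\frac12\norm{\hat u}^2$ threshold needs no additional margin such as $1-1/\sqrt2$. Your observation that the first entry makes \eqref{eq:PDE-truncation-bound} an implicit condition, to be read as ``$N\geq$ right-hand side'', is correct; the paper leaves this implicit.

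Because of that margin, your justification of the second entry needs a small repair. The inequality $C^2e^{-rN^{1/s}}\le\norm{\hat u}^2/(2d)$ only gives $\norm{\tilde u_N}\ge(1-1/\sqrt{2d})\norm{\hat u}$, which falls below $\norm{\hat u}/\sqrt2$ for $d\le5$. Instead, use the entry at full strength, $Ce^{-\frac r2N^{1/s}}\le\norm{\hat u}^2/(2Cd)$, together with $\norm{\hat u}\le\norm{u}_\infty\le C$ (the $\alpha=0$ Gevrey bound). This gives $Ce^{-\frac r2N^{1/s}}\le\norm{\hat u}/(2d)$, which suffices for $d\ge2$. For $d=1$ it costs only a $2$ in place of $\sqrt2$ in the first entry.

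Two side remarks are also off:
\begin{itemize}
\item No factor of $d$ is needed to absorb $2^{d/2}$. \cref{thm:high-precision-dft-gevrey} already absorbs it into its range of $N$, and a factor $d$ could not absorb $2^{d/2}$ anyway.
\item The second entry does not absorb the logarithmic factors hidden in $\tilde\Omega$. The paper's proof, like yours, simply reads the last two entries as the validity condition modulo those logarithms.
\end{itemize}
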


\begin{proof}
We first eliminate the dependence on the discretized-solution norm $\sqrt{\llangle u^2 \rrangle_{\Gamma_N}}$ appearing in \eqref{eq:distance-bound-pde}, replacing it with the norm $\norm{\hat u}$ of the true solution's Fourier coefficients $\hat u= (\hat u_\omega)_{\omega \in \mathbb Z^d}$, using the Parseval identity $\langle u^2 \rangle_{\mathbb T^d}= \norm{\hat u}^2$.

Since $u$ is $s$-Gevrey with parameters $(C, r)$, the function $g:=u^2$ is also $s$-Gevrey with parameters $(C^2, r/2)$ using Leibniz rule. Moreover, since $\langle g \rangle_{\Gamma_N} = \tilde g_0$ and $\langle g \rangle_{\mathbb T^d}= \hat g_0$, the aliasing sum \eqref{eq:aliasing-sum-intro} evaluated at $\omega= 0$ gives $\tilde g_0 - \hat g_0= \sum_{m \in \mathbb Z^d \setminus \{0\}} \hat g_{(2N+1)m}$. Applying the Gevrey decay bound of \cref{prop:all-decay-rates} to $g$, and summing over lattice shells exactly as in the proof of \cref{prop:tail-decay-rate}, gives
\begin{equation}
\label{eq:u2-average-error}
\left| \langle u^2 \rangle_{\Gamma_N} - \norm{\hat u}^2 \right|
\lesssim C^2 d\, e^{-\left(\frac r2 (2N+1)\right)^{1/s}}.
\end{equation}
Consequently, provided $N \geq \left(\frac2r \log\left(\frac{2C^2 d}{\norm{\hat u}^2}\right)\right)^s$, the right-hand side of \eqref{eq:u2-average-error} is at most $\frac12 \norm{\hat u}^2$, so that $\langle u^2 \rangle_{\Gamma_N} \geq \frac12 \norm{\hat u}^2$, and hence
\begin{equation}
\label{eq:llangle-lower-bound}
\sqrt{\llangle u^2 \rrangle_{\Gamma_N}} = (2N+1)^{d/2} \sqrt{\langle u^2 \rangle_{\Gamma_N}} \geq \frac1{\sqrt2} (2N+1)^{d/2} \norm{\hat u}.
\end{equation}
Substituting \eqref{eq:llangle-lower-bound} into \eqref{eq:distance-bound-pde} yields
\begin{equation}
\label{eq:distance-bound-pde-continuous}
\norm{\ket{u_N} - \ket{v_N}}_2 \leq \frac{\sqrt2\, C \ceil{g}}{\sigma_{\min}^+(\mathbb L_N)\, (2N+1)^{d/2} \norm{\hat u}}\, e^{-\frac r2 N^{1/s}},
\end{equation}
whose right-hand side no longer references the discretized solution $u_N$, but only the norm $\norm{\hat u}$ of the true solution's Fourier coefficients.

Requiring the right-hand side of \eqref{eq:distance-bound-pde-continuous} to be at most $\epsilon$ and solving for $N$ gives the first term of \eqref{eq:PDE-truncation-bound}; the second term is the threshold established above for \eqref{eq:u2-average-error}; and the last two terms are the validity condition of \cref{thm:linear-pde-precision}. This completes the proof.
\end{proof}

\begin{rmk}
\label{rmk:ultimate-N-u}
Since $\langle u^2 \rangle_{\mathbb T^d} = \frac1{(2\pi)^d}\int_{\mathbb T^d} u(x)^2\, dx$, writing $\norm{u} := \left(\int_{\mathbb T^d} u(x)^2\, dx\right)^{1/2}$ for the standard $L^2(\mathbb T^d)$ norm of the solution in the original (spatial) domain rather than the Fourier domain, we have $\norm{\hat u} = \norm{u}/(2\pi)^{d/2}$. Substituting throughout, \eqref{eq:PDE-truncation-bound} is equivalently stated, for readers with more natural access to the norm of their solution in the spatial domain, as
\begin{equation}
\label{eq:PDE-truncation-bound-u}
N= \max\left\{ \frac{2}r \log\left(\frac{\sqrt2\, C \ceil{g}}{\epsilon\, \sigma_{\min}^+(\mathbb L_N)\, \norm{u}}\left(\frac{2\pi}{2N+1}\right)^{\frac{d}2} \right), \frac2r\log\left(\frac{2C^2 d\,(2\pi)^d}{\norm{u}^2}\right), \frac dr, \frac{s (d + \ceil{\mathcal J})}r \right\}^s\!.
\end{equation}
\end{rmk}

\subsection{Complexity analysis}

The end-to-end algorithm for solving linear PDEs is proposed in \cref{alg:linear-pde}. The matrix $\mathbb L_N$ is constructed by creating block encodings for each summand $\diag\{g_{\alpha,N}\} \tilde D_N^\alpha$ in \eqref{eq:def-discrete-L}, and performing a linear combination of unitaries (LCU) to construct
\begin{equation}
\label{eq:lcu-of-L}
\sum_{\alpha\in\mathcal J} \ketbra{c_\alpha}{c_\alpha} \otimes \diag\{g_{\alpha,N}\} \tilde D_N^\alpha
\end{equation}
and obtain $\mathbb L_N$ from uncomputing the first register, then inverting $\mathbb L_N$. Here, the first register stores an integer counter $c_\alpha$ that enumerates the elements $\alpha \in \mathcal J$. We do this in two steps by re-writing \eqref{eq:lcu-of-L} as
\begin{align}
\label{eq:lcu-two-stage}
    \left(\sum_{\alpha\in \mathcal J} \ketbra{c_\alpha}{c_\alpha}\otimes \diag\{g_{\alpha,N}\}\right)\left(\sum_{\alpha\in \mathcal J} \ketbra{c_\alpha}{c_\alpha}\otimes \tilde D^\alpha_N \right).
\end{align}
We provide a custom encoding for the differential term (the second term), but implement a conventional LCU for the diagonal term (the first term) using $\mathcal O(1)$ queries to each $U_{g_{\alpha,N}}$. The full circuit for implementing \eqref{eq:lcu-two-stage} is provided in \cref{fig:L-circuit}.

\begin{figure}
\centering
\begin{adjustbox}{max width = \textwidth}

\begin{tikzpicture}[
    thick,
    smallgate/.style={draw, minimum size=0.6cm, inner sep=1pt, font=\small, fill=white},
    bigbox/.style={draw, minimum width=2.2cm, minimum height=2.0cm, inner sep=4pt, font=\large, fill=white},
    tallbox/.style={draw, minimum width=2.2cm, inner sep=4pt, font=\large, fill=white},
    ugate/.style={draw, minimum width=1.2cm, minimum height=1.8cm, inner sep=2pt, font=\small, fill=white},
    metergate/.style={draw, minimum width=0.55cm, minimum height=0.55cm, inner sep=0pt, fill=white,
        path picture={
            \draw ([shift={(0.05,0.05)}]path picture bounding box.south west) 
                  to[out=50,in=130] 
                  ([shift={(-0.05,0.05)}]path picture bounding box.south east);
            \draw ([yshift=0.05cm]path picture bounding box.south) -- ++(60:0.22cm);
        }},
    wdots/.style={fill=white, inner sep=3pt, font=\small, outer sep=0pt},
    control/.style={circle, fill=black, minimum size=5pt, inner sep=0pt},
    every node/.style={font=\small},
]

\def\ytop{0}         
\def\yalpha{-1.2}    
\def\yalphabot{-2.4} 
\def\yx{-4.0}        
\def\yxbot{-5.2}     
\def\yanc{-6}      


\pgfmathsetmacro{\xH}{2.0}

\pgfmathsetmacro{\xQROM}{4.5}
\pgfmathsetmacro{\QROMw}{2.2} 

\pgfmathsetmacro{\xAlphaOut}{\xQROM+1.0}

\pgfmathsetmacro{\xSum}{8.5}
\pgfmathsetmacro{\Sumw}{2.4}

\pgfmathsetmacro{\xQROMd}{12.5}

\pgfmathsetmacro{\xMeasAlpha}{15}

\pgfmathsetmacro{\xMeasAnc}{10.7}

\pgfmathsetmacro{\xUa}{16}
\pgfmathsetmacro{\xUb}{18}
\pgfmathsetmacro{\xUc}{21}

\pgfmathsetmacro{\xHend}{23.0}

\pgfmathsetmacro{\xMeasEnd}{24.5}

\pgfmathsetmacro{\wend}{25.6}


\draw (0.8, \ytop) -- (\xMeasEnd, \ytop);

\draw (\xAlphaOut, \yalpha) -- (\xMeasAlpha+0.3, \yalpha);
\draw (\xAlphaOut, \yalphabot) -- (\xMeasAlpha+0.3, \yalphabot);

\pgfmathsetmacro{\alphaInStart}{\xQROM-1.1}
\draw (\alphaInStart-0.8, \yalpha) -- (\alphaInStart, \yalpha);
\draw (\alphaInStart-0.8, \yalphabot) -- (\alphaInStart, \yalphabot);

\draw (-0.2, \yx) -- (\wend, \yx);
\draw (-0.2, \yxbot) -- (\wend, \yxbot);

\pgfmathsetmacro{\ancStart}{\xSum-1.2}
\draw (\ancStart-1.0, \yanc) -- (\xMeasAnc+0.3, \yanc);

\node[anchor=east, font=\small] at (\ancStart-1.0, \yanc) {$\ket{0}$};


\node[anchor=east] at (0.85, \ytop) {$\ket{0}$};

\node[anchor=east] at (\alphaInStart-0.8, \yalpha) {$\ket{0}$};
\node at (\alphaInStart-0.8, {(\yalpha+\yalphabot)/2 + 0.1}) {$\vdots$};
\node[anchor=east] at (\alphaInStart-0.8, \yalphabot) {$\ket{0}$};

\node[anchor=south west, font=\scriptsize] at (\xAlphaOut+0.3, \yalpha+0.05) {$\ket{\bar{\alpha}_1}$};
\node at (\xAlphaOut+0.7, {(\yalpha+\yalphabot)/2 + 0.1}) {$\vdots$};
\node[anchor=north west, font=\scriptsize] at (\xAlphaOut+0.3, \yalphabot-0.05) {$\ket{\bar{\alpha}_d}$};

\node[anchor=east] at (-0.2, \yx) {$\ket{x_1}$};
\node at (-0.6, {(\yx+\yxbot)/2 + 0.1}) {$\vdots$};
\node at (\wend, {(\yx+\yxbot)/2 + 0.1}) {$\vdots$};
\node[anchor=east] at (-0.2, \yxbot) {$\ket{x_d}$};


\node[draw, minimum width=1.3cm, minimum height=0.9cm, inner sep=1pt, font=\small, fill=white] at (\xH, \ytop) {Prep};

\pgfmathsetmacro{\QROMtop}{\ytop+0.5}
\pgfmathsetmacro{\QROMbot}{\yalphabot-0.5}
\pgfmathsetmacro{\QROMleft}{\xQROM-1.1}
\pgfmathsetmacro{\QROMright}{\xQROM+1.1}
\node[draw, fill=white, minimum width=2.2cm, minimum height={(\QROMtop-\QROMbot)*1cm}, 
      inner sep=0pt, font=\Large] 
      at (\xQROM, {(\QROMtop+\QROMbot)/2}) {QROM};

\pgfmathsetmacro{\Sumtop}{\yalpha+0.3}
\pgfmathsetmacro{\Sumbot}{\yanc-0.3}
\node[draw, fill=white, minimum width=3.0cm, minimum height={(\Sumtop-\Sumbot)*1cm},
      inner sep=4pt, font=\normalsize, align=center]
      at (\xSum, {(\Sumtop+\Sumbot)/2}) {$\displaystyle\sum\ketbra{\bar\alpha}{\bar\alpha}\otimes\tilde D^\alpha_N$};

\pgfmathsetmacro{\QROMdtop}{\ytop+0.5}
\pgfmathsetmacro{\QROMdbot}{\yalphabot-0.5}
\node[draw, fill=white, minimum width=2.4cm, minimum height={(\QROMdtop-\QROMdbot)*1cm},
      inner sep=0pt, font=\Large]
      at (\xQROMd, {(\QROMdtop+\QROMdbot)/2}) {$\text{QROM}^\dagger$};

\node[metergate] at (\xMeasAlpha, \yalpha) {};
\node at (\xMeasAlpha, {(\yalpha+\yalphabot)/2 + 0.1}) {$\vdots$};
\node[metergate] at (\xMeasAlpha, \yalphabot) {};

\node[metergate] at (\xMeasAnc, \yanc) {};

\node[control] (ctrl1) at (\xUa, \ytop) {};
\draw (ctrl1) -- (\xUa, {\yx+0.25});
\node[ugate] at (\xUa, {(\yx+\yxbot)/2}) {$U_{g_{1,N}}$};

\node[control] (ctrl2) at (\xUb, \ytop) {};
\draw (ctrl2) -- (\xUb, {\yx+0.25});
\node[ugate] at (\xUb, {(\yx+\yxbot)/2}) {$U_{g_{2,N}}$};

\node[control] (ctrl3) at (\xUc, \ytop) {};
\draw (ctrl3) -- (\xUc, {\yx+0.25});
\node[ugate] at (\xUc, {(\yx+\yxbot)/2}) {$U_{g_{\ceil{\!\mathcal J},N}}$};

\pgfmathsetmacro{\xUdots}{(\xUb+\xUc)/2}
\node[wdots] at (\xUdots, \yx) {$\cdots$};
\node[wdots] at (\xUdots, \yxbot) {$\cdots$};
\node[wdots] at (\xUdots, \ytop) {$\cdots$};

\node[draw, minimum width=1.3cm, minimum height=0.9cm, inner sep=1pt, font=\small, fill=white] at (\xHend, \ytop) {Prep$^\dagger$};

\node[metergate] at (\xMeasEnd, \ytop) {};

\pgfmathsetmacro{\LNleft}{0.0}
\pgfmathsetmacro{\LNright}{\wend-0.5}
\pgfmathsetmacro{\LNtop}{\ytop+1.0}
\pgfmathsetmacro{\LNbot}{\yanc-0.8}

\draw[dashed] (\LNleft, \LNtop) rectangle (\LNright, \LNbot);
\node[font=\Large] at ({(\LNleft+\LNright)/2}, \LNbot-0.6) {$\mathbb L_N$};

\end{tikzpicture}

\end{adjustbox}
\caption{The circuit for implementing the operator $\mathbb L_N$ using the LCU in \eqref{eq:lcu-two-stage}. The \texttt{Prep} and \texttt{Prep}$^\dagger$ boxes prepare and un-prepare the weighted superposition $\sum_{\alpha\in\mathcal J}\sqrt{c_\alpha/\norm{c}_1}\ket{c_\alpha}$ with $c_\alpha := \max_x|g_\alpha(x)|\,N^{|\alpha|}$. The circuit for implementing the unitary $\sum \ketbra{\bar\alpha}{\bar\alpha} \otimes \tilde D^\alpha_N$ is shown in \cref{fig:D-alpha-circuit}. The ancilla qubits and intermediate measurements internal to each block encoding $U_{g_{\alpha,N}}$ are not shown, for visual clarity.}
\label{fig:L-circuit}
\end{figure}

First, we note that $\log|\mathcal J|$ qubits suffice for the index register, and a weighted state-preparation (\texttt{Prep}) circuit using $\mathcal O(|\mathcal J|)$ elementary gates prepares $\sum_{\alpha\in\mathcal J}\sqrt{c_\alpha/\norm{c}_1}\ket{c_\alpha}$ with amplitudes proportional to $c_\alpha := \max_x|g_\alpha(x)|\, N^{|\alpha|}$, the normalization factor of the $\alpha$-th term's own block encoding (obtained by composing the block encoding of $\diag\{g_{\alpha,N}\}$ with $|\alpha|$ applications of $\Delta_N$, whose own normalization factor is $N$, see \cref{app:R_N-calculation}). For each multi-index $\alpha$ we use $\bar \alpha$ to denote an associated $d$-tuple we name the \emph{sum-index} defined as follows.
\begin{equation}
\label{eq:sum-index}
\alpha= (\alpha_1, \dots, \alpha_d) \mapsto \bar \alpha:= (\alpha_1, \alpha_1 + \alpha_2, \dots, \sum_{i=1}^d \alpha_i)
\end{equation}
A multi-index or a sum-index register requires $d \log \ceil{\mathcal J}$ qubits. As indicated in \cref{fig:L-circuit}, a QROM circuit can be used to prepare $\sum_{\alpha\in \mathcal J} \ket{c_\alpha}\ket{\bar \alpha}$ by using $\mathcal O(d|\mathcal J| \log |\mathcal J| \log \ceil{\mathcal J})$ elementary gates. Such a QROM can be used when access to the sum-index is beneficial as shown here.

Recall that each differential operator $\tilde D^\alpha$ is
\begin{align}
\label{eqn:approximate-differential}
\tilde D^\alpha_N= (F_N^{\otimes d})^{-1} \diag\left( \left\{\prod_{j =1}^{d} (i\omega_j)^{\alpha_j} \right\}_{\omega\in \Lambda_N} \right) F_N^{\otimes d}.
\end{align}
The quantum Fourier transforms, $F_N= \frac{1}{\sqrt{2N+1}}\sum_{m, n \in \{-N,\ldots,N\}} e^{\frac{2\pi inm}{2N+1}}\ket n \bra m$, can be implemented using $\mathcal O(\log N \log \log N)$ elementary gates in each of the $d$ registers. This leads to an $\mathcal O(d\log N \log \log N)$ total cost for Fourier transforms. Let us denote the complexity of the block encoding of one diagonal matrix $\diag\{(i \omega)_{\omega\in \{-N, \cdots, N\}}\}$ as $R_N$. The total cost of the block encoding of $\tilde D^\alpha$ from \cref{eqn:approximate-differential} is then $\mathcal O(d\log N\log\log N+ |\alpha| R_N)$.

Naively, $\mathcal O(|\mathcal J|)$ operators of this type must be implemented, leading to a gate complexity of $\mathcal O(|\mathcal J|\ceil{\mathcal J} R_N)$. However, in \cref{fig:D-alpha-circuit} we provide a more efficient block encoding of the full sum
\begin{equation}
\sum_{\alpha \in \mathcal J} \ketbra{\bar\alpha}{\bar\alpha} \otimes \tilde D^\alpha_N
\label{eq:sum-of-diffs}
\end{equation}
using $\mathcal O(\ceil{\mathcal J} (d \log \ceil{\mathcal J} \log N + \log N \log\log N + R_N))$ elementary gates involving a conditional swap network and application of the block encoding ${\Delta_N}$ of a single (approximate) partial differential operator,
\begin{equation}
\label{eq:single-diag}
\tilde\partial_N= F_N^{-1} \diag\{(i \omega)_{\omega\in \{-N, \cdots, N\}}\} F_N.
\end{equation}
In \cref{app:R_N-calculation} we show that $R_N$ is in the order of $\log N$.

\begin{figure}
\centering
\begin{adjustbox}{max width = \textwidth}

\begin{tikzpicture}[
    thick,
    gate/.style={draw, minimum size=0.65cm, inner sep=1pt, font=\tiny, fill=white},
    dngate/.style={draw, minimum size=0.75cm, inner sep=1pt, font=\small, fill=white},
    metergate/.style={draw, minimum width=0.75cm, minimum height=0.75cm, inner sep=0pt, fill=white,
        path picture={
            \draw ([shift={(0.07,0.07)}]path picture bounding box.south west)
                  to[out=50,in=130]
                  ([shift={(-0.07,0.07)}]path picture bounding box.south east);
            \draw ([yshift=0.07cm]path picture bounding box.south) -- ++(60:0.28cm);
        }},
    wdots/.style={fill=white, inner sep=3pt, font=\small, outer sep=0pt},
    bigdots/.style={fill=white, inner sep=4pt, font=\Large, outer sep=0pt},
    every node/.style={font=\small},
]

\def\ya{0}
\def\yam{-1.5}
\def\yad{-2.3}
\def\yx{-3.5}
\def\yxtwo{-4.3}
\def\yxm{-5.5}
\def\yxd{-6.3}
\def\yanc{-7.8}

\def\cs{0.12}

\newcommand{\swappair}[3]{%
    \draw (#1-\cs, #2-\cs) -- (#1+\cs, #2+\cs);%
    \draw (#1-\cs, #2+\cs) -- (#1+\cs, #2-\cs);%
    \draw (#1-\cs, #3-\cs) -- (#1+\cs, #3+\cs);%
    \draw (#1-\cs, #3+\cs) -- (#1+\cs, #3-\cs);%
    \draw (#1, #2) -- (#1, #3);%
}

\node[anchor=east] at (-0.3, \ya) {$\ket{\bar\alpha_1}$};
\node[anchor=east] at (-0.3, \yam) {$\ket{\bar\alpha_{d\!-\!1}}$};
\node[anchor=east] at (-0.3, \yad) {$\ket{\bar\alpha_d}$};
\node[anchor=east] at (-0.3, \yx) {$\ket{x_1}$};
\node[anchor=east] at (-0.3, \yxtwo) {$\ket{x_2}$};
\node[anchor=east] at (-0.3, \yxm) {$\ket{x_{d\!-\!1}}$};
\node[anchor=east] at (-0.3, \yxd) {$\ket{x_d}$};
\node[anchor=east] at (-0.3, \yanc) {$\ket{0}$};


\pgfmathsetmacro{\sAstart}{0.8}          
\pgfmathsetmacro{\sAa}{\sAstart+0.8}     
\pgfmathsetmacro{\sAb}{\sAa+0.9}         
\pgfmathsetmacro{\sAdotsA}{\sAb+0.9}     
\pgfmathsetmacro{\sAc}{\sAdotsA+1.0}     
\pgfmathsetmacro{\sAd}{\sAc+1.0}         
\pgfmathsetmacro{\sAe}{\sAd+1.0}         
\pgfmathsetmacro{\sDNa}{\sAe+1.0}        

\pgfmathsetmacro{\sBa}{\sDNa+1.0}        
\pgfmathsetmacro{\sBb}{\sBa+0.9}         
\pgfmathsetmacro{\sBdotsA}{\sBb+0.9}     
\pgfmathsetmacro{\sBc}{\sBdotsA+1.0}     
\pgfmathsetmacro{\sBd}{\sBc+1.0}         
\pgfmathsetmacro{\sBe}{\sBd+1.0}         
\pgfmathsetmacro{\sDNb}{\sBe+1.0}        

\pgfmathsetmacro{\dotB}{\sDNb+1.1}

\pgfmathsetmacro{\sDNpre}{\dotB+1.1}

\pgfmathsetmacro{\sCa}{\sDNpre+1.0}      
\pgfmathsetmacro{\sCb}{\sCa+1.1}         
\pgfmathsetmacro{\sCdotsA}{\sCb+1.0}     
\pgfmathsetmacro{\sCc}{\sCdotsA+1.15}     
\pgfmathsetmacro{\sCd}{\sCc+1.3}         
\pgfmathsetmacro{\sCe}{\sCd+1.2}         
\pgfmathsetmacro{\meas}{\sCe+1.0}        

\pgfmathsetmacro{\wend}{\meas+0.5}

\draw (0, \ya) -- (\wend, \ya);
\draw (0, \yam) -- (\wend, \yam);
\draw (0, \yad) -- (\wend, \yad);
\draw (0, \yx) -- (\wend, \yx);
\draw (0, \yxtwo) -- (\wend, \yxtwo);
\draw (0, \yxm) -- (\wend, \yxm);
\draw (0, \yxd) -- (\wend, \yxd);
\pgfmathsetmacro{\ancend}{\meas-0.375}
\draw (0, \yanc) -- (\ancend, \yanc);


\swappair{\sAstart}{\yx}{\yanc}

\node[gate] at (\sAa, \ya) {$\bar\alpha_1\!\!=\!\!0$};
\draw (\sAa, {\ya-0.325}) -- (\sAa, \yanc);
\swappair{\sAa}{\yx}{\yanc}

\node[gate] at (\sAb, \ya) {$\bar\alpha_1\!\!=\!\!0$};
\draw (\sAb, {\ya-0.325}) -- (\sAb, \yanc);
\swappair{\sAb}{\yxtwo}{\yanc}

\node[gate] at (\sAc, \yam) {\tiny$\bar\alpha_{d\!-\!1}\!\!=\!\!0$};
\draw (\sAc, {\yam-0.325}) -- (\sAc, \yanc);
\swappair{\sAc}{\yxm}{\yanc}

\node[gate] at (\sAd, \yam) {\tiny$\bar\alpha_{d\!-\!1}\!\!=\!\!0$};
\draw (\sAd, {\yam-0.325}) -- (\sAd, \yanc);
\swappair{\sAd}{\yxd}{\yanc}

\node[gate] at (\sAe, \yad) {$\bar\alpha_d\!\!=\!\!0$};
\draw (\sAe, {\yad-0.325}) -- (\sAe, \yanc);
\swappair{\sAe}{\yxd}{\yanc}

\node[dngate] at (\sDNa, \yanc) {$\Delta_N$};


\node[gate] at (\sBa, \ya) {$\bar\alpha_1\!\!=\!\!1$};
\draw (\sBa, {\ya-0.325}) -- (\sBa, \yanc);
\swappair{\sBa}{\yx}{\yanc}

\node[gate] at (\sBb, \ya) {$\bar\alpha_1\!\!=\!\!1$};
\draw (\sBb, {\ya-0.325}) -- (\sBb, \yanc);
\swappair{\sBb}{\yxtwo}{\yanc}

\node[gate] at (\sBc, \yam) {\tiny$\bar\alpha_{d\!-\!1}\!\!=\!\!1$};
\draw (\sBc, {\yam-0.325}) -- (\sBc, \yanc);
\swappair{\sBc}{\yxm}{\yanc}

\node[gate] at (\sBd, \yam) {\tiny$\bar\alpha_{d\!-\!1}\!\!=\!\!1$};
\draw (\sBd, {\yam-0.325}) -- (\sBd, \yanc);
\swappair{\sBd}{\yxd}{\yanc}

\node[gate] at (\sBe, \yad) {$\bar\alpha_d\!=\!1$};
\draw (\sBe, {\yad-0.325}) -- (\sBe, \yanc);
\swappair{\sBe}{\yxd}{\yanc}

\node[dngate] at (\sDNb, \yanc) {$\Delta_N$};

\foreach \yy in {0, -1.5, -2.3, -3.5, -4.3, -5.5, -6.3, -7.8} {
    \node[bigdots] at (\dotB, \yy) {$\boldsymbol{\cdots}$};
}

\node[dngate] at (\sDNpre, \yanc) {$\Delta_N$};


\node[gate] at (\sCa, \ya) {\tiny$\bar\alpha_1\!\!=\!\!\ceil{\!\mathcal J}$};
\draw (\sCa, {\ya-0.325}) -- (\sCa, \yanc);
\swappair{\sCa}{\yx}{\yanc}

\node[gate] at (\sCb, \ya) {\tiny$\bar\alpha_1\!\!=\!\!\ceil{\!\mathcal J}$};
\draw (\sCb, {\ya-0.325}) -- (\sCb, \yanc);
\swappair{\sCb}{\yxtwo}{\yanc}

\node[gate] at (\sCc, \yam) {\tiny$\bar\alpha_{d\!-\!1}\!\!=\!\!\ceil{\!\mathcal J}$};
\draw (\sCc, {\yam-0.325}) -- (\sCc, \yanc);
\swappair{\sCc}{\yxm}{\yanc}

\node[gate] at (\sCd, \yam) {\tiny$\bar\alpha_{d\!-\!1}\!\!=\!\!\ceil{\!\mathcal J}$};
\draw (\sCd, {\yam-0.325}) -- (\sCd, \yanc);
\swappair{\sCd}{\yxd}{\yanc}

\node[gate] at (\sCe, \yad) {\tiny$\bar\alpha_d\!\!=\!\!\ceil{\!\mathcal J}$};
\draw (\sCe, {\yad-0.325}) -- (\sCe, \yanc);
\swappair{\sCe}{\yxd}{\yanc}

\node[metergate] at (\meas, \yanc) {};

\foreach \xp in {-0.7, \dotB, \meas} {
    \node at (\xp, -0.75) {$\vdots$};
    \node at (\xp, -4.9) {$\vdots$};
}

\foreach \yy in {\ya, \yam, \yad, \yx, \yxtwo, \yxm, \yxd, \yanc} {
    \node[wdots] at (\sAdotsA, \yy) {$\dots$};
    \node[wdots] at (\sBdotsA, \yy) {$\dots$};
    \node[wdots] at (\sCdotsA, \yy) {$\dots$};
}

\end{tikzpicture}

\end{adjustbox}
\caption{The circuit for the block encoding of $\sum_{\alpha \in \mathcal J} \ketbra{\bar\alpha}{\bar\alpha} \otimes \tilde D^\alpha_N$. The circuit starts with a single swap operator between the first variable $\ket{x_1}$ and an ancilla register initialized in $\ket{0}$. This last wire is where the single differential operators $\tilde\partial_N$ are applied through their block-encodings $\Delta_N$. In each of the $\ceil{\mathcal J}$ segments the variables are swapped into the last register depending on the differential sum-index $\bar\alpha$ elements, and swapped back at a subsequent segment depending on the order of each partial derivative. Each of the $\ceil{\mathcal J}$ intervening $\Delta_N$ applications increments the cumulative derivative order by one, so all $\ceil{\mathcal J}$ copies are needed to correctly reach the highest order appearing in $\mathcal J$; none is redundant. The ancilla qubits and intermediate measurements internal to each block encoding $\Delta_N$ are likewise omitted, for cleaner visualization.}
\label{fig:D-alpha-circuit}
\end{figure}

The matrix inversion step contributes an additional multiplicative factor of $\mathcal O(\kappa_{\mathbb L_N} \log (1/\epsilon))$ to the total elementary gates and query counts, where $\kappa_{\mathbb L_N} = \alpha_{LCU}/\sigma_{\min}^+(\mathbb L_N)$ is governed by the normalization factor $\alpha_{LCU}$ actually achieved by our LCU-based block encoding of $\mathbb L_N$ (see the proof below), rather than by the operator's own spectral norm $\sigma_{\max}(\mathbb L_N)$. The end-to-end complexity can now be stated in the following theorem.

\begin{thm}
\label{thm:linear-pde-complexity}
Suppose that the minimum-norm solution $u$ to the general linear PDE of the form \eqref{eq:linear-PDE} with periodic boundary conditions is $s$-Gevrey for some $s \geq 1$, and that $\eta_N \perp \ker(\mathbb L_N^T)$ and $u_N \perp \ker(\mathbb L_N)$.
Then \cref{alg:linear-pde} solves the PDE with $\mathcal O(1)$ success probability by returning a normalized state $\ket{v}\in\mathbb{C}^{(2N+1)^d}$ which is $\epsilon$-close in $\ell^2$-norm to $\frac{u_N}{\norm{u_N}}$ using
\begin{equation}
\label{eq:total-gate-count}
\tilde{\mathcal O} \left(\frac{\ceil{g}\,|\mathcal J| \,(|\mathcal J|  +
\ceil{\mathcal J})}{\sigma_{\min}^+}
d^{\ceil{\mathcal J} s + 1}
\left(s \log \left(\frac{1}{\norm{\hat u}\, \epsilon} \right) \right)^{\ceil{\mathcal J}s}
\log(\frac{1}\epsilon)
\right)
\end{equation}
elementary gates, $\mathcal O(d (\log N + \log \ceil{\mathcal J}))$ total number of qubits, and
\begin{equation}
\label{eq:total-query-count}
\tilde{\mathcal O} \left(\frac{\ceil{g}\, |\mathcal J| \,(sd)^{\ceil{\mathcal J} s}}{\sigma_{\min}^+}
\log^{\ceil{\mathcal J}s} \left(\frac{1}{\norm{\hat u}\, \epsilon}\right) \log(\frac{1}\epsilon)\right)
\end{equation}
queries to oracles $U_{g_{\alpha,N}}$ and $U_{\eta_N}$.
\end{thm}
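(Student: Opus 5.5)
The proof composes three ingredients already in place: the precision bound of \cref{cor:ultimate-N}, the block-encoding cost of $\mathbb L_N$ from \cref{fig:L-circuit,fig:D-alpha-circuit}, and the QLSA of \cite{dalzell2024shortcut}.

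\textbf{Step 1: fix $N$ and split the error budget.} We choose $N$ by \cref{cor:ultimate-N}, with target $\epsilon/2$ for the discretization error $\norm{\ket{u_N}-\ket{v_N}}$. The remaining $\epsilon/2$ is split among three sources:
\begin{itemize}
\item the QLSA inversion error;
\item the approximation errors of $U_{g_{\alpha,N}}$ and $U_{\eta_N}$;
\item the approximation error of the $\Delta_N$ block encodings.
\end{itemize}
These propagate to the output linearly, up to a factor $\kappa$, by the standard robustness of QLSA to block-encoding perturbations. Hence each needs only precision $\epsilon/\mathrm{poly}(\kappa)$, which costs polylogarithmic overhead and is absorbed into $\tilde{\mathcal O}$.

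From \eqref{eq:PDE-truncation-bound} we read off
\begin{equation}
N^{1/s}=\mathcal O\Big(\tfrac{1}{r}\max\{\log\tfrac{C\ceil g}{\epsilon\sigma_{\min}^+\norm{\hat u}},\ s(d+\ceil{\mathcal J})\}\Big).
\end{equation}
Since the $(2N+1)^{d/2}$ in the denominator only helps, we may drop it. This yields $N^{\ceil{\mathcal J}} \lesssim \big(s(d+\ceil{\mathcal J})\log\frac{1}{\norm{\hat u}\epsilon}\big)^{\ceil{\mathcal J}s}$ after absorbing $r$, $C$, $\ceil g$ and $\sigma_{\min}^+$ inside the logarithm. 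Because the product of the two maximum terms bounds their maximum, this is at most $(sd)^{\ceil{\mathcal J}s}\log^{\ceil{\mathcal J}s}(1/(\norm{\hat u}\epsilon))$ up to $\tilde{\mathcal O}$ factors. The qubit count is immediate: $d$ registers of $\log(2N+1)$ qubits, the sum-index register of $d\log\ceil{\mathcal J}$ qubits, and $\mathcal O(\log N)$ ancillas for $\Delta_N$.

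\textbf{Step 2: block encoding of $\mathbb L_N$.} Each $\Delta_N$ has normalization $N$, so the $\alpha$-th term $\diag\{g_{\alpha,N}\}\tilde D^\alpha_N$ is block encoded with normalization $c_\alpha=\max_x|g_\alpha(x)|N^{|\alpha|}$. The LCU with \texttt{Prep} weights $\sqrt{c_\alpha/\norm c_1}$ then gives
\begin{equation}
\alpha_{LCU}=\norm c_1\le \ceil g\,|\mathcal J|\,N^{\ceil{\mathcal J}},
\end{equation}
where the factor $|\mathcal J|$ is the crude bound the statement carries. The gate cost of one use sums three contributions:
\begin{itemize}
\item $\mathcal O(|\mathcal J|)$ for \texttt{Prep};
\item $\mathcal O(d|\mathcal J|\log|\mathcal J|\log\ceil{\mathcal J})$ for the QROM;
\item $\mathcal O(\ceil{\mathcal J}(d\log\ceil{\mathcal J}\log N+\log N\log\log N+R_N))$ for the circuit of \cref{fig:D-alpha-circuit}, with $R_N=\mathcal O(\log N)$ from \cref{app:R_N-calculation}.
\end{itemize}
Each use also makes $\mathcal O(1)$ queries to each $U_{g_{\alpha,N}}$. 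Up to logarithmic factors, the total per-use gate cost is $\tilde{\mathcal O}(d(|\mathcal J|+\ceil{\mathcal J}))$.

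\textbf{Step 3: QLSA and multiplication.} The hypotheses $\eta_N\perp\ker(\mathbb L_N^T)$ and $u_N\perp\ker(\mathbb L_N)$ are exactly what is needed for two things. First, \cite{dalzell2024shortcut} returns $\ket{\mathbb L_N^+\eta_N}$. Second, \cref{thm:linear-pde-precision} applies. The QLSA uses $\mathcal O(\kappa\log(1/\epsilon))$ calls, with $\kappa=\alpha_{LCU}/\sigma_{\min}^+$, and succeeds with $\mathcal O(1)$ probability. Multiplying $\kappa\log(1/\epsilon)$ by the Step 1 bound on $N^{\ceil{\mathcal J}}$ gives \eqref{eq:total-query-count}. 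Multiplying further by the per-use gate cost from Step 2 gives \eqref{eq:total-gate-count}: the extra $d$ accounts for $d^{\ceil{\mathcal J}s+1}$, and the factor $(|\mathcal J|+\ceil{\mathcal J})$ appears as claimed.

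\textbf{Main obstacle.} The delicate step is Step 1. Equation \eqref{eq:PDE-truncation-bound} is implicit in $N$, since $\sigma_{\min}^+(\mathbb L_N)$ and $(2N+1)^{d/2}$ both depend on $N$. It also mixes a logarithmic and a linear-in-$d$ branch, and one must ensure that raising it to the power $\ceil{\mathcal J}s$ gives the claimed product form. I would handle this in three moves:
\begin{itemize}
\item treat $\sigma_{\min}^+$ as a given lower bound that is uniform in $N$, which the statement implicitly assumes by writing it $N$-free;
\item drop the favorable factor $(2N+1)^{d/2}$;
\item use $\max(a,b)\le ab$ for $a,b\ge1$.
\end{itemize}
This puts all dependence on $r$, $C$ and $\sigma_{\min}^+$ inside logarithms or $\tilde{\mathcal O}$. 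I would also check that the error-budget rescaling of sub-oracle precisions does not inflate $N$ beyond logarithmic factors.
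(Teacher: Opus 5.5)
Your proposal is correct and is essentially the paper's proof. Both arguments use the same four ingredients:
\begin{itemize}
\item the per-use cost $\tilde{\mathcal O}(d(|\mathcal J|+\ceil{\mathcal J}))$ of the block encoding of $\mathbb L_N$ from \cref{fig:L-circuit,fig:D-alpha-circuit}, with $R_N=\mathcal O(\log N)$;
\item the LCU normalization $\alpha_{LCU}\le|\mathcal J|\,\ceil{g}\,N^{\ceil{\mathcal J}}$ as the numerator of $\kappa$;
\item the threshold of \cref{cor:ultimate-N} to bound $N$;
\item multiplication by the $\mathcal O(\kappa\log(1/\epsilon))$ QLSA call count.
\end{itemize}
The differences are cosmetic. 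You drop the favorable $(2N+1)^{d/2}$ factor outright, whereas the paper absorbs it as a polylog correction. You use $\max(a,b)\le ab$, whereas the paper bounds each of the four threshold terms by $\mathcal O(ds\log(\cdot))$. You budget sub-oracle errors generically, whereas the paper only discusses the finite-precision \texttt{Prep} weights. Both versions hide the same $r$-, $C$-, and $c^{\ceil{\mathcal J}s}$-type constants inside $\tilde{\mathcal O}$.
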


\begin{proof}
As per the discussion above, the elementary gate count for the block encoding of $\mathbb L_N$ is
\begin{equation}
\label{eq:LN-gate-count}
\mathcal O\left(
d|\mathcal J| \log |\mathcal J| \log \ceil{\mathcal J} +
\ceil{\mathcal J} \big(d \log \ceil{\mathcal J} \log N + \log N \log\log N + R_N\big)
\right)
\end{equation}
In \cref{app:R_N-calculation} we show that $R_N$ is also in the order of $\log N$, therefore this complexity order simplifies to $\tilde{\mathcal O}\left( d\big(|\mathcal J| \log \ceil{\mathcal J} + \ceil{\mathcal J} \log N\big) \right)$.
Note also that, since the Fourier symbol of $\partial^\alpha$ is $(i\omega)^\alpha$,
\begin{equation}
\label{eq:sigma-max-bound}
\sigma_{\max} (\mathbb L_N) \in \mathcal O(\ceil{g} N^{\ceil{\mathcal J}}),
\end{equation}
which is the true spectral norm of $\mathbb L_N$, but not necessarily the normalization factor actually achieved by our LCU-based block encoding. As discussed above, a properly weighted \texttt{Prep} circuit achieves normalization factor
\begin{equation}
\label{eq:alpha-lcu-bound}
\alpha_{LCU} = \sum_{\alpha\in\mathcal J} \max_x|g_\alpha(x)|\, N^{|\alpha|} \leq |\mathcal J| \, \ceil{g}\, N^{\ceil{\mathcal J}} = \mathcal O(|\mathcal J|\, \sigma_{\max}(\mathbb L_N)),
\end{equation}
using $N^{|\alpha|}\leq N^{\ceil{\mathcal J}}$ and $\max_x|g_\alpha(x)|\leq\ceil{g}$ termwise; this (rather than $\sigma_{\max}(\mathbb L_N)$ itself) is the correct numerator for the effective condition number $\kappa_{\mathbb L_N} := \alpha_{LCU}/\sigma_{\min}^+(\mathbb L_N)$ governing the QLSA's query complexity, contributing the extra factor of $|\mathcal J|$ appearing in \eqref{eq:total-gate-count} and \eqref{eq:total-query-count} beyond the per-query gate cost. Since the weights $c_\alpha$ are generally irrational, the \texttt{Prep} circuit itself can only be prepared to finite precision $\delta$, at a cost of $\mathcal O(|\mathcal J|\log(1/\delta))$ gates; choosing $\delta = \mathcal O(\epsilon/(\kappa_{\mathbb L_N}\log(1/\epsilon)))$ keeps the resulting error within the target $\epsilon$, and the resulting $\mathcal O(|\mathcal J|\log\kappa_{\mathbb L_N})$ gate cost is absorbed into $\tilde{\mathcal O}(\cdot)$. And, using $\norm{u_N}= \sqrt{\llangle u^2 \rrangle_{\Gamma_N}} \geq \frac1{\sqrt2}(2N+1)^{d/2}\norm{\hat u}$ from the proof of \cref{cor:ultimate-N}, the threshold \eqref{eq:PDE-truncation-bound} applies; each of its four terms is $O\left(ds\log\left(\frac{\ceil{g}}{\epsilon\,\sigma_{\min}^+\,\norm{\hat u}}\right)\right)$ (absorbing the resulting $\poly\log(N)$ correction from the first term into $\tilde{\mathcal O}(\cdot)$), and hence so is their maximum,
\begin{equation}
N \in \mathcal O \left(\left(ds\log \left(\frac{\ceil{g}}{\epsilon\, \sigma_{\min}^+\, \norm{\hat u}} \right)\right)^s \right).
\end{equation}
This also simplifies the dominant factors in the complexity of the block encoding of $\mathbb L_N$ to
\begin{equation}
\tilde{\mathcal O}\left(
d(|\mathcal J| + \ceil{\mathcal J})
\right).
\end{equation}
Therefore, the result follows from combining the equations above.
\end{proof}

\begin{rmk}
Using $\norm{\hat u} = \norm{u}/(2\pi)^{d/2}$ (\cref{rmk:ultimate-N-u}) and $\log\left(\frac{(2\pi)^{d/2}}{X}\right) = O(d\log(1/X))$ for $\log(1/X)\geq1$, \eqref{eq:total-gate-count} and \eqref{eq:total-query-count} yield
\begin{equation}
\tilde{\mathcal O} \left(\frac{\ceil{g}\,|\mathcal J|\,(|\mathcal J| + \ceil{\mathcal J})}{\sigma_{\min}^+} d^{2\ceil{\mathcal J} s + 1} \left(s \log \left(\frac{1}{\norm{u}\, \epsilon} \right) \right)^{\ceil{\mathcal J}s} \log(\frac{1}\epsilon) \right)
\end{equation}
elementary gates and
\begin{equation}
\tilde{\mathcal O} \left(\frac{\ceil{g}\,|\mathcal J|\, (sd^2)^{\ceil{\mathcal J} s}}{\sigma_{\min}^+} \log^{\ceil{\mathcal J}s} \left(\frac{1}{\norm{u}\, \epsilon}\right) \log(\frac{1}\epsilon)\right)
\end{equation}
queries, in terms of the $\ell^2$-norm of the solution.
\end{rmk}

\begin{rmk}
The orthogonality hypotheses in the statement of \cref{thm:linear-pde-complexity} are what let \cref{alg:linear-pde} act on an invertible restriction of $\mathbb L_N$, giving the stated $\mathcal O(1)$ success probability. For the minimum-norm solution $u$ to exist at all, $\eta$ must satisfy the continuous-level Fredholm solvability condition $\eta \perp \ker(\mathcal L^*)$, but this is not enough to guarantee $\eta_N \perp \ker(\mathbb L_N^T)$ at the specific $N$ chosen by the algorithm; similarly, $u \perp \ker(\mathcal L)$ at the continuous level does not by itself guarantee $u_N \perp \ker(\mathbb L_N)$ at the discrete level. So, in \cref{cor:poisson,cor:mollified-transformed-pde} we verify these discrete-level conditions using additional structures (ellipticity together with either self-adjointness or constant coefficients) specific to each PDE (see also \cref{sec:discussion}).
\end{rmk}

\section{Applications}
\label{sec:applications}

There are many applications in the simulation of classical and quantum physical systems involving periodic degrees of freedom. This includes the simulation of bulk material properties, crystals, condensed-matter physics models, and molecular dynamics problems. Simulating the equilibrium and non-equilibrium dynamics of such systems often involves solving second-order differential equations, such as the steady-state or time-dependent Schr\"odinger equation, diffusion equations, or heat equations. We find that exploring each such use case is an invaluable future direction of research. As a demonstrative example, in this section, we apply the machinery we have developed to solve the steady-state Schr\"odinger equation in atomistic simulations of bulk materials.

\subsection{The Poisson equation}
\label{sec:poisson}

The (inhomogeneous) steady-state Schr\"odinger equation is of the general form
\begin{equation}
\label{eq:schrodinger}
\left(\frac{1}2\nabla^2 + V\right)u= \eta
\end{equation}
wherein a nonzero source term $\eta$ is typically used to probe the linear response of the system, as discussed further in \cref{sec:material-application}.

It is therefore useful to first consider the simpler case of the Poisson equation
\begin{equation}
\label{eq:def-poisson}
\nabla^2 u= \eta,
\end{equation}
or its anisotropic variants
\begin{equation}
\label{eq:def-anisotropic-poisson}
\nabla \cdot (\Sigma \nabla u) = \sum_{i, j}  \sigma_{ij}  \tfrac{\partial^2}{\partial x_i \partial x_j} u= \eta.
\end{equation}
Here $\Sigma = (\sigma_{ij})$ is the matrix of second-order coefficients. We refer the reader to \cref{app:elliptic_literature_review} for a summary of prior art in solving the Poisson equation. In the next corollary, we derive its resource complexity as a direct consequence of our general framework.

\begin{figure}
\centering
\begin{adjustbox}{max width = \textwidth}

\begin{tikzpicture}[
    thick,
    gate/.style={draw, minimum size=0.65cm, inner sep=1pt, font=\tiny, fill=white},
    dngate/.style={draw, minimum size=0.75cm, inner sep=1pt, font=\small, fill=white},
    statebox/.style={draw, minimum width=1.2cm, minimum height=0.55cm, inner sep=2pt, font=\small, fill=white},
    hgate/.style={draw, minimum size=0.55cm, inner sep=1pt, font=\small, fill=white},
    metergate/.style={draw, minimum width=0.55cm, minimum height=0.55cm, inner sep=0pt, fill=white,
        path picture={
            \draw ([shift={(0.05,0.05)}]path picture bounding box.south west) 
                  to[out=50,in=130] 
                  ([shift={(-0.05,0.05)}]path picture bounding box.south east);
            \draw ([yshift=0.05cm]path picture bounding box.south) -- ++(60:0.22cm);
        }},
    wdots/.style={fill=white, inner sep=3pt, font=\small, outer sep=0pt},
    every node/.style={font=\small},
]

\def\yctrl{0}        
\def\ybone{-1.2}     
\def\ybtwo{-1.7}     
\def\ybd{-3.0}       
\def\yanc{-4}      

\def\cs{0.12}        

\newcommand{\swappair}[3]{%
    \draw (#1-\cs, #2-\cs) -- (#1+\cs, #2+\cs);%
    \draw (#1-\cs, #2+\cs) -- (#1+\cs, #2-\cs);%
    \draw (#1-\cs, #3-\cs) -- (#1+\cs, #3+\cs);%
    \draw (#1-\cs, #3+\cs) -- (#1+\cs, #3-\cs);%
    \draw (#1, #2) -- (#1, #3);%
}

\node[anchor=east] at (-0.3, \yctrl) {$\ket{0}$};
\node[anchor=east] at (-0.3, \ybone) {$\ket{x_1}$};
\node[anchor=east] at (-0.3, \ybtwo) {$\ket{x_2}$};
\node[anchor=east] at (-0.3, \ybd)   {$\ket{x_d}$};
\node[anchor=east] at (-0.3, \yanc)  {$\ket{0}$};


\pgfmathsetmacro{\pH}{1.2}              
\pgfmathsetmacro{\pState}{1.5}         
\pgfmathsetmacro{\pStateEnd}{\pState+0.6}

\pgfmathsetmacro{\pLa}{\pStateEnd+0.7}  
\pgfmathsetmacro{\pLb}{\pLa+0.9}        
\pgfmathsetmacro{\pLdots}{\pLb+0.8}     
\pgfmathsetmacro{\pLc}{\pLdots+0.8}     

\pgfmathsetmacro{\pDelta}{\pLc+1.0}

\pgfmathsetmacro{\pRc}{\pDelta+1.0} 
\pgfmathsetmacro{\pRdots}{\pRc+0.8}      
\pgfmathsetmacro{\pRb}{\pRdots+0.8}      
\pgfmathsetmacro{\pRa}{\pRb+0.9}         

\pgfmathsetmacro{\pPrepDag}{\pRa+1.7}
\pgfmathsetmacro{\pMeasEnd}{\pPrepDag+1.3}

\pgfmathsetmacro{\wend}{\pMeasEnd+0.7}

\draw (0, \yctrl) -- (\pMeasEnd, \yctrl);
\draw (0, \ybone) -- (\wend, \ybone);
\draw (0, \ybtwo) -- (\wend, \ybtwo);
\draw (0, \ybd)   -- (\wend, \ybd);
\draw (0, \yanc)  -- (\wend, \yanc);

\node[draw, minimum width=1.15cm, minimum height=0.75cm, inner sep=1pt, font=\small, fill=white] at (\pH, \yctrl) {Prep};


\node[gate] at (\pLa, \yctrl) {$c_\alpha\!\!=\!\!1$};
\draw (\pLa, {\yctrl-0.325}) -- (\pLa, \yanc);
\swappair{\pLa}{\ybone}{\yanc}

\node[gate] at (\pLb, \yctrl) {$c_\alpha\!\!=\!\!2$};
\draw (\pLb, {\yctrl-0.325}) -- (\pLb, \yanc);
\swappair{\pLb}{\ybtwo}{\yanc}

\node[gate] at (\pLc, \yctrl) {$c_\alpha\!\!=\!\!d$};
\draw (\pLc, {\yctrl-0.325}) -- (\pLc, \yanc);
\swappair{\pLc}{\ybd}{\yanc}

\node[dngate] at (\pDelta, \yanc) {$\Delta_N^2$};


\node[gate] at (\pRc, \yctrl) {$c_\alpha\!\!=\!\!d$};
\draw (\pRc, {\yctrl-0.325}) -- (\pRc, \yanc);
\swappair{\pRc}{\ybd}{\yanc}

\node[gate] at (\pRb, \yctrl) {$c_\alpha\!\!=\!\!2$};
\draw (\pRb, {\yctrl-0.325}) -- (\pRb, \yanc);
\swappair{\pRb}{\ybtwo}{\yanc}

\node[gate] at (\pRa, \yctrl) {$c_\alpha\!\!=\!\!1$};
\draw (\pRa, {\yctrl-0.325}) -- (\pRa, \yanc);
\swappair{\pRa}{\ybone}{\yanc}

\node[draw, minimum width=1.15cm, minimum height=0.75cm, inner sep=1pt, font=\small, fill=white] at (\pPrepDag, \yctrl) {Prep$^\dagger$};
\node[metergate] at (\pMeasEnd, \yctrl) {};

\foreach \yy in {\yctrl, \ybone, \ybtwo, \ybd, \yanc} {
    \node[wdots] at (\pLdots, \yy) {$\dots$};
    \node[wdots] at (\pRdots, \yy) {$\dots$};
}

\pgfmathsetmacro{\vdotY}{(\ybtwo + \ybd) / 2 + 0.1}
\foreach \xp in {-0.7, \pLdots, \pRdots, 10.6} {
    \node at (\xp, \vdotY) {$\vdots$};
}

\end{tikzpicture}
\end{adjustbox}
\caption{The circuit for the block encoding of the (axis-aligned, i.e.\ diagonal-$\Sigma$) Laplacian operator. The \texttt{Prep} box prepares the weighted superposition $\sum_{i=1}^d\sqrt{\sigma_{ii}/\mathrm{tr}(\Sigma)}\ket{c_\alpha\!=\!i}$; \texttt{Prep}$^\dagger$ and the final measurement complete the LCU, heralding success on outcome $\ket{0}$. For general (non-diagonal) $\Sigma$, the cross-derivative terms $\sigma_{ij}\partial_i\partial_j$, $i\neq j$, are instead handled by the general LCU machinery of \cref{fig:L-circuit,fig:D-alpha-circuit}.}
\label{fig:Laplacian-circuit}
\end{figure}

\begin{cor}
\label{cor:poisson}
Suppose that $\Sigma$ is symmetric positive definite (as required for ellipticity), and that the minimum-norm solution to the anisotropic Poisson equation \eqref{eq:def-anisotropic-poisson} is $s$-Gevrey for some $s > 0$. Then \cref{alg:linear-pde} solves the PDE with $\mathcal O(1)$ success probability by returning a normalized state $\ket{v}\in\mathbb{C}^{(2N+1)^d}$ which is $\epsilon$-close in $\ell^2$-norm to $\frac{u_N}{\norm{u_N}}$ using
\begin{equation}
\label{eq:total-gate-count-poisson}
\tilde{\mathcal O} \left(\frac{\norm{\Sigma}_{1, 1}}{\lambda_{\min}(\Sigma)}
s^{2s} d^{2s+2} \log^{2s} \left(\frac{1}{\norm{\hat u}\, \epsilon} \right) \log (\frac{1}\epsilon)
\right)
\end{equation}
elementary gates, $\mathcal O(d \log N)$ total number of qubits, and at most
\begin{equation}
\label{eq:query-count-poisson}
\tilde{\mathcal O} \left(\frac{\norm{\Sigma}_{1, 1}}{\lambda_{\min}(\Sigma)}
s^{2s} d^{2s+1} \log^{2s} \left(\frac{1}{\norm{\hat u}\, \epsilon}\right) \log(\frac{1}\epsilon)\right)
\end{equation}
queries to $U_{\eta_N}$. Here $\norm{\Sigma}_{1, 1}=\sum_{i,j}\abs{ \sigma_{ij} }$ is the entry-wise $\ell_1$-norm, and $\lambda_{\min}(\Sigma)$ is the smallest eigenvalue of $\Sigma$.
\end{cor}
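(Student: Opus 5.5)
The plan is to specialize \cref{thm:linear-pde-complexity} to $\mathcal L = \sum_{i,j}\sigma_{ij}\partial_i\partial_j$, with constant coefficients $g_{(i,j)} = \sigma_{ij}$ and $\ceil{\mathcal J}=2$. Three things must be checked: the orthogonality hypotheses, an explicit lower bound on $\sigma_{\min}^+(\mathbb L_N)$, and the bookkeeping of $|\mathcal J|$ and $\ceil{g}$ against the normalization actually achieved by the LCU.

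\emph{Exact spectral structure.} Because the coefficients are constant, the diagonal factors in \eqref{eq:def-discrete-L} are scalar multiples of the identity. Hence
\[
\mathbb L_N = F_N^{\dagger\otimes d}\,\diag\{-\omega^T\Sigma\omega : \omega\in\Lambda_N\}\,F_N^{\otimes d}.
\]
This matrix is normal, indeed Hermitian up to the unitary conjugation by the real-symmetric symbol. Its kernel is therefore spanned by the Fourier mode $\omega=0$, which is the constant vector. Its range is the orthogonal complement of that mode, so $\ker(\mathbb L_N^T)$ is also the constants.

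The two hypotheses then reduce to two facts. First, $\eta_N$ must have zero discrete mean. Second, $u_N$ must have zero discrete mean. The minimum-norm continuous solution satisfies $\hat u_0 = 0$ and the Fredholm condition gives $\hat\eta_0 = 0$, so by the aliasing sum \eqref{eq:aliasing-sum-intro} the discrete means are only exponentially small in $N^{1/s}$, not exactly zero. I would handle this by projecting out the zero mode, which can be folded into the QLSA's pseudo-inverse. The residual is absorbed into the $\epsilon$ budget using \cref{thm:high-precision-dft-gevrey}.

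\emph{Bounding $\sigma_{\min}^+$.} On the range, every nonzero $\omega\in\mathbb Z^d$ satisfies
\[
\omega^T\Sigma\omega \ge \lambda_{\min}(\Sigma)\norm{\omega}_2^2 \ge \lambda_{\min}(\Sigma).
\]
Hence $\sigma_{\min}^+ \ge \lambda_{\min}(\Sigma)$.

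\emph{Normalization and bookkeeping.} I would not use the crude $\ceil{g}|\mathcal J|$ bound. Instead I would recompute $\alpha_{LCU}$ directly as $\sum_{i,j}|\sigma_{ij}|N^2 = \norm{\Sigma}_{1,1}N^2$, following the proof of \cref{thm:linear-pde-complexity} with \texttt{Prep} weights $|\sigma_{ij}|$. Signs go into a controlled phase, and the diagonal part uses \cref{fig:Laplacian-circuit}. This gives $\kappa \le \norm{\Sigma}_{1,1}N^2/\lambda_{\min}(\Sigma)$.

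Next I take $N$ from \cref{cor:ultimate-N} with $\ceil{\mathcal J}=2$, giving $N = \tilde{\mathcal O}((ds\log(1/(\norm{\hat u}\epsilon)))^s)$. Then
\[
N^2 = \tilde{\mathcal O}\big(s^{2s}d^{2s}\log^{2s}(1/(\norm{\hat u}\epsilon))\big).
\]
Multiplying by the QLSA factor $\log(1/\epsilon)$ yields the query bound \eqref{eq:query-count-poisson}. In that bound the extra factor $d$ is the one I expect from the $\log(1/\sigma_{\min}^+)$ and $d$-dependence entering through the first term of \eqref{eq:PDE-truncation-bound}, or equivalently from $|\mathcal J|\le d^2$ being partly absorbed; I would track it carefully.

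For gates, the per-query cost of the block encoding is $\tilde{\mathcal O}(d)$ from the QFTs and swaps of \cref{fig:Laplacian-circuit}, plus $\mathcal O(d^2)$ for the cross terms and \texttt{Prep}. This gives one more power of $d$, matching \eqref{eq:total-gate-count-poisson}. The qubit count $\mathcal O(d\log N)$ follows because $\log\ceil{\mathcal J}=\mathcal O(1)$.

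\emph{Main obstacle.} I expect the hardest step to be the orthogonality and zero-mode issue. The discrete means of $u_N$ and $\eta_N$ vanish only up to aliasing errors, and these must be shown to be controlled within $\epsilon$ after normalization by $\norm{u_N}$, using the lower bound \eqref{eq:llangle-lower-bound}. A secondary subtlety is that this corollary allows $s>0$ rather than $s\ge1$. I would argue that the proof of \cref{thm:linear-pde-complexity} never used $s\ge1$ beyond the form of $N$, which stays valid for all $s>0$.
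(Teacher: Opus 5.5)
Your proposal is correct and follows essentially the paper's own route: exact QFT diagonalization with real symbol $-\omega^T\Sigma\omega$, the Rayleigh-quotient bound $\sigma_{\min}^+(\mathbb L_N)\ge\lambda_{\min}(\Sigma)$, LCU normalization at most $\norm{\Sigma}_{1,1}N^2$, and $N=\tilde{\mathcal O}((ds\log(1/(\norm{\hat u}\epsilon)))^s)$ from \cref{cor:ultimate-N}, with the zero-mode issue handled as in \cref{rmk:poisson-solvability}. The one difference is that you compute $\alpha_{LCU}$ directly, which avoids the crude termwise factor $|\mathcal J|$ that the paper inherits from \cref{thm:linear-pde-complexity}; so you actually obtain $d^{2s}$ queries, and there is no missing power of $d$ to track down (the stated $d^{2s+1}$ is just slack in an upper bound), but for consistency you should multiply your own $d^{2s}$ query count, not the stated one, by the $\tilde{\mathcal O}(d^2)$ per-query cost to arrive at $d^{2s+2}$ gates.
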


\begin{proof}
For the Laplacian operator, the circuit shown in \cref{fig:Laplacian-circuit} is more efficient than the general construction presented in \cref{fig:D-alpha-circuit} by a factor of $d^2$. So we can obtain a
\begin{equation}
\label{eq:laplacian-cost}
\mathcal O(d \log d \log N+ \log N \log\log N + R_N )
\end{equation}
complexity for implementing the discretized Laplacian operator from elementary gates, which simplifies to $\tilde{\mathcal O}(d \log N)$. The coefficients $\sigma_{ij}$ constitute the constant functions $g_\alpha$ in our previous notation and therefore
\begin{equation}
\ceil{g}= \sum_{i, j=1}^d |\sigma_{ij}|= \norm{\Sigma}_{1, 1}.
\end{equation}
Additionally, neglecting the single zero eigenvalue of $\mathbb L_N$ that arises from $\mathbf{0} \in \Lambda_N$ (hence finding the \emph{effective} condition number), the nonzero eigenvalues of $\mathbb L_N$ are $-\omega^T\Sigma\omega$ for $\omega \in \Lambda_N \setminus \{0\}$, since the Fourier symbol of $\partial_i\partial_j$ is $(i\omega_i)(i\omega_j)=-\omega_i\omega_j$. Since $\Sigma$ is positive definite, the Rayleigh quotient bound $\omega^T\Sigma\omega \geq \lambda_{\min}(\Sigma) \norm{\omega}^2 \geq \lambda_{\min}(\Sigma)$ holds for every nonzero integer vector $\omega$, giving
\begin{align}
\sigma_{\min}^+(\mathbb L_N) \geq \lambda_{\min}(\Sigma).
\end{align}
This bound is tight when $\Sigma$ is diagonal, attained at $\omega=e_{i^*}$ for the index $i^*$ minimizing $\sigma_{ii}$; for a general (non-diagonal) $\Sigma$, $\sigma_{\min}^+(\mathbb L_N)$ may exceed $\lambda_{\min}(\Sigma)$, so the resulting complexity bounds remain valid but are not always tight. As in the general case (\cref{thm:linear-pde-complexity}), the LCU normalization factor actually achieved by the weighted \texttt{Prep} circuit of \cref{fig:Laplacian-circuit} is $\alpha_{LCU} = \sum_{i=1}^d \sigma_{ii}\, N^2 = \mathrm{tr}(\Sigma)\, N^2 \leq \norm{\Sigma}_{1,1} N^2$, i.e.\ $|\mathcal J| = d$ terms rather than the single term $\sigma_{\max}(\mathbb L_N) = O(\norm{\Sigma}_{1,1} N^2)$ itself would suggest, raising the power of $d$ by one, absorbed into the exponent of $d$ in \eqref{eq:total-gate-count-poisson} and \eqref{eq:query-count-poisson}. The result now follows from \cref{thm:linear-pde-complexity}, using $1/\sigma_{\min}^+(\mathbb L_N) \leq 1/\lambda_{\min}(\Sigma)$.
\end{proof}

\begin{rmk}
As in \cref{rmk:ultimate-N-u}, \cref{cor:poisson} yields
\begin{equation}
\tilde{\mathcal O} \left(\frac{\norm{\Sigma}_{1, 1}}{\lambda_{\min}(\Sigma)} s^{2s} d^{4s+2} \log^{2s} \left(\frac{1}{\norm{u}\, \epsilon} \right) \log (\frac{1}\epsilon) \right)
\end{equation}
elementary gates and
\begin{equation}
\tilde{\mathcal O} \left(\frac{\norm{\Sigma}_{1, 1}}{\lambda_{\min}(\Sigma)} s^{2s} d^{4s+1} \log^{2s} \left(\frac{1}{\norm{u}\, \epsilon}\right) \log(\frac{1}\epsilon)\right)
\end{equation}
queries, respectively, in terms of the $\ell^2$-norm of the solution.
\end{rmk}

\begin{rmk}
\label{rmk:poisson-solvability}
In \cref{sec:diff-convolution} we show that $\mathbb L_N$ can be written as a convolution with a function $a_2 (x)$ defined in \eqref{eq:1d-kernel-def}. In \cref{prop:ak-sums-to-zero} we show that the sum of the grid-point values of $a_2$ is zero. Therefore $\mathbb L_N$ has a zero eigenvalue corresponding to the all-ones eigenvector, and since $\mathbb L_N$ is symmetric (as the discretization of a self-adjoint differential operator), $\ker(\mathbb L_N)=\ker(\mathbb L_N^T)=\mathrm{span}\{\mathbf 1\}$. As a result, the equation $\mathbb L_Nu_N=\eta_N$ has a solution if and only if $\eta_N$ is orthogonal to $\ker(\mathbb L_N^T)$, or equivalently if the sum of all entries of $\eta_N$ is zero; whenever this holds, the solution is unique only up to an additive multiple of $\mathbf 1$, and the minimum-norm solution recovered by \cref{alg:linear-pde} via the Moore--Penrose pseudo-inverse $\mathbb L_N^+$ is the unique such solution that is itself orthogonal to $\mathbf 1$, i.e., has zero discrete mean. This is consistent with the continuous domain picture, wherein due to the divergence theorem
\begin{equation}
\label{eq:divergence-theorem}
\int_{\mathbb T^d} \nabla \cdot (\Sigma \nabla u)\, dx = \int_{\partial \mathbb T^d} (\Sigma \nabla u) \cdot n\, dS = 0,
\end{equation}
and therefore the Poisson equation has a solution only if the source term has a zero mean, $\int_{\mathbb T^d} \eta = 0$; the physically natural (minimum-norm) solution is likewise the one with $\int_{\mathbb T^d} u = 0$.

This also lets us verify the $u_N \perp \ker(\mathbb L_N)$ hypothesis of \cref{thm:linear-pde-precision,cor:ultimate-N,thm:linear-pde-complexity}: since $u$ has zero continuous mean, $\hat u_0 = \int_{\mathbb T^d} u = 0$, and by the same aliasing-sum argument used in the proof of \cref{cor:ultimate-N} (there applied to $u^2$), the discrete mean $\langle u \rangle_{\Gamma_N}$, which is proportional to $\mathbf 1^\top u_N$, differs from $\hat u_0$ by an exponentially small correction of the same order as the other error terms in \cref{cor:poisson}. So $u_N \perp \ker(\mathbb L_N)$ holds up to a correction absorbed into those bounds.
\end{rmk}

\subsection{Atomistic simulation of periodic materials}
\label{sec:material-application}

Today, various electric, thermal, and optical properties of materials are studied by hundreds of millions of hours of HPC computations annually \cite{ornl2024assess}. Quantum chemistry in second quantization has been extensively studied as a promising application of quantum computers for reducing this computational burden. In such approaches, qubits are used to represent the occupation of particle spin-orbitals within the Born-Oppenheimer approximation; however, using quantum PDE solvers allows for representing continuous wavefunctions in qubit registers, unlocking quantum simulations in first quantization beyond the Born-Oppenheimer approximation \cite{pocrnic2026efficient}. Many of these classical materials-science simulations pertain to bulk properties, which are commonly modelled by imposing periodic boundary conditions on a many-particle quantum system in three dimensions. In this section we showcase how our framework can provide provable guarantees for the performance of quantum computation in such real-world applications.

Here we present an exact many-body wavefunction formulation; however, the single-particle Kohn--Sham (KS) or density-functional perturbation theoretic (DFPT) reduction, and a hierarchy of intermediate $k$-particle pipelines have been detailed in \cref{sec:near-term-pipelines}. Let $M \in\mathbb N$ denote the number of (distinguishable or indistinguishable) particles. Each particle has a position variable $x_i \in \mathbb T^3$ for $i=1,\dots, M$. The $M$-particle configuration is the tuple $x = (x_1, \dots, x_M) \in \mathbb T^{3M}$, and the exact many-body wavefunction is a complex-valued function $\Psi : \mathbb T^{3M} \to \mathbb C$ residing in $L^2(\mathbb T^{3M})$.

While ground state preparation requires solving homogeneous PDEs, studying non-equilibrium properties of the system such as its linear response relies on solving inhomogeneous equations with non-zero source terms \cite{baroni2001phonons, sakurai2020modern}. The source terms are determined by (i) the perturbation operator, (ii) the ground state at the chosen level, and (iii) the closure or functional choices that define the effective dynamics. A classical pipeline for computing linear response properties can be summarized as follows:
\begin{enumerate}[noitemsep, topsep=2pt]
\item Solve a homogeneous ground-state problem to define a reference state,
\item Linearize the governing equations around that reference state to obtain source terms,
\item Solve an inhomogeneous PDE (or coupled system) for the first-order response, and
\item Compute observables as functionals or bilinear forms involving the reference and response quantities.
\end{enumerate}
We therefore envisage an atomistic simulation pipeline using quantum computers wherein ground-state preparation algorithms (e.g., those based on adiabatic evolutions, phase estimation, or quantum singular value transformation) solving the Step 1 problem are used according to the prescription in Step 2 to construct source state oracles of the inhomogeneous PDEs of Step 3. The inhomogeneous PDEs are solved using a quantum PDE solver to obtain quantum states representing the solution. Finally, observable estimation using quantum mean estimation, quantum state tomography, or via classical postprocessing is used to extract quantities of interest in Step 4. Classical algorithms for solving the PDEs in the pipeline above scale exponentially poorly with respect to the number of particles. The question we now answer is whether the solutions have enough regularity (i.e., Gevrey smoothness) so that \cref{thm:linear-pde-complexity} provides an exponentially superior scaling.

\subsubsection{Full wavefunction formulation}

The time-independent many-body Hamiltonian is a self-adjoint operator
\begin{equation}
\label{eq:mb-hamiltonian}
H = \sum_{i=1}^{M}
\left( -\frac{1}2 \nabla_{x_i}^2 + V_{\mathrm{ext}}(x_i) \right)
+ \sum_{1\le i<j\le M} V_{\mathrm{int}}(x_i,x_j),
\end{equation}
written in atomic units. For nuclei located at positions $R_k \in \mathbb T^3$ with charges $Z_k>0$, the external potential $V_{\mathrm{ext}}: \mathbb T^3 \to \mathbb R$ is given by the electron--nucleus Coulomb attraction
\begin{equation}
\label{eq:v-ext}
V_{\mathrm{ext}}(x)
=
-\sum_{k=1}^{P}\frac{Z_k}{\|x - R_k\|},
\end{equation}
where $x\in\mathbb T^3$ denotes the position of a single electron and $P \in \mathbb N$ is the number of nuclei. The interaction potential
\begin{equation}
\label{eq:v-int}
V_{\mathrm{int}}(x,y)
=
\frac{1}{\|x - y\|},
\end{equation}
describes pairwise Coulomb repulsions of every pair of electrons in positions $x, y \in \mathbb T^3$.

\paragraph{Homogeneous Schr\"odinger equations.}

Stationary states satisfy
\begin{equation}
\label{eq:mb-eigenproblem}
H \Psi_n(x) = E_n \Psi_n(x),
\end{equation}
where $E_n\in\mathbb R$ is an eigenvalue and $\Psi_n$ is the corresponding eigenfunction, with the ground state pertaining to the eigenpair $(E_0,\Psi_0)$ with the smallest eigenvalue $E_0$.

\paragraph{Inhomogeneous Schr\"odinger equations for linear response.}

Let $V : \mathbb T^{3M} \to \mathbb R$ be a perturbation operator (typically a sum of one-body terms such as an external field coupling, but not required). Consider a perturbed Hamiltonian
\begin{equation}
H(\lambda) = H + \lambda V,
\end{equation}
where $\lambda\in\mathbb R$ is a small parameter. Assume the perturbed ground state and energy admit a first-order expansion
\begin{equation}
\Psi(\lambda) = \Psi_0 + \lambda\,\delta\Psi + O(\lambda^2),
\qquad
E(\lambda) = E_0 + \lambda\,\delta E + O(\lambda^2),
\end{equation}
where $\Psi_0$ is the unperturbed ground state from \eqref{eq:mb-eigenproblem},
$\delta\Psi$ is the first-order wavefunction correction, and $\delta E$ is the first-order
energy correction.

Inserting these expansions into $H(\lambda)\Psi(\lambda) = E(\lambda)\Psi(\lambda)$ and retaining
only $O(\lambda)$ terms yields the linearized equation
\begin{equation}
\label{eq:mb-linearized-raw}
(H - E_0)\,\delta\Psi = -\left(V - \delta E\right)\Psi_0.
\end{equation}
The scalar $\delta E$ can be found by taking the inner product of both sides of the above equation with $\Psi_0$, and using the self-adjointness of $H$ and the eigenvalue equation $H\Psi_0=E_0\Psi_0$:
\begin{equation}
\label{eq:mb-first-order-energy}
\delta E = \langle \Psi_0, V \Psi_0 \rangle.
\end{equation}
Substituting \eqref{eq:mb-first-order-energy} into \eqref{eq:mb-linearized-raw} yields the standard inhomogeneous Schr\"odinger equation
\begin{equation}
\label{eq:mb-linear-response}
(H - E_0)\,\delta\Psi
= \eta \qquad \text{with the source term} \qquad
\eta= -\left(V - \langle \Psi_0, V\Psi_0\rangle\right)\Psi_0.
\end{equation}

\paragraph{Orthogonality condition.}

The operator $(H-E_0)$ has $\Psi_0$ in its kernel, so \eqref{eq:mb-linear-response} determines $\delta\Psi$ only up to an additive multiple of $\Psi_0$; the physically meaningful choice is fixed by the gauge condition
$$
\langle \Psi_0, \delta\Psi\rangle = 0.
$$
This selects the unique solution in the subspace orthogonal to $\Psi_0$, assuming $(H-E_0)$ restricted to this subspace is invertible, i.e., that $E_0$ is non-degenerate. A solution exists at all only because the right-hand side of \eqref{eq:mb-linear-response} is itself orthogonal to $\Psi_0$ by the definition of $\delta E$. The QLSA of \cite{dalzell2024shortcut} automatically returns this gauge-fixed solution: it returns the minimum-Euclidean-norm solution to \eqref{eq:mb-linear-response}, which is automatically orthogonal to $\ker(H-E_0)=\mathrm{span}\{\Psi_0\}$, i.e., coincides with the gauge choice above.

\paragraph{Observable extraction from the response solution.}

An end-to-end pipeline will ultimately consume the prepared solution of the inhomogeneous equation \eqref{eq:mb-linear-response} to estimate changes in observables of interest. For a self-adjoint observable $O$ with ground-state expectation $\langle O\rangle_0 = \langle \Psi_0, O\Psi_0\rangle$, the first-order change induced by $V$ is computed from $\delta\Psi$ as
$$
\delta\langle O\rangle
= \langle \delta\Psi, O\Psi_0\rangle + \langle \Psi_0, O\delta\Psi\rangle
= 2\,\mathrm{Re}\,\langle \Psi_0, O\delta\Psi\rangle.
$$

\begin{rmk}
\label{rmk:fermionic-antisymmetry}
The Hamiltonian $H$ in \eqref{eq:mb-hamiltonian} is symmetric under every exchange of particle coordinates $x_i\leftrightarrow x_j$ and therefore commutes with the permutation group $S_M$; it carries no information about whether the particles are bosons or fermions. Since $V$ is likewise permutation-symmetric, the source term in \eqref{eq:mb-linear-response} inherits whatever exchange symmetry $\Psi_0$ has, and $(H-E_0)$ block-diagonalizes across the symmetry sectors of $S_M$, so the QLSA solution preserves that symmetry as well. Therefore, correctly targeting the fermionic (antisymmetric) sector relevant to electrons is the responsibility of the Step~1 ground-state solver (see \cref{sec:discussion} for further discussion).
\end{rmk}

\subsubsection{Desingularization of Coulombic Schr\"odinger Operators}
\label{sec:desingularization}

The many-body Coulombic Schr\"odinger operator \eqref{eq:mb-hamiltonian} specializes, for $M$ electrons and $P$ nuclei with charges $Z_k$ at fixed positions $R_k$, to
\begin{equation}
\label{eq:hamiltonian-au}
H = -\frac12 \sum_{i=1}^M \Delta_{x_i} + V(x),
\quad \text{ where } \quad
V(x) =
- \sum_{i=1}^M \sum_{k=1}^P \frac{Z_k}{r_{ik}}
+ \sum_{1 \le i < j \le M} \frac{1}{r_{ij}},
\end{equation}
using the notation
\begin{equation}
\label{eq:distances}
r_{ik} = \|x_i - R_k\|, \quad \text{and} \quad
r_{ij} = \|x_i - x_j\|
\end{equation}
for distances. This operator has singular coefficients on the electron--nucleus and electron--electron collision manifold
\begin{equation}
\label{eq:collision-set}
\mathscr S
= \big\{x_i = R_k\big\}_{i,k} \;\cup\; \big\{x_i = x_j\big\}_{i<j}.
\end{equation}
As a consequence, solutions of the stationary equation \eqref{eq:mb-linear-response} exhibit non-differentiable cusps across $\mathscr S$ albeit continuous \cite{kato1957eigenfunctions}. Moreover, the diagonal operators in $H$ are unbounded and singular, which creates challenges for their block encodings. We therefore adopt a \emph{mollification} strategy to tackle the singularities of the solution, as detailed below. An alternative approach uses the Jastrow transform, which offers a potential path to a higher-regularity ansatz for $\delta\Psi$ \cite{fournais2009analytic, ehrlacher2026cut}, but leaves the resulting operator with a condition number that can be as poor as $e^{\mathcal O(M^2+MPZ)}$.

As commonly done in classical planewave methods \cite{hall2009soft, grasselli2017variational, gebremedhin2014calculations}, we replace each singular interaction near the collision manifold $\mathscr S$ by a smooth approximation depending on a mollification parameter $\gamma>0$:
\begin{equation}
\label{eq:mollified-coulomb}
\frac{1}{\|x\|}
\;\longrightarrow\;
\frac{1}{\sqrt{\|x\|^2 + \gamma^2}}.
\end{equation}
Applying this to \eqref{eq:hamiltonian-au} yields the mollified Hamiltonian
\begin{equation}
\label{eq:hamiltonian-mollified}
H_\gamma = -\frac12 \sum_{i=1}^M \Delta_{x_i} + V_\gamma(x),
\quad \text{ where } \quad
V_\gamma(x) =
- \sum_{i=1}^M \sum_{k=1}^P \frac{Z_k}{\sqrt{r_{ik}^2+\gamma^2}}
+ \sum_{1 \le i < j \le M} \frac{1}{\sqrt{r_{ij}^2+\gamma^2}},
\end{equation}
and the mollified equation $(H_\gamma - E_0) \delta\Psi_\gamma = \eta$. Each term $1/\sqrt{r^2+\gamma^2}$, viewed as a function of the corresponding coordinate difference complexified into $\mathbb C^3$, has its nearest singularity at a complex distance $\gamma$ from the real domain. Hence each mollified term is real-analytic with the $1$-Gevrey convergence radius $r(\gamma) = \Theta(\gamma)$, which allows us to obtain the following complexity result.

\begin{cor}
\label{cor:mollified-transformed-pde}
\cref{alg:linear-pde}, applied to $(H_\gamma - E_0) \delta\Psi_\gamma = \eta$ involving $M$ electrons, $P$ nuclei with charge $Z=\max_k Z_k$, and mollification parameter $\gamma=\Theta((E_1-E_0)^2\epsilon^2/(M(P+M)))$, solves the original equation $(H-E_0)\delta\Psi=\eta$ to accuracy $\epsilon$, with $\mathcal O(1)$ success probability by returning a normalized state $\ket{\delta\Psi_{\gamma,N}}$ using
\begin{equation}
\label{eq:total-gate-count-mollified}
\tilde{\mathcal O}\!\left(\frac{M^7 (PZ+M)(P+M)^3}{\sigma_{\min}^+(\mathbb L_N)\,(E_1-E_0)^6\,\epsilon^6}\log^2\!\left(\frac1{\epsilon\ \|\delta\hat\Psi\|}\right)\right)
\end{equation}
elementary gates, $\mathcal O(M\log N)$ total number of qubits, and at most
\begin{equation}
\label{eq:query-count-mollified}
\tilde{\mathcal O}\!\left(\frac{M^6 (PZ+M)(P+M)^3}{\sigma_{\min}^+(\mathbb L_N)\,(E_1-E_0)^6\,\epsilon^6}\log^2\!\left(\frac1{\epsilon\ \|\delta\hat\Psi\|}\right)\right)
\end{equation}
queries to $U_{\eta_N}$, where $N=\Theta\!\left(\frac{M^2(P+M)}{(E_1-E_0)^2\epsilon^2}\log\!\left(\frac1{\epsilon\,\norm{\delta\hat\Psi}}\right)\right)$ and $\|\delta\hat\Psi\|$ is the norm of the Fourier coefficients of the true (unmollified) response wavefunction.
\end{cor}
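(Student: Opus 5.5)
The argument splits into two error budgets: the mollification error $\norm{\delta\Psi_\gamma-\delta\Psi}$ and the discretization/QLSA error for the mollified equation. Each will be bounded by $\epsilon/2$, and the resulting $\gamma$ and $N$ fed into \cref{thm:linear-pde-complexity} with $d=3M$, $s=1$, $\ceil{\mathcal J}=2$.

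\emph{Step 1: mollification error.} Subtract the two equations $(H_\gamma-E_0)\delta\Psi_\gamma=\eta$ and $(H-E_0)\delta\Psi=\eta$ to get
\[
(H_\gamma-E_0)(\delta\Psi_\gamma-\delta\Psi)=(V-V_\gamma)\,\delta\Psi .
\]
On the complement of the ground state, $(H_\gamma-E_0)$ is invertible with inverse norm $\lesssim 1/(E_1-E_0)$, assuming the mollified gap stays comparable to $E_1-E_0$. This gives
\[
\norm{\delta\Psi_\gamma-\delta\Psi}\lesssim \frac{\norm{(V-V_\gamma)\delta\Psi}}{E_1-E_0}.
\]
For a single Coulomb term we have $0\le 1/r-1/\sqrt{r^2+\gamma^2}\le \min(1/r,\gamma^2/r^3)$. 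Because $\delta\Psi$ is bounded near the collision set (Kato), integrating $|1/r-1/\sqrt{r^2+\gamma^2}|^2$ over $\mathbb R^3$ scales like $\gamma$, so each term contributes $O(\sqrt\gamma)$ in $L^2$.

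Summing the $M(P+M)$ pair terms, with the $Z_k$ weights entering via $PZ+M$, yields an error of order $\sqrt{\gamma\,M(P+M)}/(E_1-E_0)$. Setting this equal to $\epsilon$ recovers $\gamma=\Theta((E_1-E_0)^2\epsilon^2/(M(P+M)))$. A gap-perturbation estimate of the form $|E_j(H_\gamma)-E_j(H)|\lesssim\norm{(V-V_\gamma)\Psi_j}$ shows the gap is preserved for small $\gamma$. I would also propagate the normalization via \eqref{eq:distance-scaling} with $\norm{\delta\hat\Psi}$.

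\emph{Step 2: Gevrey regularity of $\delta\Psi_\gamma$.} With $V_\gamma$ analytic in a strip of width $\Theta(\gamma)$, I would argue via analytic elliptic regularity that $\delta\Psi_\gamma$ is $1$-Gevrey with radius $r=\Theta(\gamma)$. The argument would complexify $x\mapsto x+iy$ with $|y|<c\gamma$ and use that the conjugated operator $e^{-y\cdot\nabla}$ stays a bounded perturbation. This assumes $\eta$ and $\Psi_0$ are at least as regular, which I would take for the mollified ground state (or add as a hypothesis). The constant $C$ is bounded polynomially, and it only enters logarithmically.

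\emph{Step 3: plug into \cref{cor:ultimate-N} and \cref{thm:linear-pde-complexity}.} Use $d=3M$, $s=1$, $\ceil{\mathcal J}=2$, $|\mathcal J|=3M$ (Laplacian terms only, using the circuit of \cref{fig:Laplacian-circuit}), $r=\Theta(\gamma)$, and $\ceil{g}=O(M)$ for the kinetic coefficients. The diagonal $V_\gamma-E_0$ is an extra zeroth-order term whose sup norm is $O(M(PZ+M)/\gamma)$, and I would account for it as an extra LCU summand in the normalization.

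\cref{cor:ultimate-N} then gives $N=\Theta((d/r)\log(1/(\epsilon\norm{\delta\hat\Psi})))=\Theta\big(M^2(P+M)(E_1-E_0)^{-2}\epsilon^{-2}\log(\cdot)\big)$. The factor $\alpha_{LCU}\sim M N^2+M(PZ+M)/\gamma$ is dominated by $MN^2$, which gives the $(P+M)^2/\epsilon^4$ part. The extra $(PZ+M)(P+M)/\epsilon^2$ comes from the potential block-encoding normalization multiplying in.

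Collecting the powers produces the stated $\epsilon^{-6}$, $(E_1-E_0)^{-6}$ and $M$-dependence, with one extra factor of $d=3M$ in the gate count relative to the query count.

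\emph{Main obstacle.} Steps 1 and 3 are bookkeeping. The main obstacle is Step 2: rigorously showing that the \emph{solution} of the mollified equation is Gevrey-1 with radius $\Theta(\gamma)$ and a controlled $C$ uniform in $M$, since this needs analytic regularity of both $\Psi_0$ (mollified) and $\delta\Psi_\gamma$. I would attempt this through a complex-translation (Combes–Thomas style) argument on the resolvent restricted to $\Psi_0^\perp$.
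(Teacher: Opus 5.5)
Your Step 1 and your choice $N=\Theta((M/\gamma)L)$, with $L=\log(1/(\epsilon\norm{\delta\hat\Psi}))$, follow the paper's proof. The gap is in how Step 3 reaches the stated exponents. You write the LCU normalization as a sum, $\alpha_{LCU}\sim MN^2+M(PZ+M)/\gamma$, and then ``multiply in'' the potential's normalization. LCU normalizations add; they do not multiply. Taken literally ($MN^2\cdot\max_x|V_\gamma|$, times the $\tilde{\mathcal O}(M)$ per-call cost), your step would give $M^8$ gates, not $M^7$.

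The paper reaches the stated form by a different, cruder accounting. It declares the potential branch dominant and charges it $\ceil{g}(\gamma)N^2$, with $\ceil{g}(\gamma)=\max_x|V_\gamma|=\mathcal O(M(PZ+M)/\gamma)$. This is the termwise bound $\alpha_{LCU}\le|\mathcal J|\,\ceil{g}\,N^{\ceil{\mathcal J}}$ from the proof of \cref{thm:linear-pde-complexity}, and it gives $\alpha_{LCU}=\tilde{\mathcal O}(M^3(PZ+M)L^2/\gamma^3)$. Multiplying by the $\tilde{\mathcal O}(M)$ per-call cost and substituting $\gamma$ yields exactly $M^7(PZ+M)(P+M)^3/((E_1-E_0)^6\epsilon^6)$.

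Your additive expression is actually the sharper one: it is what the weighted \texttt{Prep} of \cref{fig:L-circuit} achieves, since the zeroth-order branch carries $N^0$. Carried through consistently, it gives $\tilde{\mathcal O}(M^6(P+M)^2L^2/(\sigma_{\min}^+(\mathbb L_N)(E_1-E_0)^4\epsilon^4))$ gates and one power of $M$ fewer for queries. These are smaller than the stated counts by a factor $\Theta((PZ+M)/\gamma)$. So the corollary does follow when $\gamma\lesssim PZ+M$, but by this domination argument, not by the multiplication you wrote.

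On Step 2, the paper does less than you propose. It simply asserts that the solution inherits the potential's radius $r(\gamma)=\Theta(\gamma)$, and elsewhere flags the relevant analyticity radius as open. Your complex-translation plan would not deliver $\Theta(\gamma)$, however. The shift leaves $-\tfrac12\Delta$ unchanged and replaces $V_\gamma$ by $V_\gamma(\cdot+iy)$. That is a bounded perturbation but not a small one: it differs from $V_\gamma$ by up to $|y|\,\norm{\nabla V_\gamma}_\infty\sim|y|\,M(PZ+M)/\gamma^2$. A Neumann-series (Combes--Thomas) resolvent bound needs this below $E_1-E_0$, which only allows $|y|\lesssim\gamma^2(E_1-E_0)/(M(PZ+M))$. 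That would change $N$ and every exponent.

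Your remark about the source is correct, and the paper does not address it. Since $\eta$ is built from the cusped, unmollified $\Psi_0$, $\delta\Psi_\gamma$ is only finitely smooth. A related issue affects your Step 1 as much as the paper's. With the unmollified $E_0$, the operator $H_\gamma-E_0$ has a small eigenvalue $\lambda_0(H_\gamma)-E_0\approx\langle\Psi_0,W_\gamma\Psi_0\rangle$ along the mollified ground state, where $W_\gamma=V_\gamma-V$. Generically, $W_\gamma\delta\Psi$ has a component of the same order in that direction. So your inverse bound $\lesssim 1/(E_1-E_0)$, like the paper's $\norm{(H_\gamma-E_0)^{-1}}=1/\sigma_{\min}^+(\gamma)$, does not control that direction. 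Rebuilding $\eta$ and $E_0$ from the mollified ground state repairs both problems. Step 1 must then also budget the resulting source and energy perturbations.
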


\begin{proof}
Following the coefficient-magnitude notation $\ceil{g} = \max_x\sum_\alpha|g_\alpha(x)|$ of \cref{thm:linear-pde-complexity}, summing over the $MP+\binom M2$ pairwise terms weighted by charges gives
\begin{equation}
\label{eq:mollified-ceil-g}
\ceil{g}(\gamma) = \max_x |V_\gamma(x)| = \mathcal O\!\left(\frac{M(PZ+M)}{\gamma}\right).
\end{equation}
Moreover, the kinetic part of $H_\gamma$ is the isotropic Laplacian, block-encoded via \cref{fig:Laplacian-circuit} as in \cref{cor:poisson} at cost $\tilde{\mathcal O}(M\log N)$, and $V_\gamma(x)$ is known analytically, so no external oracle is required for either term, and $\ceil{\mathcal J}=2$.

\begin{sloppypar}
The Gevrey radius $r(\gamma)=\Theta(\gamma)$ itself shrinks as $\epsilon\to0$, so all four terms of \eqref{eq:PDE-truncation-bound} share the common factor $1/r(\gamma)=\Theta(1/\gamma)$, and the norm-free terms $d/r,\,s(d+\ceil{\mathcal J})/r$ reduce (after this factor) to the $\epsilon$-independent quantity $\Theta(M)$, while the norm-dependent term reduces to $\Theta(\log(1/(\epsilon \|\delta\hat\Psi_\gamma\|)))$ (absorbing the $(2N+1)^{d/2}$ correction into $\tilde{\mathcal O}(\cdot)$ as in the proof of \cref{thm:linear-pde-complexity}, since it remains dominated by the exponential Gevrey decay). As $\epsilon\to0$ the latter grows unboundedly while the former stays fixed, so the norm-dependent term is in fact the dominant one, giving
\begin{equation}
\label{eq:mollified-N}
N = \Theta\!\left(\frac M\gamma\log\!\left(\frac1{\epsilon\, \|\delta\hat\Psi_\gamma\|}\right)\right).
\end{equation}
Writing $L:=\log(1/(\epsilon \|\delta\hat\Psi_\gamma\|))$, we have $N=\Theta(ML/\gamma)$. As in \cref{cor:poisson}, the kinetic branch's own LCU normalization is $\Theta(d)N^2=\Theta(M)N^2$ (with $d=3M$), which is dominated by the potential branch's contribution $\ceil{g}(\gamma)N^2=\Theta(M(PZ+M)/\gamma)N^2$ for all $\epsilon<1$, so $\alpha_{LCU}(\mathbb L_N)=\mathcal O(\ceil{g}(\gamma)N^2)=\tilde{\mathcal O}(M^3(PZ+M)L^2/\gamma^3)$. Multiplying the per-query block-encoding cost $\tilde{\mathcal O}(M)$ by the QLSA iteration count $\mathcal O(\kappa_{\mathbb L_N}\log(1/\epsilon))$, with $\kappa_{\mathbb L_N}=\alpha_{LCU}(\mathbb L_N)/\sigma_{\min}^+(\mathbb L_N)$, gives an intermediate gate count $\tilde{\mathcal O}(M^4(PZ+M)L^2/(\sigma_{\min}^+(\mathbb L_N)\,\gamma^3))$; the query count follows analogously by omitting the block-encoding factor.
\end{sloppypar}

We leave the discrete $\sigma_{\min}^+(\mathbb L_N)$ symbolic in the final bounds, and \cref{rmk:gap-lower-bound} discusses its relationship to the continuous-level $\sigma_{\min}^+(\gamma) = \Theta(E_1-E_0)$ established there. It remains to resolve $\gamma$ in terms of $\epsilon$ and the spectral gap $E_1 - E_0$ of the steady-state eigenvalue problem \eqref{eq:mb-eigenproblem}, which we achieve using the resolvent identity. Writing $W_\gamma := V_\gamma - V = H_\gamma - H$, the true and mollified solutions satisfy $\delta\Psi = (H-E_0)^{-1}\eta$ and $\delta\Psi_\gamma = (H_\gamma-E_0)^{-1}\eta$, so
\begin{equation}
\delta\Psi_\gamma - \delta\Psi = (H_\gamma-E_0)^{-1}(H-H_\gamma)(H-E_0)^{-1}\eta = -(H_\gamma-E_0)^{-1}W_\gamma\delta\Psi,
\end{equation}
giving $\norm{\delta\Psi_\gamma-\delta\Psi} \le \norm{W_\gamma\delta\Psi}/\sigma_{\min}^+(\gamma)$, since $\norm{(H_\gamma-E_0)^{-1}}=1/\sigma_{\min}^+(\gamma)$. For a single Coulomb pair at radial separation $r$, substituting $r=\gamma t$ gives
\begin{equation}
\int_0^{R} \left(\frac1{\sqrt{r^2+\gamma^2}}-\frac1r\right)^2 r^2\,dr
= \gamma \int_0^{R/\gamma}\left(\frac1{\sqrt{t^2+1}}-\frac1t\right)^2 t^2\,dt
\xrightarrow{\gamma\to0} \gamma\cdot I_0,
\end{equation}
where $I_0 = \int_0^\infty\left(\frac1{\sqrt{t^2+1}}-\frac1t\right)^2t^2\,dt < \infty$ is a universal constant (the integrand is $\Theta(1)$ near $t=0$ and $\Theta(1/t^4)$ as $t\to\infty$, both integrable).

Since the exact wavefunction remains bounded and nonzero on the collision manifold, each of the $\mathcal O(MP+M^2)$ terms in $W_\gamma$ contributes $\Theta(\gamma)$ to $\norm{W_\gamma\delta\Psi}_{L^2}^2$, giving $\norm{W_\gamma \delta\Psi} = \mathcal O(\sqrt{M(P+M)\,\gamma})$, and hence, using $\sigma_{\min}^+(\gamma) = \Theta(E_1-E_0)$ for sufficiently small $\gamma$,
\begin{equation}
\norm{\delta\Psi_\gamma - \delta\Psi} = \mathcal O\!\left(\frac{\sqrt{M(P+M)\,\gamma}}{E_1-E_0}\right).
\end{equation}
Requiring this to be $\mathcal O(\epsilon)$ gives $\gamma = \Theta((E_1-E_0)^2\epsilon^2/(M(P+M)))$. Moreover, by the reverse triangle inequality $\|\delta\hat\Psi_\gamma\|=\|\delta\hat\Psi\|+\mathcal O(\epsilon)$, and $L$ may equivalently be written in terms of the $\gamma$-independent norm of the true response wavefunction. Substituting $\gamma$ into the intermediate gate and query counts above, the $(E_1-E_0)$-dependence from $\gamma^3$ combines to the power $6$ in the denominator, proving the stated bounds.
\end{proof}

Although the final complexity is $\poly(1/\epsilon)$ rather than $\polylog(1/\epsilon)$, it remains polynomial, not exponential, in $M$ and $P$, so mollification avoids the curse of dimensionality afflicting classical grid-based methods. Moreover, an entire hierarchy of $O(k)$-fold polynomial quantum speedups can be obtained from constructing an analogous pipeline of PDEs using the $k$-particle reduced density matrices ($k$-RDMs) as explained in \cref{sec:k-particle-pipeline}.

\begin{rmk}
\label{rmk:gap-lower-bound}
The claim $\sigma_{\min}^+(\gamma)=\Theta(E_1-E_0)$ used above is not immediate, since $W_\gamma=V_\gamma-V$ is unbounded as an operator (it diverges pointwise as $r\to0$, just as $V$ does), so Weyl's inequality \cite{weyl1912asymptotische} does not apply. Instead, we may use the known eigenfunctions $\Psi_0,\Psi_1$ of $H$ as variational trial states. By the min-max principle, $\lambda_0(H_\gamma)\le\langle\Psi_0,H_\gamma\Psi_0\rangle=E_0+\langle\Psi_0,W_\gamma\Psi_0\rangle$, and using the trial subspace $\mathrm{span}(\Psi_0,\Psi_1)$ in the Courant--Fischer characterization \cite[Ch.~VI]{courant1953methods} of the second eigenvalue gives $\lambda_1(H_\gamma)\le E_1+\eta(\gamma)$, where $\eta(\gamma):=\max_{i,j\in\{0,1\}}|\langle\Psi_i,W_\gamma\Psi_j\rangle|$.

Relying again on the boundedness of the wavefunction on the collision manifold, $|\langle\Psi_i,W_\gamma\Psi_j\rangle|\le\norm{\Psi_i}_\infty\norm{\Psi_j}_\infty\int|W_\gamma(x)|\,dx$. Unlike the $L^2$-weighted bias computation above, this $L^1$-type integral is logarithmically divergent at large $r$: substituting $r=\gamma t$,
\begin{equation}
\int_0^R\left|\frac1{\sqrt{r^2+\gamma^2}}-\frac1r\right|r^2\,dr = \gamma^2\int_0^{R/\gamma}\left|\frac1{\sqrt{t^2+1}}-\frac1t\right|t^2\,dt = \Theta(\gamma^2\log(1/\gamma)),
\end{equation}
since the integrand decays only as $\Theta(1/t)$ for large $t$. Summing over all $\mathcal O(MP+M^2)$ terms gives $\eta(\gamma)=\mathcal O(M(PZ+M)\,\gamma^2\log(1/\gamma))$, and hence $\lambda_0(H_\gamma)\le E_0+\eta(\gamma)$ and $\lambda_1(H_\gamma)\le E_1+\eta(\gamma)$.

A lower bound on the gap $\lambda_1(H_\gamma)-\lambda_0(H_\gamma)$, requires the two-level effective-Hamiltonian (Feshbach--Schur \cite[Sec.~II]{bach1998renormalization}) reduction of $H_\gamma$ onto $\mathrm{span}(\Psi_0,\Psi_1)$, accounting for coupling to the rest of the spectrum via the Schur complement. Combining the two bounds gives,
\begin{equation}
\sigma_{\min}^+(\gamma) = (E_1-E_0)\left(1+\mathcal O\!\left(\frac{M(PZ+M)\,\gamma^2\log(1/\gamma)}{E_1-E_0}\right)\right),
\end{equation}
which is $\Theta(E_1-E_0)$ provided $\gamma$ is small enough that the correction term is $o(1)$ which is satisfied by our choice $\gamma=\Theta((E_1-E_0)^2\epsilon^2/(M(P+M)))$ for sufficiently small $\epsilon$. This bounds the gap of the continuous operator $H_\gamma$.
\end{rmk}

\begin{rmk}
Despite the preceding analysis, \cref{cor:mollified-transformed-pde} states its final bounds directly in terms of the discrete $\sigma_{\min}^+(\mathbb L_N)$, since whether it tracks the continuous-level value is not fully established. Since $H_\gamma - E_0$ is self-adjoint and bounded below, it has compact resolvent on the torus, and the planewave truncation $\mathbb L_N = P_N (H_\gamma-E_0) P_N$ coincides with the Rayleigh--Ritz projection \cite[Ch.~VI]{courant1953methods} onto degree-$N$ trigonometric polynomials, which are dense in $H^1(\mathbb T^{3M})$. Consequently each eigenvalue of $\mathbb L_N$ converges monotonically, from above, to the corresponding eigenvalue of $H_\gamma-E_0$, so $\sigma_{\min}^+(\mathbb L_N)\to\sigma_{\min}^+(\gamma)$ as $N\to\infty$ as we desired. This, however, is not enough to resolve $\sigma_{\min}^+(\mathbb L_N)$ in \cref{cor:mollified-transformed-pde}: doing so would additionally require a convergence rate in $N$, which in turn needs the analyticity radius of the eigenfunctions $\Psi_0,\Psi_1$ themselves, not simply inherited from that of $V_\gamma$, since analytic-elliptic-regularity estimates degrade with $\norm{V_\gamma}=\Theta(M(PZ+M)/\gamma)$; establishing this remains open (see \cref{sec:discussion}).
\end{rmk}

\section{Discussion}
\label{sec:discussion}

Our complexity bounds in \cref{sec:results} are stated in terms of $\sigma_{\min}^+$, the smallest nonzero singular value of the discretized operator $\mathbb L_N$. On a periodic domain, an elliptic operator $\mathcal L = \sum_{\alpha\in \mathcal J}g_{\alpha}(x)D^{\alpha}$ of order $\ceil{\mathcal J}$, meaning its leading symbol
\begin{equation}
p(x,\xi) = \sum_{\substack{\alpha\in\mathcal J\\ |\alpha|=\ceil{\mathcal J}}} g_\alpha(x)\,(i\xi)^\alpha
\end{equation}
never vanishes for real $\xi\neq0$ \cite{evans2010pde}, has a compact resolvent and hence a discrete spectrum, giving a well-defined $\sigma_{\min}^+(\mathcal L)>0$ for the continuous operator. Ellipticity rules out the leading-order behavior degenerating at high frequencies. However, we do not know general conditions that guarantee $\sigma_{\min}^+(\mathbb L_N)$ to converge to (or stay bounded near) $\sigma_{\min}^+(\mathcal L)$ as $N\to\infty$ at a useful rate. For self-adjoint operators this convergence follows from the Rayleigh--Ritz method, although this does not provide a rate of convergence, which we need for choosing $N$. Nevertheless, for constant-coefficient operators like the Laplacian the tracking is actually exact for every $N$ (\cref{cor:poisson}).

A related question is the role of the kernel of $\mathcal L$ (or $\mathbb L_N$), which enters our bounds only through $\sigma_{\min}^+$; the $\mathcal O(1)$ success probability of \cref{thm:linear-pde-complexity} requires $\eta_N$ to have no component in $\ker(\mathbb L_N^T)$. We use the QLSA of \cite{dalzell2024shortcut}, which, unlike QLSAs that assume an invertible input matrix, works for matrices with a nontrivial kernel: it returns the minimum-Euclidean-norm (Moore--Penrose pseudo-inverse) solution, and its condition number is defined via $\sigma_{\min}^+$, the smallest nonzero singular value of the matrix restricted to the orthogonal complement of its kernel, matching our definition of $\kappa$. Whenever the source term $\eta$ has no component in $\ker(\mathcal L)$ at the continuous level, i.e., the Fredholm solvability condition, this guarantees a solution to the discretized equation exists, whose minimum-norm representative is what the algorithm recovers. For the Poisson equation this correspondence is exact: the kernel is one-dimensional (\cref{rmk:poisson-solvability}) regardless of $N$ or $d$, and continuous zero-mean solvability translates directly into the discrete condition. For the Schr\"odinger linear-response equation, the continuous kernel is likewise one-dimensional, spanned by $\Psi_0$, provided the ground state is non-degenerate; self-adjointness of $H_\gamma-E_0$ guarantees this persists in the discrete limit $N\to\infty$ (\cref{rmk:gap-lower-bound}), but confirming it already holds at the specific polynomial $N$ we use is something we assume rather than prove. Establishing this quantitative rate, or finding general conditions under which continuous-level solvability implies the discrete one for non-self-adjoint $\mathcal L$, remains open.

Another important implication of ellipticity is that it guarantees that the linear operator generates a diffusion process with non-negative transition probabilities using the Feynman-Kac formula \cite{kakutani1944two, muller1956some, sawhney2020monte, martin2022solving}, so the PDE can be solved using a random walk. Each Monte-Carlo sample of this walk costs only $\mathcal O(d)$ to produce, with no curse of dimensionality, and the central limit theorem guarantees that $\mathcal O(1/\epsilon^2)$ samples suffice to estimate any observable of the solution. The many-body Schr\"odinger operatr of \cref{sec:material-application} is elliptic, since the leading order term is the isotropic Laplacian. However, the target is not sign-definite for fermionic systems (\cref{rmk:fermionic-antisymmetry}), so sign cancellation still inflates the estimator's variance exponentially in the system size, which is the origin of the NP-hardness of the sign problem \cite{troyer2005computational}. Therefore, for fermionic systems our advantage over both the grid-based and Monte Carlo classical methods is exponential in the system size (\cref{cor:mollified-transformed-pde}).

On the other hand, if the target is sign-definite (i.e., in bosonic statistics), the number of samples needed to resolve the target is also independent of $d$, and walk-on-spheres-type Monte Carlo methods solve the PDE classically in $\mathcal O(\poly(d)/\epsilon^2)$. In this sign-definite regime, our advantage over grid-based classical methods is still an exponentially superior dimension scaling (Table~\ref{tab:classical-pde-methods}); but against Monte Carlo methods it is in the precision scaling, achieving $\mathrm{polylog}(1/\epsilon)$ state-preparation infidelity rather than Monte Carlo's $\mathcal O(1/\epsilon^2)$. However, this precision advantage reduces to amplitude estimation's quadratic speedup if the end-to-end task is estimating a scalar observable.

We conclude by outlining several further applications of this framework. In \cite{olivucci2026provable} some of the authors of this paper use the machinery developed here to obtain a provable quantum--classical separation in continuous-domain Gibbs sampling. Another application we have been exploring is torsion-angle dynamics in molecular conformation search \cite{robert2021resource}, where the inherently periodic dihedral-angle coordinates fit naturally into our periodic-domain setting. Periodic boundary conditions are also natural in lattice gauge theory, where the Gauss-law constraint is commonly imposed on a torus to avoid boundary artifacts \cite{sharma2024onedimensional}. The classical XY Hamiltonian, and its generalization to the $n$-vector model, is governed in its low-temperature spin-wave limit by the same class of linear elliptic PDEs \cite{kosterlitz1973ordering}. Other natural candidates for our framework include the linearized Poisson--Boltzmann equation of electrostatics \cite{kweyu2017fast}, and the (linear) Stokes equations of viscous fluid flow \cite{temam2001navier}, though the latter would require extending our scalar framework to coupled vector-valued systems with a divergence constraint. Another natural extension is to time-dependent PDEs. One route is the method of lines \cite{schiesser1991numerical}: discretizing only the spatial variables with our Fourier techniques while leaving time continuous reduces the problem to a linear system of ordinary differential equations, which would also remove the need to assume periodicity in time. Relatedly, a recent independent work develops a periodic Gevrey-extension approach for simulating slow analytic time-dependent Hamiltonians \cite{zhao2026quantum}, suggesting our Fourier-Gevrey framework may extend naturally to the time-dependent setting.

\section*{Acknowledgement}
\label{sec:ack}

Authors thank Anders~Blom, William~Kim, Jiaqi~Leng, Arsalan~Motamedi, Enrico~Olivucci, Soren~Smidstrup, and Matthias~Troyer for useful technical conversations. This work is supported by NSERC Discovery grant RGPIN-2022-03339, and the Quantum Computing Challenge Program AQC-206 at the National Research Council of Canada (NRC). P.~R.~further acknowledges the financial support of Mike and Ophelia Lazaridis, Innovation, Science and Economic Development Canada (ISED), 1QB Information Technologies (1QBit), the Perimeter Institute for Theoretical Physics, and the Province of Ontario through the Ministry of Colleges and Universities.

\bibliography{main}

\pagebreak
\appendix
\crefalias{section}{appendix}
\crefalias{subsection}{appendix}
\crefalias{subsubsection}{appendix}
\counterwithin{thm}{section}
\counterwithin{lem}{section}
\counterwithin{cor}{section}
\counterwithin{prop}{section}
\counterwithin{defn}{section}
\counterwithin{rmk}{section}
\counterwithin{exm}{section}

\section{The Fourier method}
\label{sec:fourier}

\subsection{Fourier transform}
\label{sub:fourier-transform}

We assume a $2\pi$ period for all multivariate functions in this work with respect to all their variables for a clear exposition and without a loss of generality. So, the domain of definition of a $d$-variate function is the flat torus $\mathbb \mathbb T^d= \mathbb R^d / 2 \pi \mathbb Z^d$. The frequency domain is then $\mathbb Z^d$. For all positive integers $N$, we consider two dual grids. Firstly, the frequency grid
\begin{equation}
\label{dft-lattice-main}
\Lambda_N= \{\omega\in \mathbb Z^d: \norm{\omega}_\infty \leq N \}
\end{equation}
with $(2N + 1)^d$ points. And secondly, a discrete spatial grid with the same number of points:
\begin{equation}
\Gamma_N= \left\{\frac{2\pi n}{2N+1}: n \in \mathbb Z, |n|\leq N \right\}^d.
\end{equation}
The discretization of $f$ on $\Gamma_N$ results in a state vector
\begin{equation}
\label{eq:disc-func-state}
\ket{f_{N}} = \sum_{x\in \Gamma_N} f(x) \ket{x} \in \mathcal{H}_N
\end{equation}
where $\mathcal H_N= \mathbb C^{\Gamma_N}$ is the Hilbert space of dimension $(2N+1)^d$ indexed by the grid points of $\Gamma_N$.

The \emph{Fourier transform} of $f$, denoted by $f \mapsto \hat f$, provides the \emph{Fourier expansion} of it
\begin{equation}
\label{eq:fourier-expansion}
f(x)= \sum_{\omega\in\mathbb{Z}^d} \hat f_\omega \,
e^{i\langle \omega, x\rangle}
\end{equation}
as the sum of the \emph{Fourier coefficients}
\begin{equation}
\label{eq:fourier-coefficients}
\hat f_\omega = \frac{1}{(2\pi)^d}\int_{\mathbb T^d}
f(x) e^{-i\langle \omega,x\rangle} \, dx= \mathbb E_{\mathbb T^d} [f(x) e^{-i \langle \omega, x \rangle}].
\end{equation}
In contrast, the \emph{discrete} Fourier transform coefficients are
\begin{equation}
\label{eq:discrete-fourier-coefficients}
\tilde f_\omega= \mathbb E_{\Gamma_N} [f(x) e^{-i \langle \omega, x\rangle}], \quad \forall\, \omega \in \Lambda_N.
\end{equation}
They can be obtained from applying a $d$-fold QFT to $\ket{f_N}$:
\begin{equation}
F_N^{\otimes d} \ket{f_N}
=| \tilde f_N \rangle:= (2N+1)^{d/2} \sum_{\omega\in \Lambda_N} \tilde f_\omega\,\ket{\omega}.
\end{equation}

\subsection{Smoothness and the decay of Fourier coefficients}
\label{sub:smoothness-vs-decay}

It is easy to see by using \eqref{eq:fourier-expansion} and \eqref{eq:discrete-fourier-coefficients} that the continuous and discrete Fourier coefficients satisfy the following relation, known as the aliasing sum:
\begin{equation}
\label{eq:aliasing-sum}
\tilde f_\omega= \sum_{m \in \mathbb Z^d} \hat f_{\omega + (2N+1)m}.
\end{equation}
To represent the function $f$ accurately in the Fourier domain using a quantum state we seek for an approximation $|\hat f_N\rangle \simeq |\tilde f_N\rangle$. This requires a fast decay on the higher frequency Fourier coefficients, so that $\hat f_\omega$ dominates the right hand sum in \eqref{eq:aliasing-sum}. In this section we will relate the degree to which $f$ is ``smooth'' to the rate of decay of Fourier coefficients.

First, we introduce some preliminaries. For a $d$-tuple of non-negative integers $\alpha = (\alpha_1, \cdots, \alpha_d) \in\mathbb{Z}_{\geq 0}^d$ we define $\alpha! := \alpha_1!\cdots \alpha_d!$, and $|\alpha| := \alpha_1 + \cdots + \alpha_d$, and use the following notation for higher order derivatives:
\begin{align}
D^\alpha := \frac{\partial^{|\alpha|}}{\partial x_1^{\alpha_1}
\cdots \partial x_d^{\alpha_d}}.
\end{align}
Recall that imposing the condition $\left|D^{\alpha}f (x_0)\right| \leq \alpha! / r^{|\alpha|}$ on a multivariate function $f$ guarantees the convergence of the Taylor expansion in the box $\prod_{i=1}^d (x_{0,i}-r,x_{0,i}+r)$, where $x_{0,i}$ denotes the $i$-th component of $x_0$.

Note that the Fourier transform of $D^\alpha f$ has Fourier coefficients $(i\omega_1)^{\alpha_1} \cdots (i\omega_d)^{\alpha_d} \hat f_\omega$ which we shorten to the notation $(i\omega)^\alpha \hat f_\omega$. To study the decay rate of $\hat f_\omega$ we repeatedly take advantage of the relation
\begin{equation}
\label{eq:fourier-diff}
\hat{(D^\alpha f)}_\omega = (i\omega)^\alpha \,\hat f_\omega.
\end{equation}

\begin{prop}
\label{prop:all-decay-rates}
If $f$ is merely continuous $\hat f_\omega \to 0$ as $\norm{\omega}_\infty \to \infty$. However, the decay rates in the second column of \cref{tab:fourier-tailedness} hold for differentiable functions.
\end{prop}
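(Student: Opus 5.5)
The continuous case is the multivariate Riemann--Lebesgue lemma. I would approximate $f$ uniformly on $\mathbb T^d$ by a trigonometric polynomial $P$; this is possible by Stone--Weierstrass, or by Fejér means in each variable. Take $\norm{f-P}_\infty<\delta$. Then for $\norm{\omega}_\infty$ larger than the degree of $P$ we have $|\hat f_\omega| = |\widehat{(f-P)}_\omega| \le \norm{f-P}_\infty<\delta$, since $|\hat g_\omega|\le\norm g_\infty$ for every $g$ and $\omega$.

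For the differentiable classes the basic tool is \eqref{eq:fourier-diff}. Given $\omega\neq0$, pick a coordinate $j$ with $|\omega_j|=\norm{\omega}_\infty$ and apply \eqref{eq:fourier-diff} with the pure multi-index $\alpha=k e_j$. This gives
\[
|\hat f_\omega| = \frac{|\widehat{(\partial_j^k f)}_\omega|}{\norm{\omega}_\infty^{k}} \le \frac{\norm{\partial_j^k f}_\infty}{\norm{\omega}_\infty^{k}}.
\]
The cases then go as follows.
\begin{itemize}
\item For $\mathcal C^p$, take $k=p$ and $C=\max_j\norm{\partial_j^p f}_\infty$.
\item For $\mathcal C^{p,q}$, additionally write $\hat g_\omega$ with $g=\partial_j^p f$ as $\frac12\mathbb E[(g(x)-g(x+\pi e_j/\omega_j))e^{-i\langle\omega,x\rangle}]$. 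The half-period shift flips the sign of the exponential, so Hölder continuity yields an extra factor $\norm{\omega}_\infty^{-q}$.
\item For $\mathcal C^\infty$, super-polynomial decay follows because the bound holds for every $k$.
\end{itemize}

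The Gevrey classes are the main step. Gevrey regularity gives $|\hat f_\omega|\le C\,(k!)^s/(r\norm{\omega}_\infty)^k$ for every $k\ge0$. I would optimize over $k$. Using $k!\le k^k$, the bound becomes $C\,\big(k^s/(r\norm\omega_\infty)\big)^k$. Choose $k=\lfloor (r\norm\omega_\infty)^{1/s}/e\rfloor$, so that $k^s/(r\norm\omega_\infty)\le e^{-s}$. This yields $|\hat f_\omega|\le C' e^{-(s/e)(r\norm\omega_\infty)^{1/s}}$. That has the form $Ce^{-r'\norm\omega_\infty^{1/s}}$ with a rescaled rate $r'=(s/e)r^{1/s}$.

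The main obstacle is bookkeeping rather than ideas. The constants $r$ and $C$ in \cref{tab:fourier-tailedness} are not literally the same as in the derivative bound. The floor in $k$ costs at most a constant factor, which goes into $C$. I would state the result up to renaming $(C,r)$, consistent with \cref{rmk:sigma-to-r-subtlety}.

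The analytic case is $s=1$. Alternatively, it can be done directly: shift the contour in the Fourier integral to $\mathrm{Im}\,x_j=\mp\sigma\,\mathrm{sgn}(\omega_j)$ for any $\sigma<r$, which gives $|\hat f_\omega|\le \sup_{|\mathrm{Im} z|\le\sigma}|f|\,e^{-\sigma|\omega_j|}$.

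For $0<s<1$ the same optimization applies. The finite-bandwidth case $s=0$ follows because the bound $C/(r\norm\omega_\infty)^k$ holds for all $k$ and tends to $0$ when $r\norm\omega_\infty>1$. Hence $\hat f_\omega=0$ for $\norm\omega_\infty>1/r$.

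For entire functions, the contour shift is allowed for every $\sigma$, which gives decay faster than any exponential.
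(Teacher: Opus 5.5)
Your proposal is correct and follows the paper's proof case by case: density of trigonometric polynomials for Riemann--Lebesgue, the multiplier identity \eqref{eq:fourier-diff} with a multi-index concentrated on the largest coordinate of $\omega$, the half-period shift for H\"older, optimizing the derivative order for Gevrey, and contour shifting for analytic and entire functions. Two refinements over the paper are worth keeping: your explicit choice $k=\lfloor (r\norm{\omega}_\infty)^{1/s}/e\rfloor$ with $k!\le k^k$ makes the Gevrey step non-asymptotic where the paper appeals to Stirling, with the rescaling of $(C,r)$ tracked correctly, and your $s=0$ argument (sending $k\to\infty$ in $C/(r\norm{\omega}_\infty)^k$) is a more elementary substitute for the paper's holomorphic-extension contour shift.
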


\begin{proof}

\noindent\underline{Case 1}. Continuous functions $f \in \mathcal C^0$

Continuity does not establish any decay rate, however it is sufficient for showing that $\hat f_\omega\to 0$ as $\norm{\omega}_\infty\to\infty$. This is also known as the Riemann--Lebesgue lemma \cite{bochner1949fourier}. One path to a proof is to observe that trigonometric polynomials are dense in $L^1(\mathbb T^d)$, so for every $\delta>0$ there exists a trigonometric polynomial $P$ such that $\norm{f-P}_1<\delta$.
Then
\begin{equation}
|\hat f_\omega-\hat P_\omega|
\leq (2\pi)^{-d} \norm{f-P}_1 < C\delta.
\end{equation}
But $\hat P_\omega$ vanishes for large $|\omega|$ because $P$ has finitely many modes.
Therefore
\begin{equation}
|\hat f_\omega| \leq C\delta \quad\text{for all large }|\omega|.
\end{equation}
Since $\delta$ is arbitrary, $\hat f_\omega\to 0$.

\noindent\underline{Case 2}. Functions with $p$ continuous derivatives $f \in \mathcal C^p$

For any index $\alpha$ such that $|\alpha| \leq p$, since $D^\alpha f$ is continuous, hence integrable, we have
\begin{equation}
|\omega^\alpha|\,|\hat f_\omega|
= |\hat{(D^\alpha f)}_\omega|
\leq \norm{D^\alpha f}_1.
\label{eq:apply-fourier-formula}
\end{equation}
Summing over all $|\alpha|\leq p$ results in
\begin{equation}
(1 + \norm{\omega}_\infty^p)|\hat f_\omega|
\leq \sum_{|\alpha|\le p} \|D^\alpha f\|_{L^1},
\end{equation}
where we have used the bound
\begin{equation}
\sum_{|\alpha|\le p} |\omega^\alpha|
\geq 1 + \sum_{j=1}^d |\omega_j|^p
\geq 1 + \max_{1\le j\le d} |\omega_j|^p
= 1 + \norm{\omega}_\infty^p.
\label{eq:basic-lower}
\end{equation}
We can now conclude that
\begin{equation}
|\hat f_\omega|
\leq C (1 + \norm{\omega}_\infty^p)^{-1}
= O(\norm{\omega}_\infty^{-p}).
\end{equation}

\noindent\underline{Case 3.} The H\"older class $f \in \mathcal C^{p,q}$ for $p$-smooth functions such that
\begin{equation}
|D^\alpha f(x) - D^\alpha f(y)| \le L |x-y|^q,
\qquad \forall\, x, y \in \mathbb T^d \text{ and } \forall\, \alpha: |\alpha|=p
\end{equation}

First consider the single-variate case. Let $g\in \mathcal C^{0,q}(\mathbb T^1)$. Then
\begin{equation}
|\hat g_\omega| \leq C|\omega|^{-q},
\end{equation}
because for $h=\pi/\omega$ we have
\begin{equation}
\hat g_\omega
= \tfrac{1}{4\pi}\int_0^{2\pi} (g(x)-g(x-h)) e^{-i\omega x} dx.
\end{equation}
Then we use $|g(x)-g(x-h)|\leq L|h|^q$. This also implies that for the $d$-variate case $g \in \mathcal C^{0, q}$
\begin{equation}
|\hat g_\omega| \leq C\norm{\omega}_\infty^{-q}.
\end{equation}

Now for $f \in \mathcal C^{p, q}$ we have $g = D^\alpha f \in \mathcal C^{0,q}$. Therefore, $|\hat g_\omega| \leq C\norm{\omega}_\infty^{-q}$. But $\hat g_\omega=(i\omega)^\alpha\hat f_\omega$, so
\begin{equation}
|\omega^\alpha||\hat f_\omega| \leq C\norm{\omega}^{-q}.
\end{equation}
We can now sum over all $\alpha$ with $|\alpha|\leq p$ and repeat the argument of the previous case to conclude that $|\hat f_\omega|=O(\norm{\omega}_\infty^{-p-q})$.

\noindent\underline{Case 4.} Smooth functions $f \in \mathcal C^\infty$

This case is obvious since $f \in \mathcal C^p$ for all choices of $p$.

\noindent\underline{Case 5}. The Gevrey class ($s>1$) wherein $f \in \mathcal C^\infty$ but additionally
\begin{equation}
\|D^\alpha f\|_\infty \leq C (\alpha!)^s/r^{|\alpha|}, \quad \forall\, \alpha \in \mathbb Z_{\geq 0}^d
\end{equation}

Fix $\omega\neq 0$. Let $m\in\mathbb{N}$ be chosen later and pick a multi-index $\alpha$ with $|\alpha|=m$ concentrated in the coordinate where $|\omega_j|$ is maximal, so that $|\omega^\alpha|= \norm{\omega}_\infty^m$. Then using $\hat{D^\alpha f}_\omega=(i\omega)^\alpha \hat f_\omega$ and $|\hat{D^\alpha f}_\omega| \leq \norm{D^\alpha f}_1$ and
$\|D^\alpha f\|_1 \leq (2\pi)^d\|D^\alpha f\|_\infty$, we have
\begin{equation}
|\hat f_\omega|
\leq (2\pi)^d C \left(r \norm{\omega}_\infty\right)^{-m} (m!)^s.
\end{equation}
Equivalently,
\begin{equation}
\label{eq:gevrey-log-bound}
\log|\hat f_\omega|
\lesssim -m\log (r\norm{\omega}_\infty) + s(m\log m - m),
\end{equation}
where we have used the Stirling's approximation $\log m! = m \log m -m + O(\log m)$. We now optimize $m$ which yields $m= (r\norm{\omega}_\infty)^{1/s}$. Substituting back in the right hand side of \eqref{eq:gevrey-log-bound} leads to
\begin{equation}
|\hat f_\omega| \leq C e^{-(r\norm{\omega}_\infty)^{1/s}}
\end{equation}
when $\norm{\omega}_\infty$ is large enough.

\noindent\underline{Case 6.} Analytic functions $f \in \mathcal C^\omega$

Analytic functions are exactly the ones for which the Gevrey bound holds at $s=1$. Therefore this case is already obvious. However, since analyticity with a radius of convergence $r$ is also equivalent to existence of a holomorphic extension on the strip
\begin{equation}
\{ z\in\mathbb{C}^d : |{\rm Im}\,z_j| < r \text{ for all } j \},
\end{equation}
another proof is by shifting the integration contour in each variable by $\sigma\,\mathrm{sgn}(\omega_j)$, for any choice of $0<\sigma<r$:
\begin{equation}
\hat f_\omega
= \frac{1}{(2\pi)^d}
\int f(x+i\sigma\,\mathrm{sgn}_\omega)\,e^{-i\omega\cdot (x+i\sigma\mathrm{sgn}_\omega)}\,dx.
\end{equation}
Hence
\begin{equation}
|\hat f_\omega|
\leq \norm{f}_\infty \exp\!\Big(-\sigma\sum_{j=1}^d|\omega_j|\Big)
\leq \norm{f}_\infty e^{-\sigma\norm{\omega}_\infty}.
\end{equation}
The result follows in the limit $\sigma \to r$.

\noindent\underline{Case 7.} Entire functions $f\in\mathcal E$

Entire functions are those that are holomorphic on all of $\mathbb C^d$. Therefore, in the proof of Case 6, $M(\sigma):=\sup_x|f(x+i\sigma\,\mathrm{sgn}_\omega)|$ is finite for every $\sigma>0$, not just $\sigma<r$, so
\begin{equation}
|\hat f_\omega| \leq M(\sigma)\, e^{-\sigma\norm{\omega}_\infty}
\qquad \text{for every } \sigma>0.
\end{equation}

\noindent\underline{Case 8.} Entire functions of finite order, $\mathcal G^s$ for $0\leq s <1$

The argument of Case 5 applies unchanged for $s<1$. Note that this condition also forces $f$ to be entire, because, writing $a_\alpha= D^\alpha f(x_0)/\alpha!$ at any real point $x_0$, we have $|a_\alpha| \leq C(\alpha!)^{s-1}/r^{|\alpha|}$, and since $s<1$ we have $(\alpha!)^{s-1}\to 0$ super-exponentially in $|\alpha|$, so $\limsup_{|\alpha|\to\infty}|a_\alpha|^{1/|\alpha|}=0$. Therefore, the Taylor series at $x_0$ converges everywhere. As this holds at every real $x_0$ with the same constants $(C,r)$, $f$ extends to an entire function.

\noindent\underline{Case 9.} Finite bandwidth, $s=0$

At $s=0$ we have $\norm{D^\alpha f}_\infty \leq C r^{-|\alpha|}$, with no factorial growth. This holds if and only if $f$ is a trigonometric polynomial. The `if' direction is Bernstein's theorem. For the `only if' direction, the bound implies (by summing the Taylor series) that $f$ extends holomorphically with $\sup_x|f(x-iy)|\leq Ce^{|y|/r}$ for every real $y$ of either sign. For $\omega$ with $\omega_j>1/r$ for some coordinate $j$ (other components/signs are analogous), shifting the contour in that variable to height $-t$, $t>0$, gives $|\hat f_\omega| \leq Ce^{t/r - \omega_j t}$, which tends to $0$ as $t\to\infty$. Hence $\hat f_\omega=0$ whenever $\norm{\omega}_\infty>1/r$, so $f$ is a trigonometric polynomial of degree at most $\lceil 1/r\rceil$.
\end{proof}

We note that smooth functions are not necessarily Gevrey, and entire functions are not necessarily of finite order, as the following examples show.

\begin{exm}
\label{exm:smooth-not-gevrey}
A function with $|\hat f_\omega| \sim e^{-(\log\|\omega\|_\infty)^2}$ (for $\|\omega\|_\infty\geq 2$) decays faster than every polynomial rate $\|\omega\|_\infty^{-k}$, hence is smooth ($\mathcal C^\infty$), yet decays slower than $e^{-(r\|\omega\|_\infty)^{1/s}}$ for every finite $s$ (however large), since $(\log\|\omega\|_\infty)^2$ is eventually dominated by $\|\omega\|_\infty^{1/s}$ for any fixed power $1/s>0$. So it lies in no Gevrey class.
\end{exm}

\begin{exm}
\label{exm:entire-not-finite-order}
The function $f(x)=\mathrm{Re}(e^{e^{ix}})=\sum_{n\geq 0} \cos(nx)/n!$ is entire (a composition of entire functions), yet $|\hat f_n|=1/n!$, so $\log(1/|\hat f_n|)\sim n\log n$ by Stirling's approximation. This decay is faster than every fixed rate $rn$, but slower than $n^{1/s}$ for every $s<1$. So $\mathcal E$ is strictly larger than the class of finite-order entire functions of Case 8.
\end{exm}

We also provide several remarks about the proof of \cref{prop:all-decay-rates} and the Gevrey hierarchy.

\begin{rmk}
\label{rmk:sigma-to-r-subtlety}
Taking $\sigma \to r$ literally in the proof of Case 6 is informal whenever $f$ has a singularity at height $r$, since $|f|$ on the shifted contour can then diverge as $\sigma\to r$. In practice there is a rate-versus-prefactor trade-off: choosing $\sigma$ closer to $r$ tightens the exponential rate at the cost of a larger constant $C$.
\end{rmk}

\begin{rmk}
\label{rmk:entire-periodicity}
Any holomorphic extension of a periodic function to all of $\mathbb C^d$ (Case 7) is automatically $2\pi$-periodic in every imaginary slice, not just on the real axis. This follows from the identity theorem of holomorphic functions since $f(z)$ and $f(z+2\pi e_j)$ agree on $\mathbb R^d$.
\end{rmk}

\begin{rmk}
\label{rmk:entire-no-fixed-rate}
The bound of Case 7 does not by itself pin down a single exponential rate: the Fourier coefficients of an entire function decay faster than any fixed exponential rate, at the cost of a prefactor $M(\sigma)$ that may grow without bound as $\sigma\to\infty$, as \cref{exm:entire-not-finite-order} illustrates.
\end{rmk}

\begin{rmk}
\label{rmk:classical-analytic-precedent}
The exponential decay established in Case 6, when specialized to $d=1$, recovers the classical result of \cite{tadmor1986exponential}, who shows that Fourier differencing of a periodic analytic function converges at a rate governed by the width of its strip of analyticity, and likewise that Chebyshev differencing of a (possibly non-periodic) analytic function on an interval converges at a rate governed by its Bernstein regularity ellipse. The same underlying fact, that analyticity yields exponential spectral accuracy, is also the basis for resolving the Gibbs phenomenon for piecewise-analytic functions \cite{gottlieb1997gibbs}.
\end{rmk}

We can now consider the tailedness of the Fourier series. Let $\hat f_N= (\hat f_\omega)_{\omega \in \Lambda_N} \in \mathbb R^{(2N+1)^d}$ and consider the inclusion $\iota: \mathbb R^{(2N+1)^d} \hookrightarrow \ell^2(\mathbb Z^d)$ induced by $\Lambda_N \hookrightarrow \mathbb Z^d$. We will reuse the notation $\hat f_N$ for $\iota \hat f_N$. By the \emph{tail} of the Fourier series we mean the truncation error
\begin{equation}
\label{eq:def-tail}
E_N= \norm{\hat f - \hat f_N}_2= \left(\sum_{\omega \not\in \Lambda_N} |\hat f_\omega|^2\right)^{1/2},
\end{equation}
which is identical to
\begin{equation}
\label{eq:tail-parseval}
E_N= \left(\frac{1}{(2\pi)^d}\int_{\mathbb T^d} \left|\sum_{\omega \not\in \Lambda_N} \hat f_\omega e^{i \langle \omega, x\rangle}\right|^2 dx\right)^{1/2}
\end{equation}
by Parseval's theorem.

\begin{prop}
\label{prop:tail-decay-rate}
The tail of the Fourier series scales according to the third column of \cref{tab:fourier-tailedness} and the required cutoff $N$ for an $\epsilon$-accurate amplitude encoding of the function is according to the fourth column of the same table.
\end{prop}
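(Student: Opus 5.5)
The plan is to feed the pointwise decay bounds of \cref{prop:all-decay-rates} into the definition \eqref{eq:def-tail} and sum over the complement of $\Lambda_N$ shell by shell. For each integer $k>N$, the shell $S_k:=\{\omega\in\mathbb Z^d:\norm{\omega}_\infty=k\}$ has cardinality $(2k+1)^d-(2k-1)^d\leq 2d(2k+1)^{d-1}$. Every decay law in the second column of \cref{tab:fourier-tailedness} depends on $\omega$ only through $\norm{\omega}_\infty$, so writing $|\hat f_\omega|\leq \phi(\norm{\omega}_\infty)$ gives
\begin{equation}
E_N^2\leq \sum_{k>N} 2d(2k+1)^{d-1}\phi(k)^2 .
\end{equation}
I will bound this sum by the integral $\int_N^\infty 2d(2t+1)^{d-1}\phi(t)^2\,dt$, which is legitimate once $t\mapsto t^{d-1}\phi(t)^2$ is eventually decreasing. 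The constants $2^{d-1}$ and similar factors get absorbed into $C$, as the table implicitly does.

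For the polynomial classes, take $\phi(k)=Ck^{-p}$, or $Ck^{-p-q}$ in the H\"older case. The integral is then $\int_N^\infty d\,t^{d-1-2p}\,dt=\frac{d}{2p-d}N^{d-2p}$, which is finite when $2p>d$. Taking the square root reproduces the third-column entry $C\sqrt{d/(2p-d)}\,N^{-p+d/2}$; the table writes the constant inside the square root with the sign flipped. The H\"older case is the same with $p$ replaced by $p+q$. For $\mathcal C^0$ and $\mathcal C^\infty$, Parseval alone gives $E_N\to0$. For $\mathcal C^\infty$, applying the $\mathcal C^p$ bound for every $p$ yields super-polynomial decay. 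The finite-bandwidth case is immediate from Case 9, since $\hat f_\omega=0$ for $\norm{\omega}_\infty>1/r$. For the fourth column in the polynomial cases, set the tail equal to $\epsilon$ and solve for $N$.

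The Gevrey and analytic cases ($s>0$) need more care. Here $\phi(k)^2=C^2e^{-2rk^{1/s}}$, absorbing the power $1/s$ on $r$ into a redefinition of $r$, as the table does. Substituting $u=t^{1/s}$, so that $dt=s\,u^{s-1}du$, turns the integral into
\begin{equation}
ds\int_{N^{1/s}}^\infty u^{ds-1}e^{-2ru}\,du ,
\end{equation}
which is an incomplete Gamma function. I will bound it by the standard estimate $\Gamma(a,x)\leq 2x^{a-1}e^{-x}$, valid when $x\geq 2(a-1)$, that is, when the integrand is already decaying geometrically. This gives $E_N^2\lesssim \frac{C^2 ds}{r}N^{d-1/s}e^{-2rN^{1/s}}$, which matches the third column after taking square roots. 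The validity condition $N^{1/s}\gtrsim ds/r$, i.e.\ $N\gtrsim (ds/r)^s$, is exactly the $\tilde\Omega((ds/r)^s)$ proviso that reappears in \cref{thm:high-precision-dft-gevrey}, so I will state it explicitly.

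The main obstacle is the fourth column for the Gevrey case. Solving $\frac{C^2 ds}{r}N^{d-1/s}e^{-2rN^{1/s}}\leq\epsilon^2$ exactly is awkward because of the polynomial prefactor $N^{d-1/s}$. I plan to absorb it into the exponential: in the regime $N^{1/s}\gtrsim ds/r$ we have $N^{d}\leq e^{rN^{1/s}}$ up to logarithmic factors, so halving the rate leaves a bound of the form $\frac{C^2ds}{r}e^{-rN^{1/s}}$. Inverting this gives $N=\bigl(\frac{1}{r}\log\frac{C^2ds}{r\epsilon^2}\bigr)^s$, which matches the tabulated $\bigl(\frac1{2r}\log\frac{C^2ds}{r\epsilon}\bigr)^s$ up to constant rescalings of $r$ and of $\log(1/\epsilon)$. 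I will state it in that $\tilde{\mathcal O}$ sense. The analytic row is the special case $s=1$, and the entire rows follow by applying the same computation for every $r$.
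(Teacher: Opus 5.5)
Your proposal is correct and follows essentially the same route as the paper: shell-by-shell counting of the modes $\omega\notin\Lambda_N$, comparison with $d\int_N^\infty x^{d-1}\phi(x)^2\,dx$, and the substitution $t=x^{1/s}$, which reduces the Gevrey case to an incomplete Gamma integral. The paper bounds that integral by repeated integration by parts, which tacitly needs $sd-1$ to be an integer and $N^{1/s}\gtrsim sd/r$ so that the leading term dominates. Your bound $\Gamma(a,x)\le 2x^{a-1}e^{-x}$ for $x\ge 2(a-1)$ handles non-integer exponents and makes that regime explicit; the paper's proof also never spells out the fourth column, and your derivation of it, including your remarks on the sign of $d-2p$ and the constants in the Gevrey cutoff, is accurate.
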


\begin{proof}
The main cases to consider are the H\"older and Gevrey functions. In the first case assume a polynomial decay $|\hat f_\omega| \in O (\norm{\omega}_\infty^{-p-q})$. The number of frequency grid points with $\norm{\omega}_\infty \in [x, x+1]$ is in the order $O(d x^{d-1})$. Therefore,
\begin{equation}
E_N^2 \lesssim d \int_N^\infty x^{\,d-1}\, x^{-2p-2q}\,dx
= d \int_N^\infty x^{d-1-2p-2q}\, dx \in \mathcal O(\tfrac{d}{d-2p-2q} N^{d- 2p-2q}).
\end{equation}
In the second case we assume $|\hat f_\omega| \in O(\exp(-r\norm{\omega}_\infty^{1/s}))$. Similar to the previous case,
\begin{equation}
E_N^2 \lesssim d \int_N^\infty x^{d-1} \exp(-2r x^{1/s})\, dx.
\end{equation}
We make the change of variables $t = x^{1/s}$, so $x = t^s$ and $dx = s\,t^{s-1}dt$. Then
\begin{equation}
\int_N^\infty x^{\,d-1}\, e^{-2r x^{1/s}}\,dx
=
s\int_{N^{1/s}}^\infty t^{\,s(d-1)} t^{\,s-1} e^{-2r t}\,dt
=
s\int_{N^{1/s}}^\infty t^{\,sd-1} e^{-2r t}\,dt.
\end{equation}
Let us use the notation $I_\beta= \int_a^\infty t^\beta e^{-2rt}\, dt$. Using integration by parts

\begin{equation}
I_\beta=  e^{-2ra} \left(\frac{a^\beta}{2r} + \frac{\beta a^{\beta -1}}{(2r)^2} + \cdots + \frac{\beta!}{(2r)^\beta}\right)\in O\left(\frac{1}{2r} e^{-2ra} a^\beta\right).
\end{equation}
We conclude that
$$
E_N^2 \lesssim sd I_{sd-1}= \frac{d s}{2r} {N}^{d-1/s} e^{-2rN^{1/s}}
$$
and finally $E_N \in O(\sqrt{ds/r} N^{d/2 - 1/2s} e^{-r N^{1/s}})$.

The same computation applies verbatim for $0<s<1$ (Case 8), giving the identical tailedness formula. For entire functions $f \in \mathcal E$ (Case 7), since $|\hat f_\omega| \leq M(\sigma) e^{-\sigma \|\omega\|_\infty}$ for every $\sigma>0$, the same integral bound gives $E_N \lesssim \sqrt{d/(2\sigma)}\,M(\sigma)\, N^{(d-1)/2} e^{-\sigma N}$ for every $\sigma>0$, i.e., the tail beats every fixed exponential rate in $N$. Finally, for finite bandwidth functions (Case 9 with Gevrey parameter $s=0$), $E_N=0$ exactly once $N\geq 1/r$.
\end{proof}

We can now study the convergence rate of $\tilde f_N \to \hat f$ as previously promised. First we recall \cite[Lemma E.5]{motamedi2022gibbs} which will appear very useful:

\begin{lem}
\label{lem:useful}
Let $k \geq d$. We have
\begin{equation}
\max_{x\in[-1, 1]^d} \left( \sum_{m \in \mathbb Z^d \setminus \{0\}} \norm{x+2m}^{-2k} \right) \leq 2^{d+1}.
\end{equation}
\end{lem}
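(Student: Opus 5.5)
We need to prove that for $k\ge d$,
\[
\max_{x\in[-1,1]^d}\sum_{m\neq 0}\norm{x+2m}^{-2k}\le 2^{d+1}.
\]

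\textbf{Reduction to $k=d$.} For $x\in[-1,1]^d$ and $m\in\mathbb Z^d\setminus\{0\}$, pick a coordinate $j$ with $m_j\neq 0$. Then $|x_j+2m_j|\ge 2|m_j|-1\ge 1$. Hence every term satisfies $\norm{x+2m}\ge 1$, and $\norm{x+2m}^{-2k}\le\norm{x+2m}^{-2d}$ whenever $k\ge d$. It therefore suffices to treat $k=d$.

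\textbf{Shell decomposition.} Group the lattice points by $\ell:=\norm{m}_\infty\ge1$. Choose $j$ with $|m_j|=\ell$; then
\[
\norm{x+2m}\ge\norm{x+2m}_\infty\ge 2\ell-1 .
\]
The number of $m$ with $\norm{m}_\infty=\ell$ is $(2\ell+1)^d-(2\ell-1)^d\le 2d(2\ell+1)^{d-1}$. This gives
\[
\sum_{m\neq0}\norm{x+2m}^{-2d}\le\sum_{\ell\ge1}\frac{(2\ell+1)^d-(2\ell-1)^d}{(2\ell-1)^{2d}}.
\]

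\textbf{The main obstacle: the $\ell=1$ shell.} Here the crude estimate $\norm{x+2m}\ge 1$ is too lossy: the shell contains $3^d-1$ points, which exceeds $2^{d+1}$. I would refine it coordinatewise. For a given $m$ with $\norm{m}_\infty=1$, the coordinates with $m_j=0$ contribute $|x_j|$, which may be $0$. Each coordinate with $m_j=\pm1$ contributes $|x_j\pm2|\ge1$, and for each $j$ at most one of the two signs gives a value below $\sqrt3$; indeed $|x_j+2|^2+|x_j-2|^2=2x_j^2+8\ge 8$.

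So I would write
\[
\norm{x+2m}^2\ge\sum_{j:\,m_j\ne0}(x_j+2m_j)^2 .
\]
Enumerating by the support $S$ of $m$ and by the signs, the inner sum over signs factorizes per coordinate into terms of the form $1/(a+b)$. Then I would bound
\[
\Bigl(\sum_{j\in S}y_j\Bigr)^{-d}\le\prod_{j\in S}y_j^{-d/|S|}
\]
by AM–GM. This turns the sum over signs into a product $\prod_{j\in S}\bigl(|x_j+2|^{-2d/|S|}+|x_j-2|^{-2d/|S|}\bigr)$. Each factor is at most $1+3^{-d/|S|}$, because one of the two distances is $\ge1$ and the other is $\ge\sqrt3$ at worst.

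Summing over supports $S$ then gives a bound of the shape
\[
\sum_{s=1}^d\binom{d}{s}\bigl(1+3^{-d/s}\bigr)^s .
\]
This is where I expect the real difficulty, since the target constant $2^{d+1}$ is tight-ish. If the AM–GM route is too weak, I would fall back on using $\norm{x+2m}_\infty$ at the corner $x=(1,\dots,1)$ and a direct computation. A monotonicity or convexity argument would first show the maximum sits at a vertex of the cube, where the lattice $x+2\mathbb Z^d$ is the odd lattice. There the sum is explicitly $\sum$ over odd vectors excluding one unit-size point, and it can be compared to a product of one-dimensional sums $\sum_{n\text{ odd}}|n|^{-2}$-type factors.

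\textbf{Shells $\ell\ge2$.} These form a convergent series that is exponentially small in $d$. Using $(2\ell+1)/(2\ell-1)^2\le 5/9$ for $\ell\ge2$, its total is $O(d(5/9)^{d})$, which is easily absorbed into the slack $2^{d+1}-(\text{shell-1 bound})$. I would check small $d$ by hand.
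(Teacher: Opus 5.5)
The paper gives no proof of this lemma; it imports it from \cite[Lemma E.5]{motamedi2022gibbs}, so your argument has to stand on its own. Your reduction to $k=d$ and the shell decomposition are fine. You correctly identify the $\ell=1$ shell as the crux, but as written it does not close.

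\textbf{The failing step is AM--GM.} The correct form is $\sum_{j\in S}y_j\ge|S|\prod_{j\in S}y_j^{1/|S|}$, i.e.\ $\bigl(\sum_{j\in S}y_j\bigr)^{-d}\le|S|^{-d}\prod_{j\in S}y_j^{-d/|S|}$. You dropped the factor $|S|^{-d}$.

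Without that factor, your shell-1 bound $\sum_{s=1}^d\binom{d}{s}(1+3^{-d/s})^s$ is trivially at least $\sum_{s=1}^d\binom{d}{s}=2^d-1$. It also exceeds the target:
\begin{itemize}
\item at $d=9$ it is about $1044>1024=2^{10}$;
\item at $d=10$ it is about $2238>2048$;
\item for large $d$ the single term $s=d/2$, namely $\binom{d}{d/2}(10/9)^{d/2}$, already outgrows $2^{d+1}$.
\end{itemize}
So the ``slack'' into which you want to absorb the shells $\ell\ge2$ is negative.

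\textbf{The fallback does not help.} The maximum is not attained at a vertex. At a vertex, every $x+2m$ with $m\neq0$ has all coordinates odd, hence norm at least $\sqrt d$. At a facet centre such as $x=e_1$, the point $x-2e_1=-e_1$ has norm $1$. For example, with $d=k=2$ the sum is about $0.88$ at every vertex but exceeds $1$ at $(1,0)$. In particular the function is not convex, so no convexity argument can place the maximum at a vertex.

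\textbf{The repair is to keep $|S|^{-d}$.} For $s\ge2$ use $s^{-d}\le2^{-d}$ and $(1+3^{-d/s})^s\le(4/3)^s$. Then shell 1 is at most
$\sum_{s=1}^d\binom{d}{s}s^{-d}(1+3^{-d/s})^s\le d(1+3^{-d})+(7/6)^d$.
Adding your $\ell\ge2$ estimate keeps the total comfortably below $2^{d+1}$ for every $d\ge1$.

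Note also that $2^{d+1}$ is far from tight: at $x=e_1$ the sum is $1+O(d\,5^{-d})$. The delicacy you anticipated comes from the lost factor, not from the lemma.
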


\begin{prop}
\label{prop:dft-distance-bound}
Let $f$ be a $2\pi$-periodic multivariate function and $E_N$ be the truncation error of it as per the third column of \cref{tab:fourier-tailedness}. Then we have
\begin{align}\label{eq:error-big}
\norm{\tilde{f}_{N} - \hat{f}}_2 \lesssim \sqrt{2^d} E_N.
\end{align}
\end{prop}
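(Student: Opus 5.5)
We split $\tilde f_N - \hat f$ into a truncation part and an aliasing part. Regarding $\tilde f_N$ as an element of $\ell^2(\mathbb Z^d)$ supported on $\Lambda_N$, we write
\begin{equation}
\norm{\tilde f_N - \hat f}_2^2 = \norm{\tilde f_N - \hat f_N}_2^2 + \norm{\hat f - \hat f_N}_2^2 = \sum_{\omega\in\Lambda_N}\Big|\sum_{m\neq 0}\hat f_{\omega+(2N+1)m}\Big|^2 + E_N^2 .
\end{equation}
This uses the orthogonality of the supports together with the aliasing sum \eqref{eq:aliasing-sum}. The second term is already $E_N^2$, so the task reduces to showing that the aliasing error is $\lesssim 2^{d} E_N^2$.

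For the aliasing term, the plan is to apply Cauchy--Schwarz with a weight chosen to make \cref{lem:useful} applicable. Write $\omega=N'x$, where $N' := N+\tfrac12$, so that $\omega + (2N+1)m = N'(x+2m)$ with $x\in[-1,1]^d$. Fix $k\geq d$ and split each aliased coefficient as
\begin{equation}
\hat f_{\omega'} = \big(\norm{\omega'}^{-k}\big)\big(\norm{\omega'}^{k}\hat f_{\omega'}\big).
\end{equation}
Cauchy--Schwarz then gives
\begin{equation}
\Big|\sum_{m\neq0}\hat f_{\omega+(2N+1)m}\Big|^2 \leq N'^{-2k}\Big(\sum_{m\neq0}\norm{x+2m}^{-2k}\Big)\Big(\sum_{m\neq 0}\norm{\omega+(2N+1)m}^{2k}|\hat f_{\omega+(2N+1)m}|^2\Big).
\end{equation}
By \cref{lem:useful}, the middle factor is at most $2^{d+1}$. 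Summing over $\omega\in\Lambda_N$, the pairs $(\omega,m)$ with $m\neq 0$ enumerate each $\omega'\notin\Lambda_N$ exactly once. The aliasing error is therefore bounded by
\begin{equation}
2^{d+1}N'^{-2k}\sum_{\omega'\notin\Lambda_N}\norm{\omega'}^{2k}|\hat f_{\omega'}|^2 .
\end{equation}
This is a weighted tail of $\hat f$.

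The main obstacle is to compare this weighted tail with $E_N^2$. The weight $(\norm{\omega'}/N')^{2k}$ exceeds $1$ outside $\Lambda_N$, so the bound is not automatically $\lesssim E_N^2$. The step I would try first is to take the minimal choice $k=d$ and then recompute the shell-sum of \cref{prop:tail-decay-rate}, using the table's decay laws, with the extra factor $(x/N)^{2d}$ inserted.
\begin{itemize}
\item In the Gevrey and analytic cases, the integral $\int_N^\infty x^{d-1+2d}e^{-2rx^{1/s}}dx$ still has leading behaviour $N^{\,\cdot}e^{-2rN^{1/s}}$. After dividing by $N^{2d}$, it matches the form of $E_N^2$ up to constants, once $N$ is large compared with $(sd/r)^s$.
\item In the H\"older case, the weighted tail converges only if $p+q$ is large relative to $d$. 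It then scales like $N^{d-2p-2q}$ with a comparable constant.
\end{itemize}
This is where the informal ``$\lesssim$'' in \eqref{eq:error-big} absorbs constants and polynomial prefactors. I would state explicitly that the comparison is made against the table's closed form for $E_N$, not against the exact tail.

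Combining the two pieces gives
\begin{equation}
\norm{\tilde f_N-\hat f}_2^2 \lesssim (1+2^{d+1})E_N^2 ,
\end{equation}
and taking square roots yields $\norm{\tilde f_N-\hat f}_2 \lesssim \sqrt{2^d}\,E_N$.
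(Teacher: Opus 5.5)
Your proposal is correct and follows the paper's proof essentially step for step. Both use the same split into the truncation tail $S_1$ and the aliasing term $S_2$, the same Cauchy--Schwarz with weights $\norm{\cdot}^{\pm k}$, \cref{lem:useful} with $k\ge d$ for the negative-power sum, and the shell-sum estimate for the weighted tail. You differ only in fixing $k=d$, in making the rescaling $\omega=(N+\tfrac12)x$ and the re-indexing bijection explicit, and in stating the conditions the paper's proof also needs but leaves implicit: $2k+d<2(p+q)$ for the H\"older weighted tail to converge, and $N$ large compared with $(sd/r)^s$ in the Gevrey case.
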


\begin{proof}
The aliasing sums \eqref{eq:aliasing-sum} imply that
\begin{align}
\norm{\tilde{f}_{N} -\hat{f}}_2^2
&= \underbrace{\sum_{\omega\not\in\Lambda_N}
\left| \hat{f}_\omega \right|^2}_{S_1}
+ \underbrace{\sum_{\omega\in\Lambda_N} \left| \sum_{m\in\mathbb{Z}^d\setminus \{0\}} \hat{f}_{\omega+Nm} \right|^2}_{S_2}.
\end{align}
The first term was upper-bounded in \cref{prop:tail-decay-rate}:
\begin{equation}
\begin{split}
S_1 &\lesssim \tfrac{d}{d-2p-2q} N^{d- 2p-2q} \hspace{1.5mm}\qquad\text{ in the H\"older case, and }\\
S_1 &\lesssim \frac{d s}{2r} {N}^{d-1/s} e^{-2rN^{1/s}} \qquad\text{ in the Gevrey case }.
\end{split}
\end{equation}

As for the second term, for any integer $k$, from the Cauchy-Schwarz inequality we have
\begin{equation}
\begin{split}
S_2 &\leq \sum_{\omega\in\Lambda_N} \left(\sum_{m\in\mathbb{Z}^d\setminus \{0\}} \norm{\omega+ (2N+1)m}^{-2k}\right) \left( \sum_{m\in\mathbb{Z}^d\setminus \{0\}}
\norm{\omega+ (2N+1)m}^{2k} \left| \hat{f}_{\omega+ (2N+1)m} \right|^2 \right) \\
& \leq \underbrace{\left(\max_{\omega\in \Lambda_N}
\sum_{m\in\mathbb{Z}^d\setminus \{0\}} \norm{\omega+ (2N+1)m}^{-2k}\right)}_{S_3}
\underbrace{\left( \sum_{\omega \not\in \Lambda_N} \norm{\omega}^{2k} |\hat f_\omega|^2\right)}_{S_4}.
\end{split}
\end{equation}
Assuming $k \geq d$ and using \cref{lem:useful} we have
\begin{equation}
\begin{split}
S_3 \lesssim N^{-2k} \max_{x\in[-1,1]^d} \left(\sum_{m\in\mathbb{Z}^d\setminus \{0\}} \norm{x+2m}^{-2k}\right) \leq N^{-2k} 2^{d+1}.
\label{eq:usefuleq}
\end{split}
\end{equation}
For bounding $S_4$ we repeat the technique of the proof of \cref{prop:tail-decay-rate}:
\begin{equation}
\begin{split}
S_4 &\lesssim d \int_N^\infty x^{2k+ d- 1}\, x^{-2p-2q}\,dx
=  \mathcal O(\tfrac{d}{2k+d- 2p-2q} N^{2k+d- 2p-2q})\qquad\text{ in the H\"older case, and }\\
S_4 &\lesssim d \int_N^\infty x^{2k+ d- 1} \exp(-2r x^{1/s})\, dx
\simeq \frac{d s}{2r} {N}^{2k+d-1/s} e^{-2rN^{1/s}}\qquad\!\!\text{ in the Gevrey case.}
\end{split}
\end{equation}
Overall we get
\begin{equation}
\label{eq:dist-bound-holder}
\norm{\tilde{f}_{N} - \hat{f}}_2
= \sqrt{S_1 + S_3 S_4} \lesssim
\sqrt{\tfrac{d}{d-2p-2q} + 2^{d+1} \tfrac{d}{2k+d- 2p-2q}} N^{d/2- p-q}
\lesssim 2^{\frac{d}2}\sqrt{\tfrac{d}{d-2p-2q}} N^{d/2- p-q},
\end{equation}
in the H\"older case, and
\begin{equation}
\label{eq:dist-bound-gevrey}
\norm{\tilde{f}_{N} - \hat{f}}_2
= \sqrt{S_1 + S_3 S_4} \lesssim
2^{\frac{d}2} \sqrt{\frac{d s}{r} {N}^{d-1/s}} e^{-rN^{1/s}}
\end{equation}
in the Gevrey case, completing the proof.
\end{proof}

\begin{cor}
\label{cor:qft-func-err}
Let $N \in \tilde \Omega((ds / r)^s)$ and $f$ be an $s$-Gevrey and $2\pi$-periodic $d$-variate function. We have
\begin{align}
\norm{\tilde f_N - \hat f}_2
\leq C e^{-\frac{r}2 N^{1/s}}.
\end{align}
More explicitly, we may choose
\begin{equation}
\label{eq:N-bound-for-good-gevrey-fourier}
N \geq \left[
\frac{2}{r}\Big(sd\ln \tfrac{2sd}{r} + d\ln 2 + \ln \tfrac{ds}{r}\Big)
\right]^{s}.
\end{equation}
\end{cor}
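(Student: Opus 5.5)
Starting point: the Gevrey branch of \cref{prop:dft-distance-bound}, i.e.\ \eqref{eq:dist-bound-gevrey}, which gives
\begin{equation}
\norm{\tilde f_N - \hat f}_2 \lesssim 2^{d/2}\sqrt{\tfrac{ds}{r}}\, N^{\frac d2 - \frac1{2s}}\, e^{-rN^{1/s}},
\end{equation}
with the implicit constant proportional to the Gevrey size constant $C$. This constant enters through the decay bound $|\hat f_\omega|\le C e^{-r\norm{\omega}_\infty^{1/s}}$ of \cref{prop:all-decay-rates}, up to the $(2\pi)^d$ normalization absorbed there. The strategy is to split the exponential as $e^{-rN^{1/s}} = e^{-\frac r2 N^{1/s}}\cdot e^{-\frac r2 N^{1/s}}$. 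The corollary then follows once we show that the second factor absorbs the polynomial-in-$N$ and $2^{d/2}$ prefactors, namely
\begin{equation}
2^{d/2}\sqrt{\tfrac{ds}{r}}\,N^{\frac d2}\, e^{-\frac r2 N^{1/s}} \le 1 .
\end{equation}
Here the harmless factor $N^{-1/(2s)}\le 1$ has been dropped.

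Taking logarithms and substituting $t = N^{1/s}$ turns the sufficient condition into
\begin{equation}
\tfrac r2\, t \;\ge\; \tfrac{sd}{2}\ln t + \tfrac d2 \ln 2 + \tfrac12\ln\tfrac{ds}{r}.
\end{equation}
Multiplying by $2/r$ gives $t \ge \frac{sd}{r}\ln t + \frac1r\big(d\ln2 + \ln\frac{ds}{r}\big)$. This is a standard inequality of the form $t \ge a\ln t + b$, which holds for all $t \ge 2a\ln(2a) + 2b$. The justification is that $t\ge 2a\ln(2a)$ implies $t \ge 2a\ln t$ (a Lambert-W style estimate, since $t/\ln t$ is increasing for $t\ge e$), after which $t/2\ge a\ln t$ and $t/2\ge b$ add up.

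With $a = sd/r$ and $b = (d\ln 2 + \ln\frac{ds}r)/r$ this yields $t \ge \frac{2}{r}\big(sd\ln\frac{2sd}{r} + d\ln2 + \ln\frac{ds}{r}\big)$. Raising to the power $s$ gives exactly \eqref{eq:N-bound-for-good-gevrey-fourier}, and this threshold is $\tilde\Omega((ds/r)^s)$ as claimed.

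The main obstacle is the bookkeeping of hidden constants in ``$\lesssim$''. I would have to recheck that the product $S_3S_4$ really contributes only $2^{d+1}$ times the Gevrey tail integral, with the $N^{2k}$ factors cancelling. I would also have to verify that the integration-by-parts estimate $I_\beta = O(e^{-2ra}a^\beta/(2r))$ is valid in the regime $a = N^{1/s} \gtrsim \beta/(2r) = (sd-1)/(2r)$, which is again guaranteed by the chosen threshold. If constant factors beyond these remain, I would absorb them by enlarging the $\ln$ terms by an $O(1)$ additive amount. That adjustment is consistent with the $\tilde\Omega$ statement even if it slightly perturbs the explicit formula.
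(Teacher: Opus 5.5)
Your reduction matches the paper's. The paper squares \eqref{eq:dist-bound-gevrey} and checks $2^d\frac{ds}{r}N^{d-1/s}\le e^{rN^{1/s}}$. With $y=N^{1/s}$ and $(sd-1)\ln y\le sd\ln y$, this is exactly your inequality $t\ge a\ln t+b$, with $a=sd/r$ and $b=\frac1r\big(d\ln 2+\ln\frac{ds}{r}\big)$. Your final threshold $t\ge 2a\ln(2a)+2b$ is precisely \eqref{eq:N-bound-for-good-gevrey-fourier}.

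The gap is in how you justify that threshold. The implication ``$t\ge 2a\ln(2a)\Rightarrow t\ge 2a\ln t$'' is false as soon as $2a>e$. Write $c=2a$. At $t=c\ln c$ one has $t/\ln t=c\ln c/(\ln c+\ln\ln c)<c$ whenever $\ln\ln c>0$. For example, $c=e^2$ and $t=2e^2\approx 14.8$ give $c\ln t=e^2(2+\ln 2)\approx 19.9>t$. Since $2a=2sd/r$ is large in exactly the regime of interest, your split into $t/2\ge a\ln t$ and $t/2\ge b$ does not follow from the threshold. Monotonicity of $t/\ln t$ only rescues the implication if you start from $t\ge 2c\ln c$, which would double the leading term of your threshold.

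The fix is the paper's one-line argument, which splits the budget differently. Using $\ln x\le x$,
\[
\ln t=\ln\tfrac{t}{2a}+\ln(2a)\le \tfrac{t}{2a}+\ln(2a),
\qquad\text{so}\qquad
a\ln t\le \tfrac t2+a\ln(2a).
\]
Then $t\ge 2a\ln(2a)+2b$ gives $a\ln t+b\le \frac t2+\frac t2=t$. In other words, the second half $t/2$ must absorb $a\ln(2a)+b$, not just $b$. With this replacement your proof goes through and yields the stated constant exactly.

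Your remaining worry about the constants hidden in $\lesssim$ (including $C$ and the validity range of the $I_\beta$ estimate) is not treated any more carefully in the paper, which simply lets them ride through.
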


\begin{proof}
We aim to show, for all $d,r,s>0$,
\begin{equation}
2^d \frac{ds}{r}N^{d-\frac1s} \leq e^{rN^{1/s}}.
\end{equation}
Setting $y:=N^{1/s}$ this condition is equivalent to
\begin{equation}
d\ln 2 + \ln\!\frac{ds}{r} + (sd-1)\ln y \leq ry.
\label{eq:log-eq}
\end{equation}
Using $\ln x\leq x$ with $a:=\frac{2sd}{r}$ we obtain
\begin{equation}
\ln y =
\ln \left(\tfrac{y}{a}\right) + \ln a \leq \tfrac{y}{a} + \ln a
= \frac{ry}{2sd} + \ln \frac{2sd}{r}.
\label{eq:lny-strong}
\end{equation}
Hence
\begin{equation}
(sd-1)\ln y
\leq
sd\ln y
\leq
\frac{ry}{2} + sd\ln \frac{2sd}{r}.
\label{eq:sdlny}
\end{equation}
Therefore \eqref{eq:log-eq} holds if
\begin{equation}
\frac{ry}{2}
\geq
sd\ln \frac{2sd}{r} + d\ln 2 + \ln \frac{ds}{r}.
\label{eq:ry2-cond}
\end{equation}
which is equivalent to our condition.
\end{proof}

\begin{rmk}
\label{rmk:smooth-dft-no-go}
Note that such a conclusion cannot be drawn for the H\"older class since \eqref{eq:dist-bound-holder} requires $d \geq 2p + 2q$ and in this regime the bound in this equation is not an asymptotic decay with respect to $N$.
\end{rmk}

\subsection{The Fourier approximation of derivatives}
\label{sec:fourier-diff-approx}

Recall that the Fourier transform of $D^\alpha f$ has Fourier coefficients $(i\omega)^\alpha \hat f_\omega$. This means that differentiation is a diagonal operator in the Fourier domain:

\begin{equation}
\label{eq:diff-from-fourier}
D^\alpha f = \mathcal F^{-1} \diag\left\{(i\omega)^\alpha: \omega \in \mathbb Z^d\right\} \mathcal F (f)
\end{equation}
where $\mathcal F$ denotes the Fourier transform as an operator. It is therefore natural to ask whether the truncation of $\mathcal F$ and the diagonal operator up to a cutoff mode $N$ approximates $D^\alpha$ well. Therefore, we define the approximate derivative operator
\begin{equation}
\label{eq:diff-from-discrete-fourier}
\tilde D^\alpha_N = F_N^{\dagger \otimes d} \diag\left\{(i\omega)^\alpha: \omega \in \Lambda_N\right\} F_N^{\otimes d}
\end{equation}
and seek to bound the distance between the approximate derivate applied to the discretization of $f$ and the discretization of the true derivative: $\norm{\tilde D^\alpha_N f_N - (D^\alpha f)_N}_2$.

\begin{prop}
\label{prop:dft-derivative-bound}
Let $f$ be a $2\pi$-periodic real-valued $d$-variate function from the regularity classes of \cref{tab:fourier-tailedness} and $E_N$ as the truncation error function of column 3. Then
\begin{align}
\norm{\tilde D^\alpha_N f_N - (D^\alpha f)_N} \lesssim 2^d N^{|\alpha|+d/2}E_N.
\end{align}
\end{prop}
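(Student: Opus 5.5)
The plan is to work entirely in the frequency domain, where $\tilde D^\alpha_N$ is diagonal. Unitarity of $F_N^{\otimes d}$ then reduces the spatial error to a coefficient error, and that error splits into an aliasing part and a tail part, bounded exactly as in \cref{prop:dft-distance-bound}.

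\emph{Move to frequency space.} Write $\tilde D^\alpha_N f_N$ and $(D^\alpha f)_N$ in the Fourier domain. Since $F_N^{\otimes d}$ is unitary and $F_N^{\otimes d} f_N=(2N+1)^{d/2}\tilde f_N$, we get
\[
\norm{\tilde D^\alpha_N f_N - (D^\alpha f)_N}_2=(2N+1)^{d/2}\,\norm{\big((i\omega)^\alpha\tilde f_\omega\big)_{\omega\in\Lambda_N} - \widetilde{(D^\alpha f)}_N}_2 .
\]
Here $\widetilde{(D^\alpha f)}_\omega$ denotes the discrete Fourier coefficients of $D^\alpha f$. Applying the aliasing sum \eqref{eq:aliasing-sum} to both $f$ and $D^\alpha f$, and using \eqref{eq:fourier-diff}, the $m=0$ terms cancel. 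For each $\omega\in\Lambda_N$ the difference becomes
\[
\sum_{m\neq 0}\big[(i\omega)^\alpha-(i(\omega+(2N+1)m))^\alpha\big]\hat f_{\omega+(2N+1)m}.
\]
This cancellation is the crux of the argument. It shows that the discrete derivative is exact on the band-limited part, so only aliased high modes contribute.

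\emph{Split and bound each piece.} By the triangle inequality, split the difference into two terms. The first is $(i\omega)^\alpha$ times the aliasing remainder of $f$, with $|\omega^\alpha|\le N^{|\alpha|}$ on $\Lambda_N$. The second is the aliasing remainder of $D^\alpha f$ itself, whose summands are $(i\mu)^\alpha\hat f_\mu$ with $\mu=\omega+(2N+1)m\notin\Lambda_N$. Bound each piece by the Cauchy--Schwarz step of \cref{prop:dft-distance-bound}, with weights $\norm{\omega+(2N+1)m}^{-2k}$ for some $k\ge d$. The factor $S_3\lesssim 2^{d+1}N^{-2k}$ comes from \cref{lem:useful}. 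The factor $S_4$ becomes a weighted tail sum $\sum_{\mu\notin\Lambda_N}\norm{\mu}^{2k}|\mu^\alpha|^2|\hat f_\mu|^2$ for the second piece, and the same without $|\mu^\alpha|^2$ for the first.

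\emph{Evaluate the tail sums.} Evaluate these tail sums by the shell-counting integral from \cref{prop:tail-decay-rate}. In the Gevrey case, the extra polynomial weight $x^{2k+2|\alpha|}$ only raises the power of $N$ in front of $e^{-2rN^{1/s}}$, by the integration-by-parts bound on $I_\beta$. The net effect is that the tail factor is multiplied by roughly $N^{|\alpha|}$ relative to $E_N$. Combining with $S_3$, the $N^{\pm 2k}$ factors cancel. We are left with $2^{d/2}N^{|\alpha|}E_N$ up to the constants hidden in $\lesssim$. Multiplying by $(2N+1)^{d/2}$ then gives $2^{d}N^{|\alpha|+d/2}E_N$. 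The Hölder case is handled identically with polynomial integrands. It needs $p+q$ large enough relative to $|\alpha|$ for the weighted sums to converge; otherwise the bound is vacuous, consistent with \cref{rmk:smooth-dft-no-go}.

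\emph{Main obstacle.} The step I expect to be delicate is the weighted tail sum for $D^\alpha f$. The factor $|\mu^\alpha|$ grows with $\norm{\mu}$ rather than being capped at $N^{|\alpha|}$, so one must check that in the integral $\int_N^\infty x^{2k+2|\alpha|+d-1}e^{-2rx^{1/s}}dx$ the leading term of the $I_\beta$ expansion still dominates. This requires $N^{1/s}$ to be large compared with $s(2k+2|\alpha|+d)/r$. With $k=d$, that is exactly the source of the $(d+|\alpha|)s/r$ threshold later imposed in \cref{thm:high-precision-dft-gevrey}. I would state the proposition's $\lesssim$ as holding in this asymptotic regime and carry the constant explicitly.
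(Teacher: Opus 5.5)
Your proposal is correct and follows essentially the same route as the paper: aliasing sums cancel the $m=0$ terms, discrete Parseval (unitarity of $F_N^{\otimes d}$) supplies the $(2N+1)^{d/2}$ factor, Cauchy--Schwarz uses weights $\norm{\omega+(2N+1)m}^{-2k}$ with $k\ge d$ controlled by \cref{lem:useful}, and the shell-counting tail integral carries the extra $x^{2|\alpha|}$ weight. The differences are minor. The paper simply drops the $\omega^\alpha$ term in $(\omega+(2N+1)m)^\alpha-\omega^\alpha$, citing $\norm{\omega+(2N+1)m}\ge\norm{\omega}$; your triangle-inequality split handles that term explicitly, and it is harmless up to a factor $2$ because $|\omega_j|\le|\omega_j+(2N+1)m_j|$ componentwise on $\Lambda_N$. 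Your remark on when the leading term of $I_\beta$ dominates also makes explicit a large-$N$ condition that the paper hides in its $\simeq$.
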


\begin{proof}
For all $x \in \Gamma_N$ we have
\begin{align}
D^\alpha f(x)
&= \sum_{\omega \in \mathbb{Z}^d} (i\omega)^\alpha
\hat{f}_\omega\, e^{i \langle \omega ,x\rangle} \\
&= \sum_{\omega \in \Lambda_N} e^{i \langle \omega ,x\rangle}
\sum_{m\in\mathbb{Z}^d} (i(\omega + (2N+1)m))^\alpha
\hat{f}_{\omega  + (2N+1)m}.
\end{align}
And using the aliaising sums,
\begin{align}
\tilde D^\alpha_N f_N(x)
&= \sum_{\omega\in \Lambda_N} (i\omega)^\alpha
\tilde{f}_\omega\, e^{i\langle \omega,x\rangle} \\
&= \sum_{\omega\in \Lambda_N} e^{i\langle \omega,x\rangle} \,
\sum_{m\in\mathbb{Z}^d} (i\omega)^\alpha \hat{f}_{\omega+(2N+1)m}.
\label{eq:wrap-ap}
\end{align}
Hence
\begin{align}
D^\alpha f(x) - \tilde D^\alpha_N f_N(x)
= \sum_{\omega\in \Lambda_N} e^{i\langle \omega,x\rangle} \,
\sum_{m\in\mathbb{Z}^d\setminus \{0\}} i^\alpha[(\omega + (2N+1)m)^\alpha - \omega^\alpha] \hat f_{\omega+(2N+1)m}
\end{align}
Since $\norm{\omega + (2N+1)m} \geq \norm{\omega}$ for all $\omega \in \Lambda_N$, this implies that
\begin{align}
\norm{D^\alpha f(x) - \tilde D^\alpha_N f_N(x)}^2
\leq \sum_{\omega\in \Lambda_N}
\left|\sum_{m\in\mathbb{Z}^d\setminus \{0\}} i^\alpha[\omega + (2N+1)m]^\alpha \hat f_{\omega+(2N+1)m}\right|^2.
\end{align}
Now using Parseval's theorem, the Cauchy-Schwartz inequality, and observing that the bound above is independent of the choice of $x \in \Gamma_N$ we have
\begin{align}
\norm{\tilde D^\alpha_N f_N - (D^\alpha f)_N}^2
& \leq (2N+1)^d \sum_{\omega\in \Lambda_N}
\left\{
\left(\sum_{m\in\mathbb{Z}^d\setminus \{0\}} \norm{\omega+(2N+1)m}^{-2k}\right)\right. \nonumber\\
& \hspace{-5mm}\left.\left(\sum_{m\in\mathbb{Z}^d\setminus \{0\}} \norm{\omega+(2N+1)m}^{2k}\norm{(\omega + (2N+1)m)^\alpha}^2 |\hat f_{\omega+(2N+1)m}|^2\right)\right\}\\
& \hspace{-5mm} \leq (2N+1)^d
\underbrace{\left(\max_{\omega \in \Lambda_N}\sum_{m\in\mathbb{Z}^d\setminus \{0\}} \norm{\omega+(2N+1)m}^{-2k}\right)}_{S_1}
\underbrace{\left(\sum_{\omega \not\in \Lambda_N}
 \norm{\omega}^{2k}\norm{\omega^\alpha}^2 |\hat f_\omega|^2\right)}_{S_2}. \nonumber
\end{align}
Once again, assuming $k \geq d$ and using \cref{lem:useful} we have
\begin{equation}
\begin{split}
S_1 \lesssim N^{-2k} \max_{x\in[-1,1]^d} \left(\sum_{m\in\mathbb{Z}^d\setminus \{0\}} \norm{x+2m}^{-2k}\right) \leq N^{-2k} 2^{d+1}.
\label{eq:usefuleq-again}
\end{split}
\end{equation}
For bounding $S_2$ we use the fact that $\norm{\omega^\alpha}_\infty \leq \norm{\omega}_\infty^{2|\alpha|}$ while repeating the technique of the proof of \cref{prop:tail-decay-rate}:
\begin{equation}
S_2 \lesssim d \int_N^\infty x^{2k+ 2|\alpha|+ d- 1}\, x^{-2p-2q}\,dx
=  \mathcal O(\tfrac{d}{2k+ 2|\alpha|+ d- 2p-2q} N^{2k+ 2|\alpha|+ d- 2p-2q})
\end{equation}
in the H\"older case, and
\begin{equation}
S_2 \lesssim d \int_N^\infty x^{2k+ 2|\alpha|+ d- 1} \exp(-2r x^{1/s})\, dx
\simeq \frac{d s}{2r} {N}^{2k+ 2|\alpha|+ d-1/s} e^{-2rN^{1/s}}
\end{equation}
in the Gevrey case. Overall we get
\begin{equation}
\label{eq:diff-dist-bound-holder}
\begin{split}
\norm{\tilde D^\alpha_N f_N - (D^\alpha f)_N}_2
&= (2N+1)^{\frac{d}2}\sqrt{S_1S_2} \lesssim
(2N+1)^{\frac{d}2} \sqrt{2^{d+1} \tfrac{d}{2k+2|\alpha|+d- 2p-2q}} N^{|\alpha|+d/2- p-q} \\
&\lesssim (4N)^{\frac{d}2} \sqrt{\tfrac{d}{2|\alpha|+d-2p-2q}} N^{|\alpha|+d/2- p-q},
\end{split}
\end{equation}
in the H\"older case, and
\begin{equation}
\label{eq:diff-dist-bound-gevrey}
\norm{\tilde D^\alpha_N f_N - (D^\alpha f)_N}_2
= (2N+1)^{\frac{d}2}\sqrt{S_1S_2} \lesssim
(4N)^{\frac{d}2} \sqrt{\frac{d s}{r} {N}^{2|\alpha|+d-1/s}} e^{-rN^{1/s}}
\end{equation}
in the Gevrey case, completing the proof.
\end{proof}

\begin{cor}
\label{cor:qft-diff-err}
Let $N \in \tilde \Omega \left( \left(\max(d,\, (d+ |\alpha|)s) / r\right)^s\right)$ and $f$ be an $s$-Gevrey and $2\pi$-periodic $d$-variate function. We have
\begin{align}
\norm{\tilde D^\alpha_N f_N - (D^\alpha f)_N}_2
\leq C e^{-\frac{r}2 N^{1/s}}.
\end{align}
More explicitly, we may choose
\begin{equation}
\label{eq:N-bound-for-good-gevrey-fourier-diff}
N \geq \left[
\frac{32 \max(d,\, (d+ |\alpha|)s)}{r} \log\!\left(e+ \frac{32 \max(d,\, (d+ |\alpha|)s)}{r}\right)
\right]^{s}.
\end{equation}
\end{cor}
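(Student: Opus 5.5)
The plan mirrors the proof of \cref{cor:qft-func-err}, starting from the Gevrey bound \eqref{eq:diff-dist-bound-gevrey} of \cref{prop:dft-derivative-bound}:
\begin{equation}
\norm{\tilde D^\alpha_N f_N - (D^\alpha f)_N}_2 \lesssim (4N)^{d/2}\sqrt{\tfrac{ds}{r}N^{2|\alpha|+d-1/s}}\, e^{-rN^{1/s}}.
\end{equation}
We split $e^{-rN^{1/s}} = e^{-\frac r2 N^{1/s}}\cdot e^{-\frac r2 N^{1/s}}$. It then suffices to show that, for $N$ as in \eqref{eq:N-bound-for-good-gevrey-fourier-diff}, the polynomial prefactor is dominated by one copy $e^{\frac r2 N^{1/s}}$. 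The remaining factor $e^{-\frac r2 N^{1/s}}$, together with the Gevrey constant $C$ absorbed by $\lesssim$ (coming through $|\hat f_\omega|\le Ce^{-r\norm{\omega}_\infty^{1/s}}$), gives the claim.

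Squaring and taking logarithms, with $y := N^{1/s}$, the required inequality is
\begin{equation}
d\ln 4 + \ln\tfrac{ds}{r} + \big(2s|\alpha| + 2sd - 1\big)\ln y \le r y .
\end{equation}
Here $N^{d}\cdot N^{2|\alpha|+d-1/s} = y^{2sd+2s|\alpha|-1}$. Set $K := \max(d,(d+|\alpha|)s)$. The coefficient of $\ln y$ is then at most $2sd+2s|\alpha| \le 2K$, and the constant term is at most $2K + \ln(K/r)$, using $d\ln 4 \le 2d \le 2K$ and $ds\le K$. The condition therefore reduces to the single-parameter inequality
\begin{equation}
2K\ln y + 2K + \ln\tfrac{K}{r} \le r y .
\end{equation}

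To handle the $\ln y$ term we use $\ln y \le y/a + \ln a$ (from $\ln x\le x$), as in \eqref{eq:lny-strong}, now with $a := 8K/r$. This gives $2K\ln y \le ry/4 + 2K\ln(8K/r)$. It then remains to verify
\begin{equation}
\tfrac{3r}{4}\, y \ge 2K\ln\tfrac{8K}{r} + 2K + \ln\tfrac{K}{r},
\end{equation}
whose right-hand side is $\mathcal O\big(K\log(e + K/r)\big)$. The choice $y \ge \frac{32K}{r}\log\!\big(e+\frac{32K}{r}\big)$ meets this with room to spare. The factor $\log(e+\cdot)$ keeps every logarithm positive even when $K/r<1$, so we can bound $\ln(8K/r)$ and $\ln(K/r)$ by $\log(e+32K/r)$ and use $1\le \log(e+32K/r)$ to absorb the additive $2K$. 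Raising to the power $s$ gives \eqref{eq:N-bound-for-good-gevrey-fourier-diff}, which is $\tilde\Omega\big((K/r)^s\big)$.

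The main obstacle is bookkeeping rather than ideas. The crude estimates in \cref{prop:dft-derivative-bound} (the $\lesssim$, the hidden $\sqrt{2^{d+1}}$ from \cref{lem:useful}, and the factor $(2N+1)^{d/2}\le (4N)^{d/2}$) must all be dominated by the $K$-linear terms above without changing the constant $32$. I would first check that each hidden constant is at most $e^{\mathcal O(d)}$, which is absorbed by the $2K$ slack. If the constant $32$ proves too tight, the fallback is to state the result only up to the $\tilde\Omega$ asymptotic form, which the plan establishes robustly.
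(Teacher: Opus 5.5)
Your proposal is correct and is essentially the paper's own argument. The paper also squares and takes logarithms to reach $d\ln 4+\ln\frac{ds}{r}+(2s(d+|\alpha|)-1)\ln y\le ry$ with $y=N^{1/s}$ and $K=\max(d,(d+|\alpha|)s)$. It controls the $\ln y$ term by maximizing $(2K-1)\ln y-\frac r2 y$, which is equivalent to your $\ln y\le y/a+\ln a$ with $a=8K/r$, and it makes the same final choice $y=\frac{32K}{r}\log(e+\frac{32K}{r})$.

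Your bookkeeping leaves generous slack ($5K\log(e+32K/r)$ against $24K\log(e+32K/r)$), so the modest $e^{\mathcal O(d)}$ constants hidden in \cref{prop:dft-derivative-bound} are absorbed without changing the constant $32$. Bounding each logarithm directly by $\log(e+32K/r)$ also avoids the paper's intermediate comparison $(2w-1)\log\frac{2(2w-1)}{r}\le 2w\log\frac{4w}{r}$ (with $w=K$), which holds only when those logarithms are nonnegative.
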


\begin{proof}
Write $w:= \max(d,\, (d+|\alpha|)s)$, so that $w\geq d\geq 1$. We aim to show, for all $d\geq1, r>0$, and $s>0$,
\begin{equation}
4^d \frac{ds}{r}N^{2|\alpha| + 2d-\frac1s} \leq e^{rN^{1/s}}.
\end{equation}
Set $y := N^{1/s}$ and $z := s(d+|\alpha|)$, so that $z\leq w$. Then the left hand side of the above inequality is $ 4^d \frac{ds}{r}\, y^{\,2z-1}$ and it suffices to show that
\begin{equation}
\label{eq:log-ineq}
d\log 4 + \log\Big(\frac{ds}{r}\Big) + (2z-1) \log y \le r y .
\end{equation}
Since $z\leq w$ and $\log y\geq0$ (as $N\geq1$), we have $(2z-1)\log y \leq (2w-1)\log y$. Because $w\geq1$, we have $2w-1\geq1>0$, so for all $y>0$,
\begin{equation}
\label{eq:calc-ineq}
(2w-1)\log y
\le \frac{r}{2}y + (2w-1) \log\Big(\frac{2(2w-1)}{r}\Big),
\end{equation}
which follows by maximizing $(2w-1) \log y-\frac r2 y$. Now using $d\le w$, $ds\le z\le w$, and noting that $2(2w-1)\le 4w$, we obtain
\begin{equation}
\label{eq:bound-lhs}
d\log 4 + \log\Big(\frac{ds}{r}\Big) + (2w-1) \log\Big(\frac{2(2w-1)}{r}\Big)
\le
w\log 4 + \log\Big(\frac{w}{r}\Big) + 2w \log\Big(\frac{4w}{r}\Big).
\end{equation}
Now the choice
\begin{equation}
\label{eq:y-choice-short}
y := \frac{32w}{r} \log\Big(e+\frac{32w}{r}\Big),
\end{equation}
implies
\begin{equation}
\label{eq:ry-short}
\frac r2 y = 16w \log\Big(e+\frac{32w}{r}\Big).
\end{equation}
But since $e+\frac{32w}{r}\geq \frac{4w}{r}$ and $\log(e+\frac{32w}{r})\ge 1$, we have
\begin{equation}
w\log 4 + \log\Big(\frac{w}{r}\Big) + 2w \log\Big(\frac{4w}{r}\Big)
\le (4w+1) \log\Big(e+\frac{32w}{r}\Big)
\le \frac r2 y ,
\end{equation}
where the last inequality uses $w\ge 1$. Combining this with
\eqref{eq:calc-ineq} proves \eqref{eq:log-ineq}.
\end{proof}

\section{Convolutional structure of Fourier derivatives}
\label{sec:diff-convolution}

Recall again that
\begin{align}
\label{eqn:approximate-differential-again}
\tilde D^\alpha= (F_N^{\otimes d})^{-1} \diag\left( \left\{\prod_{j =1}^{d} (i\omega_j)^{\alpha_j} \right\}_{\omega\in \Lambda_N} \right) F_N^{\otimes d}.
\end{align}
Since $\tilde D^\alpha$ is the projection of $D^\alpha$ on the truncated Fourier domain, we still have
\begin{equation}
\label{eq:fourier-derivative-multiplier}
\widehat{(\tilde D^\alpha f)}_{\ \omega}
= \left( \prod_{j=1}^d (i\omega_j)^{\alpha_j} \right)
\hat f_\omega, \qquad \omega \in \Lambda_N .
\end{equation}

We now define a convolution kernel $a^\alpha : \Gamma_N \to \mathbb{C}$ as the inverse discrete Fourier transform of the multiplier,
\begin{equation}
\label{eq:kernel-def}
a^\alpha(z)
= \frac{1}{(2N+1)^d}
\sum_{\omega \in \Lambda_N}
\left( \prod_{j=1}^d (i\omega_j)^{\alpha_j} \right)
e^{i \langle \omega , z \rangle},
\qquad z \in \Gamma_N .
\end{equation}
Then $\tilde D^\alpha$ acts as a periodic convolution on $\Gamma_N$:
\begin{equation}
\label{eq:convolution-full}
(\tilde D^\alpha f)(x)
= \sum_{y \in \Gamma_N} f(y)\, a^\alpha(x - y),
\qquad x \in \Gamma_N,
\end{equation}
where the subtraction on $\Gamma_N$ is modular. Note that $a^\alpha$ factorizes as
\begin{equation}
\label{eq:kernel-separable}
a^\alpha(z)
= \prod_{j=1}^d a_{\alpha_j}(z_j),
\end{equation}
where the one-dimensional kernels are
\begin{equation}
\label{eq:1d-kernel-def}
a_k(t) = \frac{1}{2N+1}
\sum_{\omega=-N}^{N} (i\omega)^k \, e^{i \omega t}.
\end{equation}

\begin{prop}
\label{prop:ak-sums-to-zero}
Given integers $N\geq 1$ and $k\geq 0$
\begin{equation}
\label{eq:ak-sum}
\sum_{n=-N}^{N} a_k[n] \;=\;
\begin{cases}
1, & k=0,\\
0, & k\geq 1.
\end{cases}
\end{equation}
\end{prop}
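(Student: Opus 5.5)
The plan is to identify $a_k[n]$ with the kernel \eqref{eq:1d-kernel-def} evaluated at the grid point $t_n = \frac{2\pi n}{2N+1}\in\Gamma_N$. Substituting the definition and exchanging the two finite sums gives
\begin{equation}
\sum_{n=-N}^{N} a_k[n] = \frac{1}{2N+1}\sum_{\omega=-N}^{N} (i\omega)^k \sum_{n=-N}^{N} e^{\frac{2\pi i \omega n}{2N+1}}.
\end{equation}
Everything then reduces to evaluating the inner geometric sum.

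The key step is the discrete orthogonality relation. For $\omega\in\{-N,\dots,N\}$, the inner sum runs over a full set of residues modulo $2N+1$. It therefore equals $2N+1$ if $\omega\equiv 0 \pmod{2N+1}$ and $0$ otherwise. Because $|\omega|\le N<2N+1$, the only surviving frequency is $\omega=0$. This yields
\begin{equation}
\sum_{n=-N}^{N} a_k[n] = (i\cdot 0)^k,
\end{equation}
the value of the Fourier multiplier at the zero mode. Equivalently, summing the convolution kernel over $\Gamma_N$ returns $(2N+1)$ times the DFT coefficient of $a_k$ at $\omega=0$, which is the symbol $(i\omega)^k$ at $\omega=0$.

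The case split is then immediate. For $k\ge1$, $(i\cdot0)^k=0$. For $k=0$, we use the convention $0^0=1$, which is consistent with \eqref{eq:1d-kernel-def}, where the $k=0$ multiplier is identically $1$. Hence the sum is $1$ when $k=0$ and $0$ when $k\ge1$.

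There is no genuine obstacle in this argument. The only points needing care are bookkeeping. One must fix the convention that $a_k[n]$ denotes $a_k(2\pi n/(2N+1))$. One must check that the centred index ranges still form a complete residue system, so the geometric-sum identity applies. One must also handle the $0^0$ convention at $k=0$ explicitly.
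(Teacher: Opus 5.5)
Your proposal is correct and follows the paper's own proof: exchange the two finite sums, use the $(2N+1)$-st roots-of-unity orthogonality so that only $\omega=0$ survives, and read off $(i\cdot 0)^k$. Your explicit remarks on the $0^0=1$ convention at $k=0$ and on $|\omega|\le N<2N+1$ are the only bookkeeping the paper leaves implicit.
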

\begin{proof}
Since
\begin{equation}
\label{eq:sum-ak-swap}
\sum_{n=-N}^{N} a_k[n]
= \frac{1}{2N+1}\sum_{\omega=-N}^{N} (i\omega)^k
\sum_{n=-N}^{N} e^{i\omega t_n}
\end{equation}
using the $(2N+1)$-st roots of unity:
\begin{equation}
\label{eq:roots-unity-sum}
\sum_{n=-N}^{N} e^{i\omega t_n}
=\sum_{n=-N}^{N} \exp\!\left(\frac{2\pi i\,\omega n}{2N+1}\right)
=
\begin{cases}
2N+1, & \omega=0,\\
0, & \omega\neq 0,
\end{cases}
\end{equation}
so only the $\omega=0$ term contributes in \eqref{eq:sum-ak-swap}. Hence
\begin{equation}
\label{eq:ak-sum-eval}
\sum_{n=-N}^{N} a_k[n]
= \frac{1}{2N+1}(i\cdot 0)^k(2N+1) = (i\cdot 0)^k,
\end{equation}
which equals $1$ when $k=0$ and equals $0$ for all $k\geq 1$, proving \eqref{eq:ak-sum}.
\end{proof}

\begin{exm}
\label{exm:first-deriv-kernel}
We can derive a closed formula for $a_1$ as follows. Define the Dirichlet kernel
\begin{equation}
\label{eq:dirichlet-kernel}
D_N(t)
= \sum_{\omega=-N}^{N} e^{i \omega t}
= \frac{\sin\!\left((2N+1) \frac{t}2\right)}{\sin(\frac{t}2)} .
\end{equation}
Since
\begin{equation}
\label{eq:dirichlet-derivative}
D_N'(t) = \sum_{\omega=-N}^{N} (i\omega)\, e^{i \omega t},
\end{equation}
the kernel $a_1$ satisfies
\begin{equation}
\label{eq:a1-dirichlet}
a_1(t) = \frac{1}{2N+1} D_N'(t).
\end{equation}
Evaluating at grid points $t = \frac{2\pi n}{2N+1}$ with $n \in \{-N,\ldots,N\}$ yields
\begin{align}
\label{eq:fourier-a}
a_1[n] =\begin{cases}
0,& \text{if } n=0, \\
\frac{(-1)^{n}}{2\sin\left( \frac{\pi\,n}{2N+1} \right)},&
\text{otherwise,}
\end{cases}
\end{align}
which is the standard odd-grid Fourier spectral first-derivative stencil that was also used in \cite{motamedi2022gibbs}.
\end{exm}

\begin{exm}
\label{exm:higher-order-kernel}
We can treat higher-order kernels in exactly the same way. Differentiating the Dirichlet kernel $k$ times gives
\begin{equation}
\label{eq:dirichlet-kth-derivative}
D_N^{(k)}(t)=\sum_{\omega=-N}^{N}(i\omega)^k e^{i\omega t},
\end{equation}
and therefore
\begin{equation}
\label{eq:ak-dirichlet}
a_k(t)=\frac{1}{2N+1}\,D_N^{(k)}(t).
\end{equation}
In particular,
\begin{equation}
\label{eq:a2-dirichlet}
a_2(t)=\frac{1}{2N+1}\,D_N''(t).
\end{equation}
Evaluating again at the odd grid points $t_n=\frac{2\pi n}{2N+1}$, $n\in\{-N,\ldots,N\}$, yields the standard second-derivative Fourier spectral stencil:
\begin{equation}
\label{eq:a2-offdiag}
a_2[n]=\frac{(-1)^{n+1}}{4\sin^2\!\left(\frac{\pi n}{2N+1}\right)},
\qquad n\in\{-N,\ldots,N\}\setminus\{0\}.
\end{equation}
The diagonal entry is fixed by the fact that $\tilde D^2$ annihilates constants, equivalently $\sum_{n=-N}^{N} a_2[n]=0$. Using the trigonometric identity
\begin{equation}
\label{eq:csc2-sum}
\sum_{n=1}^{N}\csc^2\!\left(\frac{\pi n}{2N+1}\right)=\frac{2N(N+1)}{3},
\end{equation}
we obtain
\begin{equation}
\label{eq:a2-diag}
a_2[0]=-\sum_{\substack{n=-N\\ n\neq 0}}^{N} a_2[n]=-\frac{N(N+1)}{3}.
\end{equation}
Combining \eqref{eq:a2-offdiag} and \eqref{eq:a2-diag} gives the closed form
\begin{equation}
\label{eq:a2-closed}
a_2[n]=
\begin{cases}
-\dfrac{N(N+1)}{3}, & n=0,\\[1.0ex]
\dfrac{(-1)^{n+1}}{4\sin^2\!\left(\dfrac{\pi n}{2N+1}\right)}, & n\neq 0,
\end{cases}
\qquad n\in\{-N,\ldots,N\}.
\end{equation}
More generally, \eqref{eq:ak-dirichlet} provides a closed form for all $k\geq 0$ in terms of derivatives of $D_N$, and the multi-dimensional kernel remains separable as in \eqref{eq:kernel-separable}.
\end{exm}

\section{Block-encoding of \texorpdfstring{$R_N$}{R\_N} via LCU}
\label{app:R_N-calculation}

In this section, we present a circuit that constructs an $(N,\log N+2,0)$-block-encoding $\Delta_N$ of $\tilde{\partial}_N$ using $O(\log N)$ gates. Depending on the available ancilla and circuit-depth budgets, the Toffoli count ranges from $(2\log N -1)$ to $\Theta(\log N)$. The operator $\tilde{\partial}_N$ was originally defined in \eqref{eq:single-diag}:
\begin{equation}
\tilde\partial_N= F_N^{-1} \diag\{(i \omega)_{\omega\in \{-N, \cdots, N\}}\} F_N.
\end{equation}
Since the quantum Fourier transform $F_N$ is already efficient to implement, what remains is to construct a block-encoding for the diagonal matrix of eigenvalues sandwiched between $F_N^{-1}$ and $F_N$ in \eqref{eq:single-diag}.

For simplicity we assume $N=2^n$ to be a power of two and instead provide a block-encoding of the slightly off-center diagonal matrix $\diag\{(i \omega)_{\omega\in \{-N, \cdots, N-1\}}\}$. Both of these choices are inconsequential since $N$ can always be padded to the next power of two and removing corner Fourier modes does not affect the results of the paper. We also work with the scaled version of this diagonal matrix:
\begin{equation}
\bar\partial_N := \frac{1}{N}\tilde\partial_N=\frac{1}{N}
\begin{pmatrix}
    -N & & &\\
    & -N+1 & &\\
    & & \ddots &\\
    & & & N-1
\end{pmatrix}.
\end{equation}
We denote the two diagonal blocks of $\bar\partial_N$ by $D_-$ and $D_+$, respectively:
\begin{align}
    D_-&=\frac{1}{N}\diag\{N,N-1,\ldots,1\},\\
    D_+&=\frac{1}{N}\diag\{0,1,\ldots,N-1\},
\end{align}
so that $\bar\partial_N$ is the direct sum
\begin{align}
    \bar\partial_N=\ket{0}\bra{0}\otimes (-D_-)+\ket{1}\bra{1}\otimes D_+.
\end{align}

Our construction proceeds first by deriving circuits for $U_-$ and $U_+$, block encoding $D_-$ and $D_+$, respectively (\cref{prop:D-pm-block-encoding}). We then combine the two block-encodings to obtain a block-encoding $ \Delta_N $ of $\bar\Delta_N$ as follows:
\begin{equation}
\Delta_N=
\begin{tikzpicture}[baseline={(m.center)}]
\matrix (m) [matrix of math nodes, left delimiter={(}, right delimiter={)},
    nodes={minimum width=0.9em, minimum height=0.9em, anchor=center},
    column sep=0.2em, row sep=0.2em] {
    D_- &[0.05em] &[0.6em] & \\[0.05em]
     & D_+ & & \\[0.6em]
    & & & \\
    & & & \\
};
\draw[dashed] ($(m-2-1.south west)!0.7!(m-3-1.north west)+(-0.8em,0)$) -- ($(m-2-4.south east)!0.7!(m-3-4.north east)+(0.5em,0)$);
\draw[dashed] ($(m-1-2.north east)!0.7!(m-1-3.north west)+(0,0.45em)$) -- ($(m-4-2.south east)!0.7!(m-4-3.south west)+(0,-0.45em)$);
\node[scale=1.2] at ($(m-1-3.north west)!0.5!(m-2-4.south east)$) {$*$};
\node[scale=1.2] at ($(m-3-1.north west)!0.5!(m-4-2.south east)$) {$*$};
\node[scale=1.2] at ($(m-3-3.north west)!0.5!(m-4-4.south east)$) {$*$};
\end{tikzpicture}
,\qquad\text{so that}\qquad
\bra{0_{\mathtt{anc}}}\bra{0_{\mathtt{bool}}} \Delta_N \ket{0_{\mathtt{anc}}}\ket{0_{\mathtt{bool}}}=\bar\partial_N.
\end{equation}
The full circuit is shown in \cref{fig:fullcircuit}. We implement $U_-$ and $U_+$ on registers \texttt{data}, \texttt{bool}, and \texttt{anc}, where \texttt{bool} and \texttt{anc} are auxiliary registers and $D_-$ or $D_+$ is applied to \texttt{data}. The purpose of the additional \texttt{ctrl} register is to attach a $-1$ phase to $D_-$, forming the direct sum of $-D_-$ and $D_+$.

\begin{figure}[t]
\centering
\begin{adjustbox}{max width=0.65\textwidth}
\begin{tikzpicture}[
    thick,
    gate/.style={draw, minimum size=0.7cm, inner sep=2pt, font=\small, fill=white},
    every node/.style={font=\small},
]
\def\yc{0}      
\def\yb{-1}     
\def\ya{-2}     
\def\yd{-3}     

\def\xone{1.4}   
\def\xtwo{3.1}   
\def\xthree{4.9} 
\def\xfour{6.7}  
\def\xend{8.0}

\draw (0,\yc) -- (\xend,\yc);
\draw (0,\yb) -- (\xend,\yb);
\draw (0,\ya) -- (\xend,\ya);
\draw (0,\yd) -- (\xend,\yd);

\node[anchor=east] at (-0.2,\yc) {$\mathtt{ctrl}\ \ket{0}_1$};
\node[anchor=east] at (-0.2,\yb) {$\mathtt{bool}\ \ket{0}_1$};
\node[anchor=east] at (-0.2,\ya) {$\mathtt{anc}\ \ket{0}_n$};
\node[anchor=east] at (-0.2,\yd) {$\mathtt{data}\ \ket{\phi}_n$};

\node[gate, minimum width=1.1cm] at (\xone,\yc) {$XZX$};
\node[gate] at (\xone,\ya) {$H^n$};

\draw (\xtwo,\yc) -- (\xtwo,\yb);
\filldraw (\xtwo,\yc) circle (0.05);
\node[gate] at (\xtwo,\yb) {$X$};

\node[gate, minimum width=1.1cm] at (\xthree,\yb) {$d<a$};
\draw (\xthree,{\yb-0.35}) -- (\xthree,\yd);
\filldraw (\xthree,\ya) circle (0.05);
\filldraw (\xthree,\yd) circle (0.05);

\node[gate] at (\xfour,\ya) {$H^n$};

\end{tikzpicture}
\end{adjustbox}
\caption{Circuit for $ \Delta_N $, combining the block-encodings of \cref{fig:Dcircuit} via the \texttt{ctrl} register: a controlled bit flip attaches the optional $X$ gate distinguishing $U_+$ from $U_-$, an additional single-qubit gate on \texttt{ctrl} supplies the $-1$ phase attached to the $D_-$ branch, and the shared comparator $C_<$ then acts on \texttt{anc} and \texttt{data} exactly as in \cref{fig:a,fig:b}.}
\label{fig:fullcircuit}
\end{figure}

\begin{prop}\label{prop:D-pm-block-encoding}
For each $\sigma\in\{-,+\}$, there is a circuit, shown in \cref{fig:a,fig:b}, that implements $U_\sigma$, a $(1,\log N+1,0)$-block-encoding of $D_\sigma$, with $O(\log N)$ elementary gates.
\end{prop}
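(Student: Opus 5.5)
We block-encode $D_\sigma$ with the standard ``comparator'' construction, which turns a diagonal matrix of integers divided by $N$ into a probability over a uniform superposition. Take $N=2^n$. Apply $H^{\otimes n}$ to the $n$-qubit \texttt{anc} register to prepare $\frac{1}{\sqrt N}\sum_{a=0}^{N-1}\ket a$. Then apply a reversible comparator $C_<$ that flips the \texttt{bool} qubit exactly when $d<a$, where $d$ is the computational-basis value of the \texttt{data} register. Finally, uncompute with a second $H^{\otimes n}$ on \texttt{anc}. Projecting \texttt{anc} and \texttt{bool} onto $\ket0$ leaves \texttt{data} in $\ket d$ and produces the scalar
\begin{equation}
\frac1N\sum_{a=0}^{N-1}\mathbb 1[\text{\texttt{bool} unflipped}](a,d).
\end{equation}
This scalar counts the values of $a$ that do not trigger a flip. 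Because the comparator acts diagonally on \texttt{data}, the resulting operator on \texttt{data} is diagonal. Each of the two Hadamard layers contributes a factor $N^{-1/2}$, so together they give the normalization $1/N$. The block encoding is therefore $(1,n+1,0)$: exact, with normalization $1$ and $n+1=\log N+1$ ancillas.

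Next we check the diagonal entries for each case. For $U_+$, \texttt{bool} starts in $\ket0$ and is flipped when $d<a$. The surviving count is $\#\{a: a\le d\}=d+1$, which gives $\diag\{(d+1)/N\}$. To get $D_+=\diag\{d/N\}$, the data labelling is shifted by one; equivalently, we use the strict/non-strict variant of the comparison so that the count is $\#\{a<d\}=d$. For $U_-$, an $X$ gate is first applied to \texttt{bool}, so the projection onto $\ket0$ now keeps exactly the flipped branches. The surviving count becomes $\#\{a>d\}$, or $\#\{a\ge d\}$ for the complementary comparison, which equals $N-d$. This gives $\diag\{N,N-1,\dots,1\}/N=D_-$, matching the ordering of $D_-$ by taking $d=0,\dots,N-1$. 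The main obstacle I expect is exactly this off-by-one bookkeeping: matching the strictness of the comparator, and the convention for indexing \texttt{data}, to the prescribed diagonals $\{0,\dots,N-1\}$ and $\{N,\dots,1\}$. I would settle it by fixing one comparator $d<a$ and choosing the labelling of \texttt{data} so that both counts come out right. If that fails, I would use the complementary comparator for one of the two branches.

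Finally, the gate count. The two Hadamard layers use $2n$ gates, and the optional $X$ is a single gate. An $n$-bit comparator $C_<$ can be built from $O(n)$ elementary gates, for example as a ripple-carry subtraction that keeps only the borrow bit, followed by uncomputation of the intermediate carries. Depending on ancilla availability, this can be done with $n-1$ to $O(n)$ Toffolis. Any temporary workspace the comparator uses is returned to $\ket0$, so it does not change the block-encoding parameters. The total cost is $O(\log N)$ elementary gates, which proves the claim for both $\sigma\in\{-,+\}$.
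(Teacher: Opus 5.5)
Your construction is the same as the paper's: a uniform superposition on \texttt{anc}, a comparator writing a flag onto \texttt{bool}, an optional $X$ on \texttt{bool}, a second Hadamard layer, and a count of the surviving branches. Your normalization $1/N$, ancilla count $n+1$, and $O(\log N)$ gate count all agree with the paper.

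The one step that fails as written is your resolution of the off-by-one you flagged. With the strict flag $[d<a]$ and your $X$ placement, the two circuits encode $\diag\{(d+1)/N\}$ and $\diag\{(N-1-d)/N\}$. Their spectra are $\{1,\dots,N\}/N$ and $\{0,\dots,N-1\}/N$, which are the spectra of $D_-$ and $D_+$ respectively, i.e.\ the wrong way round. So no relabelling of \texttt{data}, in particular no ``shift by one'', can repair either circuit as you assigned them. Likewise, changing the strictness of the comparator in only one branch repairs only that branch.

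What does work is to use the non-strict flag $[d\le a]$ in both circuits. This choice is already implicit in your two counts $\#\{a<d\}=d$ and $\#\{a\ge d\}=N-d$. Since $[d\le a]=1-[a<d]$, the result is literally the paper's circuit. The paper writes the flag $1_{a<d}$ (ancilla value below data value) onto \texttt{bool} and puts the $X$ on the $U_+$ branch instead of $U_-$. The $\ket{0}$ branch of \texttt{bool} then keeps $a\ge d$ (count $N-d$) for $U_-$ and $a<d$ (count $d$) for $U_+$.
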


\begin{figure}[htbp]
\centering

\begin{subfigure}{0.48\textwidth}
\centering
\begin{adjustbox}{max width=\linewidth}
\begin{tikzpicture}[
    thick,
    gate/.style={draw, minimum size=0.7cm, inner sep=2pt, font=\small, fill=white},
    every node/.style={font=\small},
]
\def\yb{0}      
\def\ya{-1}     
\def\yd{-2}     

\def\xone{1.4}   
\def\xtwo{3.1}   
\def\xthree{4.9} 
\def\xend{6.2}

\draw (0,\yb) -- (\xend,\yb);
\draw (0,\ya) -- (\xend,\ya);
\draw (0,\yd) -- (\xend,\yd);

\node[anchor=east] at (-0.2,\yb) {$\mathtt{bool}\ \ket{0}_1$};
\node[anchor=east] at (-0.2,\ya) {$\mathtt{anc}\ \ket{0}_n$};
\node[anchor=east] at (-0.2,\yd) {$\mathtt{data}\ \ket{\phi}_n$};

\node[gate] at (\xone,\ya) {$H^n$};

\node[gate, minimum width=1.1cm] at (\xtwo,\yb) {$d<a$};
\draw (\xtwo,{\yb-0.35}) -- (\xtwo,\yd);
\filldraw (\xtwo,\ya) circle (0.05);
\filldraw (\xtwo,\yd) circle (0.05);

\node[gate] at (\xthree,\ya) {$H^n$};

\end{tikzpicture}
\end{adjustbox}
\caption{}
\label{fig:a}
\end{subfigure}
\hfill
\begin{subfigure}{0.48\textwidth}
\centering
\begin{adjustbox}{max width=\linewidth}
\begin{tikzpicture}[
    thick,
    gate/.style={draw, minimum size=0.7cm, inner sep=2pt, font=\small, fill=white},
    every node/.style={font=\small},
]
\def\yb{0}      
\def\ya{-1}     
\def\yd{-2}     

\def\xone{1.4}   
\def\xtwo{3.1}   
\def\xthree{4.9} 
\def\xend{6.2}

\draw (0,\yb) -- (\xend,\yb);
\draw (0,\ya) -- (\xend,\ya);
\draw (0,\yd) -- (\xend,\yd);

\node[anchor=east] at (-0.2,\yb) {$\mathtt{bool}\ \ket{0}_1$};
\node[anchor=east] at (-0.2,\ya) {$\mathtt{anc}\ \ket{0}_n$};
\node[anchor=east] at (-0.2,\yd) {$\mathtt{data}\ \ket{\phi}_n$};

\node[gate] at (\xone,\yb) {$X$};
\node[gate] at (\xone,\ya) {$H^n$};

\node[gate, minimum width=1.1cm] at (\xtwo,\yb) {$d<a$};
\draw (\xtwo,{\yb-0.35}) -- (\xtwo,\yd);
\filldraw (\xtwo,\ya) circle (0.05);
\filldraw (\xtwo,\yd) circle (0.05);

\node[gate] at (\xthree,\ya) {$H^n$};

\end{tikzpicture}
\end{adjustbox}
\caption{}
\label{fig:b}
\end{subfigure}

\caption{Block-encodings of $D_-$ (a) and $D_+$ (b) (\cref{prop:D-pm-block-encoding}). In both circuits a uniform superposition over \texttt{anc} is compared against \texttt{data} using $C_<$, and the comparison bit is written onto \texttt{bool}; the two are identical except that, in (b), an $X$ gate is applied to \texttt{bool} before the comparator, flipping which branch is selected.}
\label{fig:Dcircuit}
\end{figure}

\begin{proof}
Let $\tau=0$ if $\sigma=-$ and $\tau=1$ if $\sigma=+$, so that $\tau$ counts the number of $X$ gates applied to \texttt{bool} before the comparator. We track the action of the circuit for $U_\sigma$ on an arbitrary computational basis state $\ket{x}_d$ of the \texttt{data} register:
\begin{itemize}
\item Generate a uniform superposition with $H^{\otimes \log N}$ on \texttt{anc}:
\begin{equation}
\frac{1}{\sqrt{N}}\sum_{j=0}^{N-1} \ket{0}_b\ket{j}_a\ket{x}_d.
\end{equation}
\item Apply $X^\tau$ to \texttt{bool} (a no-op for $U_-$, a bit flip for $U_+$), then the reversible comparator circuit $C_<$, which writes the comparison bit onto \texttt{bool}:
\begin{equation}
C_<:\ket{\tau}_b\ket{j}_a\ket{x}_d\mapsto \ket{\tau\oplus 1_{j<x}}_b\ket{j}_a\ket{x}_d,
\end{equation}
yielding the state
\begin{equation}
\frac{1}{\sqrt{N}}\sum_{j=0}^{x-1} \ket{\tau\oplus1}_b\ket{j}_a\ket{x}_d+\frac{1}{\sqrt{N}}\sum_{j=x}^{N-1} \ket{\tau}_b\ket{j}_a\ket{x}_d.
\end{equation}
\item Apply $H^{\otimes \log N}$ to \texttt{anc}. The $\ket{0}_b$ branch above contains the $j=x,\ldots,N-1$ terms when $\tau=0$, and the $j=0,\ldots,x-1$ terms when $\tau=1$, so the coefficient of $\ket{0}_b\ket{0}_a$ is $\frac{N-x}{N}$ if $\sigma=-$ and $\frac{x}{N}$ if $\sigma=+$, i.e.\ exactly $D_\sigma(x)$ in both cases.
\end{itemize}

This confirms that $U_\sigma$ is a correct block-encoding of $D_\sigma$:
\begin{equation}
\bra{0}_b\bra{0}_a U_\sigma\ket{0}_b\ket{0}_a\ket{x}_d= D_\sigma(x)\,\ket{x}_d.
\end{equation}
The resources used are the same for both values of $\sigma$:
\begin{itemize}[noitemsep, topsep=5pt]
\item The comparator stores its output in the \texttt{bool} qubit. Among the considered constructions, the most Toffoli-efficient implementation requires $2\log N-1$ Toffoli gates \cite{cuccaro2004new}, while a depth-optimized implementation achieves $\Theta(\log\log N)$ depth with $\Theta(\log N)$ Toffoli gates \cite{vandaele2026asymptotically}. The choice of comparator can be adapted to the available ancilla and depth budgets.
\item $2\log N$ Hadamard gates; and,
\item at most one additional $X$ gate.
\end{itemize}
Therefore, $U_\sigma$ is a $(1,\log N+ 1,0)$-block-encoding of $D_\sigma$ for each $\sigma\in\{-,+\}$.
\end{proof}

\section{Prior literature on solving the elliptic equations}
\label{app:elliptic_literature_review}

We review four quantum algorithms for solving the Poisson equation and, more generally, linear elliptic PDEs, drawn from three papers: a Hamiltonian-simulation-based approach \cite{cao2013quantum}; a finite-difference and a spectral method \cite{childs2021high}; and a quantum-singular-value-transformation (QSVT) approach, together with its accelerated ``fast inversion'' variant, \cite{tong2021fast}. Since these methods solve an isotropic Poisson or elliptic equation, we specialize our \cref{cor:poisson} to the isotropic case (setting $\Sigma=I$, so $\norm{\Sigma}_{1,1}=\sum_i|\sigma_{ii}|=d$ and $\sigma_{\min}^+=\min_i|\sigma_{ii}|=1$) for a direct comparison in \cref{tab:poisson-comparison-isotropic}.

\begin{table}[b]
\centering
\begin{adjustbox}{max width=\textwidth}
\begin{tabular}{@{}llll@{}}
\toprule
Method & Operator \& b.c.\ (isotropic) & Regularity assumption & Leading gate complexity \\
\midrule
Hamiltonian sim.\ \cite{cao2013quantum} & $-\nabla^2u=f$, Dirichlet & $u\in \mathcal C^r$ & $\epsilon^{-4/(r-2)}\max\{d,\log\frac1\epsilon\}\log^3\frac1\epsilon$ \\
Adaptive FDM \cite{childs2021high} & $-\nabla^2u=f$, periodic & $k=\Theta\!\left(d\log(|u^{(2k+1)}|/\epsilon)\right)$ & $d^2k^{4.5}\sqrt{\log(dk^3/\epsilon)}$ \\
QSVT \cite{tong2021fast} & $-\nabla^2u+u=b$, periodic & $u \in \mathcal G^s$ (our inference) & $d\log^{2s+1}\frac1\epsilon$ \\
Fast inversion \cite{tong2021fast} & $-\nabla^2u+u=b$, periodic & $u \in \mathcal G^s$ (our inference) & $ds\log\log\frac1\epsilon$ \\
Ours (\cref{cor:poisson}, isotropic) & $-\nabla^2u=\eta$, periodic & $u \in \mathcal G^s$ & $d^{2s+2}\log^{2s}\frac1\epsilon$ \\
\bottomrule
\end{tabular}
\end{adjustbox}
\caption{A comparison of isotropic-operator solvers. The $\epsilon^{-4/(r-2)}=\mathcal O(\kappa^2)$ factor for Cao et al.\ boosts their fixed per-shot success probability arbitrarily close to $1$; it does not multiply their base gate count the way $\kappa$ does for the other rows. Rows citing ``our inference'' resolve $N$ ourselves where the cited theorem leaves it free. For all the Gevrey regularity assumptions we have $s>0$ without restriction.}
\label{tab:poisson-comparison-isotropic}
\end{table}

\medskip
\noindent{\bf Hamiltonian simulation \cite{cao2013quantum}.}
The earliest quantum circuit for the Poisson equation \cite{cao2013quantum} solves the isotropic equation $-\nabla^2 u = f$ with Dirichlet boundary conditions by directly simulating the Hamiltonian of the discretized Laplacian, rather than by querying it through an oracle: since the discretized Laplacian is diagonalized exactly by the quantum Fourier transform (QFT), its exponential can be implemented from the analytically known eigenvalues alone, without ever invoking an oracle $O_A$ for the matrix entries. Assuming $f\in\mathcal C^r$ fixes $N=\epsilon^{-1/(r-2)}$, where $N$ is the number of grid points per dimension. Here we are refering to finite-difference Taylor-remainder theory, since an order-$(r-2)$-accurate stencil needs $r$ bounded derivatives. The paper itself only discussed the $r=4$ example. We therefore paraphrase their gate and qubit counts as follows.

\begin{thm}[{\cite{cao2013quantum}}]
\label{thm:litrev-cao-hamsim}
There exists a quantum circuit that solves the $d$-dimensional Poisson equation on a Dirichlet grid to error $\epsilon$ using
$\max\{d,\log(1/\epsilon)\}\,\mathcal O\!\left((\log d+\log(1/\epsilon))^3\right)$
elementary gates and
$\max\{d,\log(1/\epsilon)\}\,\mathcal O\!\left((\log d+\log(1/\epsilon))^2\right)$
qubits (including ancillae). More precisely, before simplifying using $\log d = \mathcal O(\log(1/\epsilon))$, the gate count is $d\, \mathcal O\!\left((\log d + \log(1/\epsilon))^3\right) + \mathcal O\!\left(\log^4(1/\epsilon)\right)$, where the $\log^4(1/\epsilon)$ term comes from implementing the controlled rotation at the end of the linear-systems algorithm with success probability $\Omega(\epsilon^{4/(r-2)})$ per shot.
\end{thm}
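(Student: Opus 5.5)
The statement is Theorem (Cao et al.) in the literature review: a circuit for $d$-dimensional Poisson with Dirichlet BCs. I would prove it in three stages, treating the Laplacian as a Hamiltonian. The target cost is $\max\{d,\log(1/\epsilon)\}\,\mathcal O((\log d+\log(1/\epsilon))^3)$ gates.

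\textbf{Discretization.} Discretize $-\nabla^2$ on a uniform Dirichlet grid with $N=\epsilon^{-1/(r-2)}$ points per dimension, using an order-$(r-2)$ finite-difference stencil. By Taylor-remainder theory this gives truncation error $\mathcal O(\epsilon)$ when $f\in\mathcal C^r$. The resulting matrix $A$ is a Kronecker sum $\sum_j I\otimes\cdots\otimes A_1\otimes\cdots\otimes I$ of one-dimensional Dirichlet Laplacians. Each $A_1$ is diagonalized exactly by the discrete sine transform, which is implementable via a QFT on $\mathcal O(\log N)$ qubits, and its eigenvalues are known in closed form, e.g.\ $\lambda_k=4N^2\sin^2(k\pi/2N)$ for the second-order stencil.

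\textbf{Hamiltonian simulation and inversion.} Since the $d$ summands of $A$ commute, $e^{iAt}=\bigotimes_j e^{iA_1t}$. Each factor is implemented without oracles, in three steps:
\begin{itemize}
\item apply a sine transform;
\item compute $\lambda_k t$ to $b=\mathcal O(\log d+\log(1/\epsilon))$ bits of precision with reversible arithmetic (the $\sin^2$ evaluation dominates at $\mathcal O(b^3)$ gates);
\item apply a phase kickback, then uncompute.
\end{itemize}
Controlled on a $b$-bit clock register, this gives phase estimation of $A$ with $d\cdot\mathcal O(b^3)$ gates, since the $d$ coordinate registers are processed with the same per-register circuit. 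The number of clock bits and the eigenvalue precision must scale with $\log d$, because the eigenvalues of $A$ range up to $\mathcal O(dN^2)$. This is the source of the $\log d$ terms.

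\textbf{Controlled rotation and error accounting.} After phase estimation, compute $1/\lambda$ into an ancilla and perform the HHL controlled $R_y$ rotation. Both operations act on $b$-bit numbers and cost $\mathcal O(\log^4(1/\epsilon))$ gates, which produces the additive term. The qubit count is $d\log N$ data qubits plus $\mathcal O(b)$ clock qubits, plus the $\mathcal O(b^2)$ workspace used by the arithmetic per register. Using $\log N=\mathcal O(\log(1/\epsilon))$, this is bounded by $\max\{d,\log(1/\epsilon)\}\,\mathcal O(b^2)$. Combining the stages gives the stated gate bound, which simplifies with $\log d=\mathcal O(\log(1/\epsilon))$.

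\textbf{Main obstacle.} The hardest part is the error budget. Phase-estimation error, finite-precision arithmetic, and the discretization error must together stay below $\epsilon$ while keeping $b$ logarithmic. The difficulty is that the condition number $\kappa\sim\lambda_{\max}/\lambda_{\min}=\Theta(dN^2)$ enters the rotation amplitude. My approach would be to avoid bounding $\kappa$ into the gate count at all: the rotation succeeds per shot with probability $\Omega(1/\kappa^2)=\Omega(\epsilon^{4/(r-2)})$, and this factor is paid only as a repetition count to boost success, not as a multiplier on the circuit size. This matches the caveat in the comparison table. The remaining work is a standard HHL perturbation estimate showing that $b=\mathcal O(\log d+\log(1/\epsilon))$ bits keep the relative eigenvalue error, and hence the state error, at $\mathcal O(\epsilon)$.
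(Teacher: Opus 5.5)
The paper gives no proof of its own here; it only paraphrases Cao et al. Your outline matches that account: sine-transform/QFT diagonalization with closed-form eigenvalues, oracle-free simulation of $e^{iAt}$ inside phase estimation at cost $d\cdot\mathcal O(b^3)$, a final reciprocal and controlled rotation that produce the additive $\log^4(1/\epsilon)$ term, and an $\Omega(\epsilon^{4/(r-2)})$ per-shot success probability paid in repetitions rather than circuit size.

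However, one step is wrong as written: $\kappa=\Theta(dN^2)$. For the Dirichlet Laplacian on the cube, the smallest eigenvalue is the sum of the $d$ smallest one-dimensional ones, $d\cdot 4N^2\sin^2(\pi/2N)\approx d\pi^2$, while $\lambda_{\max}\approx 4dN^2$. Hence $\kappa\approx 4N^2/\pi^2=\Theta(N^2)$, independent of $d$. Your $\Theta(dN^2)$ count belongs to the periodic case, where the smallest nonzero eigenvalue comes from a single axis mode. With your $\kappa$, the chain $\Omega(1/\kappa^2)=\Omega(\epsilon^{4/(r-2)})$ breaks: you would get $\Omega(\epsilon^{4/(r-2)}/d^2)$, a success probability that decays with dimension, contrary to the statement.

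The same $d$-independence is what you need for the refined, pre-simplification form of the gate count. After rescaling by $d$, the eigenvalues occupy a window of dynamic range $\Theta(N^2)$, so the reciprocal and rotation can work with $\mathcal O(\log(1/\epsilon))$ bits. Your claim that they act on $b$-bit numbers with $b=\mathcal O(\log d+\log(1/\epsilon))$ instead yields an additive $\mathcal O((\log d+\log(1/\epsilon))^4)$. That matches the statement only after invoking $\log d=\mathcal O(\log(1/\epsilon))$.

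A smaller caveat concerns general $r$. Exact sine-transform diagonalization with closed-form eigenvalues holds for the standard three-point stencil, i.e.\ $r=4$, which is the only case Cao et al.\ treat, as the paper notes. A wider centered stencil of order $r-2>2$ needs ghost values next to a Dirichlet boundary. The natural closure that preserves the sine eigenbasis is odd reflection. It leaves an $\mathcal O(1)$ local truncation error proportional to $\partial_n^2 u$, which equals $-f$ on a face of the cube for $-\nabla^2 u=f$, at the first interior nodes. That degrades the global error to $\mathcal O(N^{-2})$ unless $f$ vanishes on the boundary. One-sided high-order closures keep the accuracy but destroy the closed-form diagonalization your Hamiltonian-simulation step relies on. So for $r>4$ you must either move the $f$-dependent boundary correction into the source or change the simulation step. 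Your Taylor-remainder sentence alone does not give $\mathcal O(\epsilon)$ error with $N=\epsilon^{-1/(r-2)}$.
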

The $\log^4(1/\epsilon)$ cost of this controlled rotation is a recurring bottleneck. As mentioned when we introduce the bit-oracle \eqref{eq:pde-coeff-oracle} for the coefficient functions, this can be improved to $2.5$ using the diagonal block-encoding of \cite[Lemma~48]{gilyen2019quantum}.

\medskip
\noindent{\bf Finite-difference and spectral methods \cite{childs2021high}.}
The second paper we review instead queries the coefficient matrix $A$ of the linear system $Ax=b$ entrywise, through an oracle $\mathcal P:\ket{j,k,z}\mapsto\ket{j,k,A_{jk}\oplus z}$, together with an oracle $\mathcal U_b$ that prepares $\ket b$ in time $\mathcal O(\polylog N)$. Its underlying linear solver is that of \cite{childs2017quantum}:
\begin{thm}[{\cite{childs2017quantum}}]
The quantum linear systems problem can be solved with $\mathcal O(\kappa\sqrt{\log(\kappa/\epsilon)})$ uses of a Hamiltonian simulation algorithm that approximates $\exp(-iAt)$ for $t=\mathcal O(\kappa\log(\kappa/\epsilon))$ to precision $\mathcal O(\epsilon/(\kappa\sqrt{\log(\kappa/\epsilon)}))$, and $\mathcal O(\kappa\sqrt{\log(\kappa/\epsilon)})$ uses of $\mathcal U_b$. Using the best known Hamiltonian simulation algorithm, this amounts to $\mathcal O(d\kappa^2\log^{2.5}(\kappa/\epsilon))$ queries to $\mathcal P$ and gate complexity $\mathcal O\!\left(d\kappa^2\log^{2.5}(\kappa/\epsilon)\left(\log N+\log^{2.5}(\kappa/\epsilon)\right)\right)$, using $\mathcal O(\log(1/\epsilon))$ ancillae for the phase-estimation-based inversion.
\end{thm}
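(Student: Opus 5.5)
The plan is to follow the Fourier-series linear-combination-of-unitaries (LCU) route: approximate $A^{-1}$ by a linear combination of evolutions $e^{-iAt}$, implement that combination with LCU, boost the success probability with amplitude amplification, and finally substitute a concrete sparse Hamiltonian simulation subroutine to obtain query and gate counts. Throughout, normalize $A$ to be Hermitian with spectrum in $D_\kappa := [-1,-1/\kappa]\cup[1/\kappa,1]$; a non-Hermitian $A$ is handled by the standard Hermitian dilation.

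\emph{Step 1 (Fourier representation of $1/x$).} I would start from the exact identity
\[
\frac1x=\frac{i}{\sqrt{2\pi}}\int_0^\infty dy\int_{-\infty}^{\infty}dz\; z\,e^{-z^2/2}e^{-ixyz},
\]
and truncate and discretize it as
\[
h(x)=\frac{i}{\sqrt{2\pi}}\sum_{j=0}^{J-1}\Delta_y\sum_{k=-K}^{K}\Delta_z\, z_k e^{-z_k^2/2}e^{-ixy_jz_k}.
\]
I would choose $y_J=\Theta(\kappa\sqrt{\log(\kappa/\epsilon)})$, $z_K=\Theta(\sqrt{\log(\kappa/\epsilon)})$, and grid spacings $\Delta_y=\Theta(\epsilon/\sqrt{\log(\kappa/\epsilon)})$ and $\Delta_z=\Theta((\kappa\sqrt{\log(\kappa/\epsilon)})^{-1})$. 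The claim to establish is $\sup_{x\in D_\kappa}|h(x)-1/x|\le\epsilon$. To do so I would split the error into three pieces, each bounded uniformly for $|x|\ge1/\kappa$:
\begin{enumerate}
\item The $y$-tail beyond $y_J$. After integrating out $z$, this piece is controlled by $\int_{y_J}^\infty e^{-x^2y^2/2}\,dy$, so it is a Gaussian tail.
\item The $z$-tail beyond $z_K$. This is a Gaussian tail in $z$.
\item The Riemann-sum discretization error in both variables. This is bounded by the derivatives of the integrand, which are polynomial in $xy$ times a Gaussian.
\end{enumerate}
The longest evolution time appearing in $h$ is then $t_{\max}=y_Jz_K=\mathcal O(\kappa\log(\kappa/\epsilon))$. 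The $\ell^1$ weight of the coefficients is
\[
\alpha=\sum_{j,k}\Delta_y\Delta_z|z_k|e^{-z_k^2/2}=\mathcal O(y_J)=\mathcal O\bigl(\kappa\sqrt{\log(\kappa/\epsilon)}\bigr).
\]

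\emph{Step 2 (LCU and amplification).} I would prepare the coefficient state on a $\log(JK)$-qubit index register and apply $e^{-iAy_jz_k}$ controlled on that register. The phase $i\,\mathrm{sgn}(z_k)$ is absorbed into the unitaries. Unpreparing the register then block-encodes $h(A)/\alpha$. Since $\|A\|\le1$, we have $\|A^{-1}b\|\ge1$, so the success amplitude is at least $1/\alpha$. Oblivious or standard amplitude amplification therefore needs $\mathcal O(\alpha)=\mathcal O(\kappa\sqrt{\log(\kappa/\epsilon)})$ rounds. Each round uses one call to the controlled simulation and one call to $\mathcal U_b$, which gives the stated use counts. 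For the error budget, applying \eqref{eq:distance-scaling} to the normalized output shows that an $\mathcal O(\epsilon)$ relative error in $h(A)b$ suffices. Because the simulation errors accumulate linearly over the amplification rounds, the required per-simulation precision is $\mathcal O(\epsilon/(\kappa\sqrt{\log(\kappa/\epsilon)}))$.

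\emph{Step 3 (instantiation).} I would plug in a $d$-sparse Hamiltonian simulation algorithm whose query cost is roughly $\mathcal O(d\,t\,\mathrm{polylog}(dt/\delta))$, taking $t=t_{\max}$ and $\delta$ equal to the per-simulation precision from Step 2. Multiplying by the $\mathcal O(\kappa\sqrt{\log(\kappa/\epsilon)})$ rounds should give
\[
\mathcal O\bigl(d\kappa^2\log^{2.5}(\kappa/\epsilon)\bigr)
\]
queries. The extra powers of the logarithm, beyond the $\log\cdot\sqrt{\log}$ coming from $t_{\max}$ and $\alpha$, enter through the simulation algorithm's precision dependence. The gate count follows by multiplying each query by its implementation cost, namely $\mathcal O(\log N)$ for index arithmetic on the $\log N$-qubit register and $\mathcal O(\log^{2.5}(\kappa/\epsilon))$ for the arithmetic and controlled rotations inside the simulation. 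The ancilla count comes from the index register together with the simulation's own ancillae.

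The step I expect to be the main obstacle is Step 1: obtaining a uniform error bound for $h$ on all of $D_\kappa$ while keeping $t_{\max}$ at $\kappa\log(\kappa/\epsilon)$ rather than at something larger. The difficulty is that the Riemann-sum error in $y$ has to be controlled at small $|x|\approx1/\kappa$, where the $y$-integrand decays slowly, and simultaneously at $|x|\approx1$, where it oscillates quickly. My first attempt would be Euler--Maclaurin-type bounds, using the explicit Gaussian-times-polynomial derivatives, checked separately in these two regimes.
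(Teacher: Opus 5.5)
Your proposal follows essentially the argument behind this cited result; the paper gives no proof of its own, only the pointer to the Fourier approach of \cite{childs2017quantum}, and you use the same integral representation of $1/x$, the same truncation and grid parameters, an LCU of weight $\alpha=\mathcal O(\kappa\sqrt{\log(\kappa/\epsilon)})$ boosted by \emph{standard} amplitude amplification (oblivious amplification is unavailable because $h(A)/\alpha$ is not a scaled unitary, which is why $\mathcal U_b$ is called every round), and the Berry--Childs--Kothari sparse simulation, whose single factor $\log(\tau/\delta)$ with $\tau=d\,t\max_{j,k}|A_{jk}|$, rather than a general polylog, is what lands exactly on $\log^{2.5}(\kappa/\epsilon)$. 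Two refinements: the delicate discretization is in $z$, not $y$ (at fixed $y_j$ the $z$-integrand oscillates at frequency $xy_j\le y_J$, so $\Delta_z y_J=\Theta(1)$ and no fixed-order Riemann or Euler--Maclaurin bound shrinks with $\epsilon$, whereas Poisson summation, or Euler--Maclaurin taken to order $\Theta(\log(\kappa/\epsilon))$, makes the aliasing error exponentially small once the constants give $\Delta_z y_J<2\pi$ with a fixed margin, while the $y$-sum error is $\mathcal O(\Delta_y)$ uniformly in $x$), and since this route involves no phase estimation its ancillae are the $\mathcal O(\log(\kappa/\epsilon))$-qubit index register plus the simulation's own, so the statement's closing clause about ancillae ``for the phase-estimation-based inversion'' is a loose paraphrase rather than something your argument must deliver.
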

Building on this, \cite{childs2021high} gives two algorithms: an adaptive finite difference method (FDM) for the \emph{isotropic} Poisson equation, and a spectral method for the fully general (anisotropic) elliptic equation. The FDM discretizes the isotropic Laplacian into a linear combination of tensor products of circulant matrices and identities, whose eigenvalues are known in closed form, giving condition number $\mathcal O(dN^2)$. A \emph{fixed}-order stencil would incur a truncation error vanishing only polynomially in $N$, so the authors instead let the stencil order $k$ satisfy the self-consistency relation
\begin{equation}
k=\Theta\!\left(d\log\!\left(\left|\tfrac{d^{2k+1}u}{dx^{2k+1}}\right|\Big/\epsilon\right)\right)
\end{equation}
obtained by optimizing their error bound over both the grid size and the stencil order (their Lemma 2). This is not resolved to a closed form in $\epsilon$: since the right-hand side depends on the $(2k+1)$-th derivative of $u$, evaluating $k$ requires an independent bound on how $|u^{(m)}|$ grows with the order $m$, which \cite{childs2021high} does not supply for a general regularity class. Their Theorem~1, reproduced below, accordingly states the algorithm's complexity directly in terms of this still-implicit $k$ rather than as an explicit function of $\epsilon$ alone:
\begin{thm}[{\cite[Theorem~1]{childs2021high}}]
\label{thm:litrev-childs-fdm}
There is a quantum algorithm that outputs a state $\epsilon$-close to $\ket u$ using
\begin{equation}
\tilde{\mathcal O}\Biggl(d^{6.5}\log^{4.5}\!\left(\left|\frac{d^{2k+1}u}{dx^{2k+1}}\right|\Big/\epsilon\right)\sqrt{\log\!\left[d^4\log^3\!\left(\left|\frac{d^{2k+1}u}{dx^{2k+1}}\right|\Big/\epsilon\right)\Big/\epsilon\right]}\Biggr)
\end{equation}
elementary gates and
\begin{equation}
\tilde{\mathcal O}\Biggl(d^4\log^3\!\left(\left|\frac{d^{2k+1}u}{dx^{2k+1}}\right|\Big/\epsilon\right)\sqrt{\log\!\left[d^4\log^3\!\left(\left|\frac{d^{2k+1}u}{dx^{2k+1}}\right|\Big/\epsilon\right)\Big/\epsilon\right]}\Biggr)
\end{equation}
queries to the oracle for $f$, where $k$ is the adaptively chosen order of the finite difference stencil.
\end{thm}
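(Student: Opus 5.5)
The plan is to follow the standard pipeline of \cite{childs2021high}: discretize, bound the consistency error, choose the grid size $N$ and stencil order $k$, bound the condition number, and then invoke the linear-system solver of \cite{childs2017quantum}, which is stated just above. Concretely, I would replace each $\partial_{x_j}^2$ by a centered finite-difference stencil with $2k+1$ points on a uniform periodic grid of spacing $h=2\pi/N$. The resulting matrix $L_h$ is a sum over $j$ of tensor products of a one-dimensional circulant with identities. Its eigenvalues are therefore explicit: they are sums of one-dimensional Fourier symbols, just as in the convolutional description of \cref{sec:diff-convolution}. Two consequences follow directly. First, the smallest nonzero eigenvalue is $\Theta(1)$ off the constant mode, with the constant mode removed by zero-mean solvability as in \cref{rmk:poisson-solvability}. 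Second, the largest eigenvalue is $\mathcal O(dN^2)$, since the stencil weights of the order-$2k$ approximation to $\partial^2$ are summable uniformly in $k$. Together these give $\kappa=\mathcal O(dN^2)$, and a sparsity of $\mathcal O(dk)$ nonzeros per row.

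The first substantive step is the consistency estimate. I would Taylor-expand $u$ to order $2k+1$ around each grid point and use the fact that the stencil weights are chosen to annihilate all lower-order terms. This should yield a pointwise error of size $c_k\,h^{2k-1}\,|u^{(2k+1)}|$ per coordinate, where the constant $c_k$ decays factorially. Exhibiting this factorial decay of $c_k$ is the content of their Lemma~2, and I would reprove it by bounding the explicit stencil coefficients, which have the form $\frac{2(-1)^{m+1}(k!)^2}{m^2(k-m)!(k+m)!}$. Summing over the $d$ coordinates and $N^d$ grid points and normalizing, I would then pass from residual to solution error by multiplying by $\|L_h^{-1}\|=\mathcal O(1)$, exactly as in the proof of \cref{thm:linear-pde-precision}. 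Requiring the result to be at most $\epsilon$ gives $N\gtrsim \bigl(d\,|u^{(2k+1)}|/\epsilon\bigr)^{1/(2k-1)}$. Choosing $k=\Theta\bigl(d\log(|u^{(2k+1)}|/\epsilon)\bigr)$ makes this $N=\mathcal O(\mathrm{poly}(k))$, and this is precisely the implicit self-consistency relation quoted before the theorem.

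Finally, I would feed the sparsity, $\kappa$, and $\log(\kappa/\epsilon)$ into the theorem of \cite{childs2017quantum} stated above, taking the gate cost per query to be $\mathcal O(\log N+\log^{2.5}(\kappa/\epsilon))$. Substituting $N$ and $k$ and collecting powers of $d$ and of $\log(|u^{(2k+1)}|/\epsilon)$ should produce the stated $\tilde{\mathcal O}(d^{6.5}\log^{4.5}\cdots)$ gate count and $\tilde{\mathcal O}(d^4\log^3\cdots)$ query count. The square-root factor comes from the $\sqrt{\log(\kappa/\epsilon)}$ in the solver's complexity.

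I expect the main obstacle to be the exponent bookkeeping in this last step. The powers $6.5$, $4.5$, $4$ and $3$ depend on exactly how sparsity, $\kappa$, the Hamiltonian-simulation time $t=\mathcal O(\kappa\log(\kappa/\epsilon))$, and the per-entry oracle cost for a $k$-point stencil combine. The first thing I would check is whether $\kappa$ should be taken as $dN^2$ or as $dN^2$ times an extra $k$-dependent factor from the max-norm of the stencil. If the former does not reproduce the stated powers, I would revisit how the polynomial dependence of $N$ on $k$ enters, rather than the consistency lemma.
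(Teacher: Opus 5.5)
Your outline has the same architecture that the paper attributes to \cite{childs2021high}; the paper only quotes this theorem and sketches it, it does not prove it. The architecture is: a tensor sum of circulant $(2k+1)$-point stencils with $\kappa=\mathcal O(dN^2)$ uniformly in $k$ (which you justify correctly), an adaptively chosen order, and the solver of \cite{childs2017quantum}. The gap is that your proposal stops before the quantitative content. Every exponent in the statement is fixed by two things you leave open: how $N$ scales with $k$, and how the Hamiltonian-simulation calls are costed.

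\textbf{How $N$ scales with $k$.} The query bound has exactly the form $\kappa\sqrt{\log(\kappa/\epsilon)}$ uses of $\mathcal U_b$, evaluated at $\kappa=\Theta(d^4\log^3(\cdot))=\Theta(dk^3)$. With your $\kappa=\mathcal O(dN^2)$, the theorem therefore corresponds to $N^2=\Theta(k^3)$. Your consistency step produces nothing like this:
\begin{itemize}
\item Once $k=\Theta(d\log(|u^{(2k+1)}|/\epsilon))$, your threshold $(d|u^{(2k+1)}|/\epsilon)^{1/(2k-1)}$ is $\mathcal O(1)$.
\item Nothing in your error analysis generates the factor $d$ in $k$; the coordinates enter only through $\log d$. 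So the self-consistency relation is imported, not derived.
\end{itemize}
Taken literally, your argument would give $N=\mathcal O(1)$ and $\kappa=\mathcal O(d)$, far better than the theorem. The flaw is treating $|u^{(2k+1)}|$ as a fixed number. The residual $c_kh^{2k-1}|u^{(2k+1)}|$ decreases in $k$ only if $|u^{(2k+1)}|$ grows slower than $(C/h)^{2k}$ for some constant $C$. So $h$ cannot be chosen independently of $u$, and the joint optimization over $(N,k)$ behind their Lemma~2 has to be carried out.

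\textbf{How the simulation calls are costed.} Your last step cannot be a plug-in of the sparse-access bound quoted just above the statement. That bound gives a gate count of order $s\kappa^2$ up to logarithms. At $\kappa=\Theta(dk^3)$, $\kappa^2=d^2k^6$ alone already exceeds the stated $d^{6.5}\log^{4.5}(\cdot)=d^2k^{4.5}$ (compare \cref{tab:poisson-comparison-isotropic}), so plugging in cannot yield the claimed count. A gate count close to linear in $\kappa$ needs the fact that $L_h$ is diagonalized by $F_N^{\otimes d}$. Then each $e^{-iL_ht}$ inside the Fourier-LCU solver is a QFT plus a coherent evaluation of the stencil symbol, at a cost independent of $t$.

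\textbf{A smaller error.} The Taylor-remainder argument you describe does not give factorial decay of $c_k$. Bounding each remainder separately yields
\begin{equation*}
c_k=\frac{2}{(2k+1)!}\sum_{j=1}^k|r_j|\,j^{2k+1},
\end{equation*}
which equals $1/3,\,1/15,\,1/56,\dots$. Its successive ratios increase toward a constant, so it decays only geometrically. The triangle inequality throws away the cancellation among the remainders. Geometric decay is enough for an adaptive-order argument once $h$ is below a constant, but it is not the lemma you said you would reprove.
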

The same paper's second algorithm instead expands the solution in a truncated Fourier (periodic case) or Chebyshev (non-periodic Dirichlet case) series and solves for the unknown spectral coefficients. For the (homogeneous) Poisson equation, the resulting condition number is
\begin{equation}
\kappa_{L_{\mathrm{Poisson}}} \le (2N)^4,
\end{equation}
and for a general (inhomogeneous) elliptic equation with coefficient tensor $W$ satisfying a global strict diagonal dominance condition with constant $C>0$, it is bounded by
\begin{equation}
\kappa_L \le \frac{\norm{W}_\Sigma}{C\norm{W}_*}(2N)^4,
\end{equation}
where $\norm{W}_\Sigma := \sum_{\norm{\bm j}_1\le h}\norm{W_{\bm j}}$ and $\norm{W}_* := \sum_{j=1}^d |W_{j,j}|$ \cite[Corollary~1, Lemma~10]{childs2021high}. For simplicity, we focus on the case of periodic boundary conditions in their Theorem~2:

\begin{thm}[Periodic case of {\cite[Theorem~2]{childs2021high}}]
\label{thm:litrev-childs-spectral}
Consider a linear elliptic PDE with periodic boundary conditions. There is a quantum algorithm that produces a state whose amplitudes are proportional to $u(\bm x)$ on a set of interpolation nodes $\bm x$, where $u(\bm x)/\norm{u(\bm x)}$ is $\epsilon$-close to $\hat u(\bm x)/\norm{\hat u(\bm x)}$ in $\ell^2$-norm, succeeding with probability $\Omega(1)$ and flagging success, using
\begin{equation}
\left(\frac{d\norm{W}_\Sigma}{C\norm{W}_*}+d^2\right)\poly\!\left(\log(g'/(g\epsilon))\right)
\end{equation}
queries to the oracle $\mathcal U_b$ preparing $\ket b$, where
\begin{equation}
g = \min_{\bm x}\norm{\hat u(\bm x)}, \qquad g' := \max_{\bm x}\max_{n\in\mathbb N}\norm{\hat u^{(n+1)}(\bm x)}.
\end{equation}
The gate complexity exceeds the query complexity by a factor of $\poly(\log(d\norm{W}_\Sigma/\epsilon))$.
\end{thm}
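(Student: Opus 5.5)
The plan is to reconstruct the argument in the same Fourier framework we used for \cref{thm:linear-pde-complexity}. There are three pieces: a truncation-error bound that fixes $N$, a condition-number bound for a suitably preconditioned spectral linear system, and a call to the linear-systems solver of \cite{childs2017quantum}.

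\textbf{Step 1: discretize and fix $N$.} I would expand $u$ in the truncated Fourier basis indexed by $\Lambda_N$ and impose the PDE at the interpolation nodes $\Gamma_N$. This gives a linear system $L\hat{\bm u}=\hat{\bm b}$ for the spectral coefficients, where each $W_{\bm j}$-term becomes a diagonal multiplier $(i\omega)^{\bm j}$ conjugated by $F_N^{\otimes d}$, exactly as in \eqref{eq:def-discrete-L}. The truncation and aliasing error is then handled as in \cref{prop:dft-distance-bound} and \cref{prop:dft-derivative-bound}. Here the role of our Gevrey constants is played by $g'$, the uniform bound on all derivatives of $\hat u$, and by the floor $g$ on the solution norm. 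A uniform derivative bound yields a spectral error that decays like $g'\,e^{-cN}$ up to polynomial prefactors. After normalizing the state via \eqref{eq:distance-scaling}, it therefore suffices to take $N=\mathcal O(\log(g'/(g\epsilon)))$.

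\textbf{Step 2: bound the condition number after preconditioning.} The unpreconditioned bound $\kappa_L\le \frac{\norm{W}_\Sigma}{C\norm{W}_*}(2N)^4$ already follows from global strict diagonal dominance. Strict dominance lower-bounds the leading symbol $\sum_j W_{j,j}\omega_j^2$ by $C\norm{W}_*\norm{\omega}^2$, while the full symbol is upper-bounded by $\norm{W}_\Sigma$ times a power of $N$.

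The stated query count has no polynomial dependence on $N$, however, so I would insert a diagonal (in Fourier space) preconditioner $P=\diag\{(1+\norm{\omega}^2)^{-1}\}$. The aim is to show that $LP$ has condition number $\mathcal O\!\left(\frac{d\norm{W}_\Sigma}{C\norm{W}_*}+d^2\right)$, with the $d$ and $d^2$ terms coming from comparing $\norm{\omega}^2$ against $\sum_j\omega_j^2$ weighted by the coefficients. I expect this to be the main obstacle. Below the leading order the symbol ratio is not uniformly bounded near $\omega=0$, and variable coefficients make the Fourier multiplier picture only approximate. My first attempt would be a G\aa rding-type inequality applied to the symmetric part of $LP$, with the lower-order terms absorbed using the dominance constant $C$.

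\textbf{Step 3: solve and count.} I would apply the solver of \cite{childs2017quantum} to the preconditioned system. That costs $\mathcal O(\kappa\,\polylog(\kappa/\epsilon))$ uses of $\mathcal U_b$. The block encoding of $LP$ costs $\polylog(N)$ gates per use, via QFTs and diagonal arithmetic as in \cref{app:R_N-calculation}.

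The final state is then mapped back to nodal values by $F_N^{-1\otimes d}$. This step incurs only a $\polylog$ overhead, and the resulting amplitudes are proportional to $u(\bm x)$ on the interpolation nodes. Substituting $N=\mathcal O(\log(g'/(g\epsilon)))$ turns all $N$-dependence into $\poly(\log(g'/(g\epsilon)))$. This gives the stated query complexity, with the gate overhead $\poly(\log(d\norm{W}_\Sigma/\epsilon))$ coming from the block encoding and the arithmetic. Success probability $\Omega(1)$ with a success flag is inherited directly from the solver.
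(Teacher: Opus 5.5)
\emph{Note:} the paper does not prove this statement. It quotes \cite[Theorem~2]{childs2021high} and comments only on how the truncation order is chosen there. Measured against that, your proposal has a genuine gap, located at the step you yourself call the main obstacle.

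\textbf{Step~2 is unnecessary and unproven.} You introduce a preconditioner because the query count ``has no polynomial dependence on $N$''. This overlooks that $N$ is itself $\polylog(g'/(g\epsilon))$: you take $N=\mathcal O(\log(g'/(g\epsilon)))$, and \cite{childs2021high} takes $\lceil\log\Omega/\log\log\Omega\rceil$ with $\Omega=g'(1+\epsilon)/(g\epsilon)$. So the unpreconditioned bound $\kappa_L\le\frac{\norm{W}_\Sigma}{C\norm{W}_*}(2N)^4$, which you already have, contributes only $\frac{\norm{W}_\Sigma}{C\norm{W}_*}\poly(\log(g'/(g\epsilon)))$. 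That is already of the required form. What remains is to account for the $d$-dependence of the solver's access cost to the spectral matrix, not to prove a preconditioned condition-number bound. Instead, your argument rests on $\kappa(LP)=\mathcal O\bigl(\frac{d\norm{W}_\Sigma}{C\norm{W}_*}+d^2\bigr)$, which you leave open.

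\textbf{Step~3 fails as written, even granting Step~2.} With right preconditioning, the solver returns a state proportional to $(LP)^{-1}\hat{\bm b}=P^{-1}\hat{\bm u}$, whose amplitudes are $(1+\norm{\omega}^2)\hat u_\omega$. Applying $F_N^{-1\otimes d}$ therefore does not give amplitudes proportional to $u(\bm x)$. You would still have to apply the non-unitary $P$, with success amplitude $\norm{\hat{\bm u}}/\norm{P^{-1}\hat{\bm u}}$, and your count never pays for this.

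\textbf{Step~1 is vacuous, and this is the paper's point.} A finite $g'$ is the most degenerate regularity assumption possible. For a periodic solution, $|\hat u_\omega|\,|\omega_j|^m\le\norm{\partial_j^m u}_\infty\le g'$ for all $m$ forces $\hat u_\omega=0$ whenever $\norm{\omega}_\infty\ge2$. In this paper's language, $g'<\infty$ is the $s=0$, $r=1$ case of \cref{prop:all-decay-rates} (finite bandwidth), where $E_N=0$ for every $N\ge1$. So your error bound $g'e^{-cN}$ is true but empty, and $g'$ does not ``play the role of our Gevrey constants'' for any nontrivial class. The theorem, and your reconstruction of it, covers only solutions built from the constant and fundamental modes. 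This is why the paper leaves it out of \cref{tab:poisson-comparison-isotropic}.
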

The stated hypothesis is that the inhomogeneity $f$ merely lies in $\mathcal C^\infty$ (their Problem 1). However, the truncation order $n$ appearing throughout \cref{thm:litrev-childs-spectral} is set, in their proof, via a classical Chebyshev/Fourier truncation-error bound to $n=\left\lceil\log\Omega/\log\log\Omega\right\rceil$ with $\Omega:=g'(1+\epsilon)/(g\epsilon)$, and this in turn requires $g'=\max_{\bm x}\max_{n\in\mathbb N}\norm{\hat u^{(n+1)}(\bm x)}$ to be finite. This pins $u$ very narrowly: writing $u(x)=\sum_k\hat u_k e^{ikx}$ and integrating $\hat u_k=\frac1{2\pi}\int u(x)e^{-ikx}dx$ by parts $m$ times gives $|\hat u_k|\,|k|^m\le\norm{u^{(m)}}_\infty$, so any nonzero coefficient at $|k|\ge2$ forces $\norm{u^{(m)}}_\infty\to\infty$ as $m\to\infty$. Hence $g'<\infty$ if and only if $u$'s Fourier support is confined to $|k|\le1$, i.e.\ $u(x)=\hat u_0+\hat u_1e^{ix}+\hat u_{-1}e^{-ix}$, just the constant and the fundamental mode. So, while the complexity above is correctly derived, it applies only to this essentially trivial family, not to the $\mathcal C^\infty$ (or even finite bandwidth) solutions that \cref{thm:litrev-childs-spectral} is nominally stated for. For this reason, we do not include this result in \cref{tab:poisson-comparison-isotropic}.

\medskip
\noindent{\bf QSVT and fast inversion \cite{tong2021fast}.}
The third paper solves linear systems via QSVT applied directly to a block-encoding of the coefficient matrix:
\begin{thm}[Standard QSVT linear system solver, {\cite{tong2021fast}}]
\label{thm:qsvt_qlsp}
Let $U_A$ be an $(\alpha,m,0)$-block-encoding of $A$ with condition number $\kappa$, let $U_b$ be the oracle preparing $\ket b$, and let $\xi=\norm{A^{-1}\ket b}\in[1/\norm{A},\norm{A^{-1}}]$. Then $\ket x\propto A^{-1}\ket b$ can be obtained to precision $\epsilon$ with success probability at least $1/2$ using $\mathcal O\!\left(\alpha\kappa^2/(\norm{A}^2\xi)\log(\kappa/(\norm{A}\xi\epsilon))\right)$ queries to $U_A$ and $U_A^\dagger$, and $\mathcal O(\kappa/(\norm{A}\xi))$ queries to $U_b$.
\end{thm}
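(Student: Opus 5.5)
The plan is to view the statement as a composition of two standard ingredients: quantum singular value transformation (QSVT) applied to $U_A$ to implement a scaled, polynomially approximated pseudo-inverse, followed by (fixed-point or standard) amplitude amplification to boost the resulting post-selection probability. Write $\sigma_{\min}=\norm{A}/\kappa$. Then the singular values of $A/\alpha$ lie in $[\sigma_{\min}/\alpha,\ \norm{A}/\alpha]\subseteq[\sigma_{\min}/\alpha,1]$, because $\alpha\geq\norm{A}$ for any block encoding. Throughout, I write $\delta$ for the polynomial-approximation accuracy, to keep it distinct from $\epsilon$, the target precision in the statement.

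First, fix an odd polynomial $p$ that is bounded by $1$ in absolute value on $[-1,1]$ and satisfies
\begin{equation}
\left|p(x)-\frac{\sigma_{\min}}{2\alpha x}\right|\leq\delta \quad\text{for } |x|\in[\sigma_{\min}/\alpha,1].
\end{equation}
The standard construction (for example the smoothed $(1-(1-x^2)^b)/x$ approximant, truncated in the Chebyshev basis) gives degree $\mathcal O\!\left(\frac{\alpha}{\sigma_{\min}}\log\frac{\alpha}{\sigma_{\min}\delta}\right)=\mathcal O\!\left(\frac{\alpha\kappa}{\norm{A}}\log\frac{\alpha\kappa}{\norm{A}\delta}\right)$. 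QSVT with this odd polynomial produces a block encoding of $p(A/\alpha)^{\dagger}\approx\frac{\sigma_{\min}}{2}A^{+}$, i.e.\ $\frac{\sigma_{\min}}{2}A^{-1}$ when $A$ is invertible, using $\deg p$ queries to $U_A$ and $U_A^\dagger$.

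Second, apply this block encoding to $U_b\ket0=\ket b$. The post-selected branch carries the unnormalized vector $\frac{\sigma_{\min}}{2}A^{-1}\ket b$, up to an error of norm at most $\delta$. Its norm is $\frac{\sigma_{\min}}{2}\xi=\frac{\norm{A}\xi}{2\kappa}$. Amplitude amplification therefore needs $\mathcal O(\kappa/(\norm{A}\xi))$ rounds to reach success probability at least $1/2$. Each round uses one QSVT circuit and $\mathcal O(1)$ calls to $U_b$. This yields the stated $\mathcal O(\kappa/(\norm{A}\xi))$ queries to $U_b$. Multiplying the number of rounds by $\deg p$ gives $\mathcal O\!\left(\alpha\kappa^2/(\norm{A}^2\xi)\cdot\log(\cdot)\right)$ queries to $U_A$. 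If $\xi$ is not known in advance, I would use the usual exponential-search or fixed-point variant of amplitude amplification, which costs only a constant factor.

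The main obstacle is the error bookkeeping that fixes the logarithm's argument. After normalization, an additive error $\delta$ in the unnormalized post-selected vector becomes a relative error of order $\delta/(\sigma_{\min}\xi)$ in the output state, by \eqref{eq:distance-scaling}. So I will choose $\delta=\Theta(\epsilon\,\sigma_{\min}\xi)=\Theta(\epsilon\norm{A}\xi/\kappa)$. Substituting this into the degree bound gives a logarithm of $\frac{\alpha\kappa^2}{\norm{A}^2\xi\epsilon}$. Because $\xi\geq 1/\norm{A}$ and $\kappa\geq1$, I expect this to reduce to $\mathcal O(\log(\kappa/(\norm{A}\xi\epsilon)))$ after absorbing the polynomial factor $\alpha/\norm{A}$, either by treating it as a constant overhead or by a slightly sharper choice of $\delta$.

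Care is needed to check two further points. The amplification must not amplify this error beyond a constant factor, which holds because amplification acts unitarily on the two-dimensional good/bad subspace. And the post-selection error must enter only linearly.
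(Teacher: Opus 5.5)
The paper gives no proof of this statement; it quotes it from \cite{tong2021fast}. Your architecture is the standard derivation behind that result: an odd QSVT approximant of $\frac{\sigma_{\min}}{2\alpha x}$ applied through $U_A^\dagger$, then $\mathcal O(\kappa/(\norm{A}\xi))$ rounds of amplitude amplification, with $\delta=\Theta(\epsilon\sigma_{\min}\xi)$. It correctly gives the $U_b$ count and the prefactor $\alpha\kappa^2/(\norm{A}^2\xi)$.

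The gap is the logarithm. You obtain $\log\frac{\alpha\kappa^2}{\norm{A}^2\xi\epsilon}=\log\frac{\alpha\kappa}{\norm{A}}+\log\frac{\kappa}{\norm{A}\xi\epsilon}$, and your reason for absorbing the first term points the wrong way. The bound $\xi\geq 1/\norm{A}$ gives $\frac{\kappa}{\norm{A}\xi\epsilon}\leq\frac{\kappa}{\epsilon}$, which is an \emph{upper} bound on the target logarithm. Absorbing $\log\frac{\alpha\kappa}{\norm{A}}$ instead needs a lower bound, essentially $\alpha\xi\epsilon\lesssim 1$.

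Concretely, suppose $\ket{b}$ lies along the singular direction of $\sigma_{\min}$. Then $\xi=\kappa/\norm{A}$, the claimed logarithm is $\log(1/\epsilon)$, and yours is $\log\frac{\alpha\kappa}{\norm{A}\epsilon}$; the excess $\log\kappa$ is not a constant. Treating $\alpha/\norm{A}$ as a constant leaves this $\log\kappa$ untouched. A ``sharper choice of $\delta$'' cannot help either, because $\delta=\Theta(\epsilon\sigma_{\min}\xi)$ is forced by your own normalization argument.

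The fix is in the polynomial degree, not in $\delta$. The $\mathcal O(\kappa\log(\kappa/\cdot))$ bound from the $(1-(1-x^2)^b)/x$ construction measures additive error on $1/x$ itself, but you applied it to the error on the \emph{rescaled} target. Write $\kappa':=\alpha/\sigma_{\min}$. An additive error $\delta$ on $\frac{1}{2\kappa' x}$ is an error $2\kappa'\delta$ on $1/x$, so the correct degree is $\mathcal O(\kappa'\log(\kappa'/(2\kappa'\delta)))=\mathcal O(\kappa'\log(1/\delta))$, not $\mathcal O(\kappa'\log(\kappa'/\delta))$. An odd approximant of the scaled function with this degree, bounded by $1$ on all of $[-1,1]$, is available in \cite{gilyen2019quantum}.

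With $\delta=\Theta(\epsilon\norm{A}\xi/\kappa)$ this gives $\log(1/\delta)=\log\frac{\kappa}{\norm{A}\xi\epsilon}+\mathcal O(1)$, and $\alpha$ drops out of the logarithm entirely. Multiplying the degree $\mathcal O(\frac{\alpha\kappa}{\norm{A}}\log\frac{\kappa}{\norm{A}\xi\epsilon})$ by the $\mathcal O(\kappa/(\norm{A}\xi))$ amplification rounds then reproduces the stated $U_A$ count exactly. This holds for $\epsilon$ below a constant, so that $\delta\leq 1/2$.
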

For the \emph{isotropic} $d$-dimensional elliptic equation $-\nabla^2u+u=b$ on the torus, the discretized operator $A:=-\nabla^2+I$ is diagonalized exactly by the QFT, so only a block-encoding of the (known, diagonal) eigenvalue matrix is needed. The identity term $+u$ in their equation is added to keep the operator invertible. Their worked example further assumes $\xi:=\norm{A^{-1}\ket b}=\Theta(1)$, i.e., that the true solution's norm neither vanishes nor blows up as more planewaves are added---the best possible scaling, since \cref{thm:qsvt_qlsp} only guarantees $\xi\ge1/\norm{A}$ in general. In the worst case $\xi=1/\norm{A}$, both query counts below would instead pick up an extra factor of $\kappa$, cancelling the benefit of diagonalization entirely. Under $\xi=\Theta(1)$, combining the diagonalization with \cref{thm:qsvt_qlsp} gives:

\begin{thm}[{\cite[Proposition~8]{tong2021fast}}]
\label{prop:cost_elliptic}
Using a planewave discretization with grid size $\frac1N$ per dimension, solving the $d$-dimensional elliptic equation with $\xi=\Theta(1)$ to precision $\epsilon$ with success probability at least $1/2$ has circuit depth and query count to the block-encoding of the discretized operator (and its adjoint) both $\mathcal O(dN^2\log(1/\epsilon))$, and query count to $U_b$ equal to $\mathcal O(1)$, via standard QSVT; using ``fast inversion''---a direct block-encoding of the inverse diagonal matrix that bypasses QSVT's polynomial approximation of $1/x$---the circuit depth and the number of queries to the diagonal-entry oracle $O_D$, its adjoint, the QFT and its inverse, and $U_b$ are all $\mathcal O(1)$ instead.
\end{thm}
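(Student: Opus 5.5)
The statement just given is the result attributed to \cite[Proposition~8]{tong2021fast}, restated in our notation. My plan is to rederive it from \cref{thm:qsvt_qlsp} combined with the exact QFT diagonalization of the periodic planewave Laplacian; nothing beyond these two ingredients should be needed.

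\emph{Step 1: spectrum and block encoding.} On the grid $\Lambda_N$, the operator $A=-\nabla^2+I$ becomes $F_N^{\dagger\otimes d}\,\diag\{1+\norm{\omega}_2^2\}_{\omega\in\Lambda_N}\,F_N^{\otimes d}$. This follows from \eqref{eq:approx-derivative-op-results} with symbol $-(i\omega_j)^2=\omega_j^2$. Its eigenvalues therefore lie in $[1,\,1+dN^2]$. Hence $\norm{A}=\Theta(dN^2)$, $\norm{A^{-1}}=1$ and $\kappa=\Theta(dN^2)$.

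To block encode $A$, I would sandwich a block encoding of the diagonal matrix between QFTs. The diagonal matrix can be built as an LCU of the $d$ one-dimensional pieces $\omega_j^2$, each obtained by squaring the $\Delta_N$ construction of \cref{app:R_N-calculation} (or by a diagonal-entry oracle $O_D$ followed by \cite[Lemma~48]{gilyen2019quantum}). This gives normalization $\alpha=\Theta(dN^2)=\Theta(\norm{A})$, so $\alpha/\norm{A}=\mathcal O(1)$.

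\emph{Step 2: plug into \cref{thm:qsvt_qlsp}.} Assume $\xi=\Theta(1)$. The query count to $U_A$ is then
\[
\mathcal O\bigl(\alpha\kappa^2/(\norm{A}^2\xi)\,\log(\kappa/(\norm{A}\xi\epsilon))\bigr).
\]
Here $\alpha/\norm{A}=\mathcal O(1)$ and $\kappa/\norm{A}=\norm{A^{-1}}=1$. This leaves $\mathcal O(\kappa\log(1/\epsilon))=\mathcal O(dN^2\log(1/\epsilon))$. The queries to $U_b$ are $\mathcal O(\kappa/(\norm{A}\xi))=\mathcal O(1)$.

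Circuit depth equals the query count up to the per-query cost of the QFT and the diagonal block encoding. These are polylogarithmic, and I expect the proposition to suppress them when it says ``circuit depth and query count both $\mathcal O(dN^2\log(1/\epsilon))$.'' The success probability of $1/2$ is inherited directly from \cref{thm:qsvt_qlsp}.

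\emph{Step 3: fast inversion.} Because $A$ is diagonal in the Fourier basis with entries that are known explicitly, I would skip the polynomial approximation of $1/x$. Instead I would directly block encode $\diag\{1/(1+\norm{\omega}^2)\}$, whose entries lie in $(0,1]$. This can be done with a single $O_D$ query, an arithmetic circuit computing $1/(1+\norm{\omega}^2)$, a controlled rotation, and uncomputation.

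Conjugating this block encoding by $F_N^{\otimes d}$ block encodes $A^{-1}$ with normalization $1$. Applying it to $\ket b$ succeeds with probability $\norm{A^{-1}\ket b}^2=\xi^2=\Theta(1)$. Consequently a constant number of repetitions, or of amplitude amplification rounds, suffices. That yields $\mathcal O(1)$ uses each of $O_D$, $O_D^\dagger$, the QFT and its inverse, and $U_b$.

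The main obstacle is bookkeeping rather than mathematics. In Step 2 the cancellation depends on $\alpha\simeq\norm{A}$; if the block encoding of $A$ were loose, the count would degrade. In Step 3 the $\mathcal O(1)$ claim depends on $\xi=\Theta(1)$. Under the worst case $\xi=1/\norm{A}$, the success amplitude falls to $1/\kappa$ and an extra factor of $\kappa$ returns in both regimes. I would state both dependencies explicitly rather than bury them in the assumption.
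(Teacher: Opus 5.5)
Your derivation is correct and follows the same route as the paper. The paper's own justification is just the exact QFT diagonalization of $A=-\nabla^2+I$ plugged into \cref{thm:qsvt_qlsp} under $\xi=\Theta(1)$, plus a direct block encoding of the inverse diagonal for fast inversion. Your explicit checks that $\alpha=\Theta(\norm{A})$ and $\norm{A^{-1}}=1$, and your caveat about the worst case $\xi=1/\norm{A}$, match the cancellation and the extra-$\kappa$ remark the paper gives alongside the statement.
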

Neither part of \cref{prop:cost_elliptic} resolves $N$ as a function of $\epsilon$. We can, however, infer a resolution ourselves, matching our own standing hypothesis exactly: if the \emph{solution} $u=A^{-1}b$ itself (not $b$) is $s$-Gevrey, the identical tail-decay argument used throughout this paper for $\Gamma_N$-truncation (\cref{prop:tail-decay-rate}) applies verbatim to the planewave truncation of $u$ (since $\xi=\norm{A^{-1}\ket b}=\norm{\hat u}$ literally), giving $N=\tilde{\mathcal O}(\log^s(1/\epsilon))$. Substituting this back gives a standard-QSVT gate count of $\tilde{\mathcal O}(d\log^{2s+1}(1/\epsilon))$, and, for fast inversion, a circuit depth dominated by the $d$-fold QFT's own $\mathcal O(d\log N)=\tilde{\mathcal O}(ds\log\log(1/\epsilon))$ gates, which is doubly logarithmic in $1/\epsilon$, since so few planewaves represent an $s$-Gevrey $u$.

\section{Lower-dimensional PDE pipelines for near-term atomistic simulations}
\label{sec:near-term-pipelines}

\subsection{The single-particle formulation (near-term pipeline)}
\label{sec:single-particle-pipeline}

We now present a reduction of the full wavefunction pipeline to an effective single-particle description in which the unknowns are orbitals on $\mathbb T^3$ and the PDE dimension is $3$. This framework is what commercial atomistic simulations such as VASP \cite{kresse1996efficient}, Quantum ESPRESSO \cite{giannozzi2009quantum}, ABINIT \cite{gonze2009abinit}, or QuantumATK \cite{smidstrup2020quantumatk} implement. The price is that many-body correlations are represented approximately through functionals. Let $\{\phi_i\}_{i=1}^{M}$ be a set of single-particle orbitals with $\phi_i : \mathbb T^3 \to \mathbb C$. The associated electron density is defined by
\begin{equation}
\label{eq:sp-density}
\rho(x) = \sum_{i=1}^{M} |\phi_i(x)|^2.
\end{equation}
In spin systems or with partial occupancies, the sum in \eqref{eq:sp-density} is modified by occupation numbers, but it nevertheless remains a scalar function on $\mathbb T^3$.

\paragraph{Kohn--Sham Hamiltonian and the SCF ground state.}

Given an external potential $V_{\mathrm{ext}} : \mathbb T^3 \to \mathbb R$, define the Hartree potential as the Coulomb convolution
\begin{equation}
\label{eq:hartree-potential}
V_H[\rho](x) = \int_{\mathbb T^3} \frac{\rho(y)}{\|x-y\|}\,dy.
\end{equation}
Let $V_{xc}[\rho] : \mathbb T^3 \to \mathbb R$ denote an exchange--correlation potential, defined as
the functional derivative of an exchange--correlation energy functional with respect to $\rho$.
(Here we treat $V_{xc}[\rho]$ abstractly as a given map from densities to scalar potentials.) The Kohn--Sham operator at density $\rho$ is
\begin{equation}
\label{eq:ks-operator}
H_{\mathrm{KS}}[\rho]
=
-\frac12\nabla^2 + V_{\mathrm{ext}} + V_H[\rho] + V_{xc}[\rho],
\end{equation}
acting on functions on $\mathbb T^3$. The ground state is obtained by solving the nonlinear self-consistent field (SCF) system:
\begin{equation}
\label{eq:ks-eigenproblem}
H_{\mathrm{KS}}[\rho]\,\phi_i = \varepsilon_i \phi_i,
\qquad i=1,\dots,M,
\end{equation}
together with the density constraint \eqref{eq:sp-density}. The eigenvalues $\varepsilon_i\in\mathbb R$ are the Kohn--Sham orbital energies.

\paragraph{Linear response (DFPT / Sternheimer equations).}

Let $\delta V_{\mathrm{ext}} : \mathbb T^3 \to \mathbb R$ be a small external perturbation. This induces changes in the orbitals, the density, and the Kohn--Sham potential. We write the first-order orbital responses as $\delta\phi_i : \mathbb T^3 \to \mathbb C$ and define the density response as %
\begin{equation}
\label{eq:sp-density-response}
\delta\rho(x)
=
2\sum_{i=1}^{M}\mathrm{Re}\left(\phi_i(x)^*\,\delta\phi_i(x)\right).
\end{equation}
The induced Hartree response is the linear convolution
\begin{equation}
\delta V_H(x)
=
\int_{\mathbb T^3}\frac{\delta\rho(y)}{\|x-y\|}\,dy.
\end{equation}
The induced exchange--correlation response is commonly written in terms of a linear kernel acting on $\delta\rho$; abstractly, we write
\begin{equation}
\delta V_{xc}(x) = \left(\frac{\delta V_{xc}}{\delta\rho}[\rho]\right)\delta\rho,
\end{equation}
where $\frac{\delta V_{xc}}{\delta\rho}[\rho]$ denotes the (density--dependent) linearization of the map $\rho\mapsto V_{xc}[\rho]$ at the ground-state density. The total Kohn--Sham potential response is
\begin{equation}
\delta V_{\mathrm{KS}} = \delta V_{\mathrm{ext}} + \delta V_H + \delta V_{xc}.
\end{equation}

\paragraph{Sternheimer (inhomogeneous) equations.}

Linearizing \eqref{eq:ks-eigenproblem} at the self-consistent solution yields, for each orbital $i$, the self-consistent Sternheimer equation \cite{baroni2001phonons}, an inhomogeneous PDE of the form
\begin{equation}
\label{eq:sternheimer}
\left(H_{\mathrm{KS}}[\rho] - \varepsilon_i\right)\delta\phi_i
=
-\,\hat P_c\!\left(\delta V_{\mathrm{KS}}\right)\phi_i.
\end{equation}
Here $\hat P_c$ denotes the projector onto the subspace orthogonal to the occupied orbitals, introduced to remove the singular component associated with the null space of $(H_{\mathrm{KS}}[\rho]-\varepsilon_i)$ on the occupied subspace. The key feature we seek is \emph{self-consistency} at the linear level, since \eqref{eq:sternheimer} depends on $\delta V_{\mathrm{KS}}$, which depends on $\delta\rho$ through \eqref{eq:sp-density-response}, and $\delta\rho$ depends on the collection $\{\delta\phi_i\}$.

\paragraph{Observable extraction.}

Many observables are direct functionals of $\delta\rho$ and $\{\delta\phi_i\}$. For example, induced dipoles, dielectric response, and phonon force constants can be expressed in terms of first-order density and potential changes.
The single-particle DFPT pipeline is thus:
\begin{enumerate}[noitemsep, topsep=5pt]
\item Solve the SCF ground state \eqref{eq:ks-eigenproblem} and obtain $\{\phi_i\}$, $\{\varepsilon_i\}$, and $\rho$,
\item Specify $\delta V_{\mathrm{ext}}$ and form $\delta V_{\mathrm{KS}}$ as a functional of $\delta\rho$ to obtain the source state of \eqref{eq:sternheimer},
\item Solve the coupled inhomogeneous $3$-dimensional PDEs \eqref{eq:sternheimer} for $\{\delta\phi_i\}$, and finally,
\item Compute $\delta\rho$ via \eqref{eq:sp-density-response} and evaluate desired response observables.
\end{enumerate}

\subsection{The \texorpdfstring{$k$}{k}-particle formulation (intermediate-scale pipeline)}
\label{sec:k-particle-pipeline}

We now describe a family of workflows that sit between the single-particle and full wavefunction
pictures. This allows for retaining explicit correlations among $k$ particles ($1 \leq k \leq M$), by working with the $k$-particle reduced density matrix ($k$-RDM) \cite{coleman1963structure}
\begin{equation}
\label{eq:k-particle-obj}
\rho^{(k)} : (\mathbb T^3)^k \times (\mathbb T^3)^k \to \mathbb C,
\end{equation}
normalized so that $\mathrm{Tr}\,\rho^{(k)} = \binom Mk$ (consistent with the single-particle density $\rho=\rho^{(1)}$ of \eqref{eq:sp-density}, which integrates to $M=\binom M1$), and solving $6k$-dimensional PDEs as shown below.

\paragraph{Hierarchical equations.}

Reduced objects do not generally satisfy a single closed PDE. Instead, they obey a hierarchy in which the level $k$ equation couples to the level $(k+1)$ objects. Let $\rho^{(k)}_0$ be the ground-state $k$-particle reduced density matrix and let $\delta\rho^{(k)}$ denote its first-order static response to a one-body perturbation $V=\sum_{i=1}^M v(x_i)$.
Define the $k$-body Hamiltonian
\begin{equation}
\label{eq:k-static-hamiltonian}
H_0^{(k)}
=
\sum_{i=1}^{k}\left(-\frac12\nabla_{x_i}^2 + V_{\mathrm{ext}}(x_i)\right)
+
\sum_{1\leq i<j\leq k}\frac{1}{\|x_i-x_j\|}.
\end{equation}
The exact static linear-response equation at level $k$ is
\begin{equation}
\label{eq:k-static-exact}
\left[H_0^{(k)},\,\delta\rho^{(k)}\right]
+
\mathcal{C}_{k\to k+1}\!\left[\delta\rho^{(k+1)}\right]
=
-\left[V^{(k)},\,\rho^{(k)}_0\right],
\end{equation}
where
\begin{equation}
\label{eq:k-static-perturbation}
V^{(k)}=\sum_{i=1}^{k} v(x_i),
\end{equation}
and the coupling to the next level is
\begin{equation}
\label{eq:k-static-coupling}
\mathcal{C}_{k\to k+1}\!\left[\delta\rho^{(k+1)}\right]
=
\frac{k+1}{M-k}\,
\mathrm{Tr}_{k+1}
\left[
\sum_{i=1}^{k}\frac{1}{\|x_i-x_{k+1}\|},
\,
\delta\rho^{(k+1)}
\right],
\end{equation}
where the prefactor $\tfrac{k+1}{M-k}$ compensates for the $\mathrm{Tr}_{k+1}\rho^{(k+1)}=\tfrac{M-k}{k+1}\rho^{(k)}$ relation implied by the trace normalization above, so that the $\left[H_0^{(k)},\delta\rho^{(k)}\right]$ and $\left[V^{(k)},\rho^{(k)}_0\right]$ terms retain unit coefficient.
In the kernel form the source term acts as
\begin{equation}
\label{eq:k-static-source}
\left[V^{(k)},\rho^{(k)}_0\right](x;x')
=
\left(
\sum_{i=1}^{k} v(x_i)
-
\sum_{i=1}^{k} v(x'_i)
\right)
\rho^{(k)}_0(x;x').
\end{equation}

\paragraph{Closure approximation.}

The above hierarchy does not lead to low-dimensional PDEs, unless we truncate the hierarchy at some level. A closure replaces the dependence on $\delta\rho^{(k+1)}$ by a linear operator $\mathcal{K}^{(k)}$ acting on $\delta\rho^{(k)}$ \cite{mazziotti1998contracted}. The resulting closed steady-state equation is
\begin{equation}
\label{eq:k-static-closed}
\left[H_0^{(k)},\,\delta\rho^{(k)}\right]
+
\mathcal{K}^{(k)}\,\delta\rho^{(k)}
=
-\left[V^{(k)},\,\rho^{(k)}_0\right].
\end{equation}
Equivalently, introducing the linear operator
\begin{equation}
\label{eq:k-static-liouvillian}
\mathcal{L}^{(k)}(\cdot)
=
\left[H_0^{(k)},\,\cdot\right]
+
\mathcal{K}^{(k)}(\cdot),
\end{equation}
the equation can be written compactly as
\begin{equation}
\label{eq:k-static-operator-form}
\mathcal{L}^{(k)}\,\delta\rho^{(k)}
=
-\left[V^{(k)},\,\rho^{(k)}_0\right].
\end{equation}
A $k$-particle linear response pipeline on a quantum computer can now be organized as follows:
\begin{enumerate}[noitemsep, topsep=5pt]
\item Obtain a level $k$ ground state $\rho^{(k)}_0$, from a single-particle or other intermediate level pipelines,
\item Construct a block encoding for the operator $\mathcal L^{(k)}$ and the source state oracle shown in \eqref{eq:k-static-operator-form},
\item Solve the $6k$-dimensional equation \eqref{eq:k-static-operator-form} to obtain a quantum state approximating $\delta \rho^{(k)}$,
\item Compute observables sensitive to $k$-body correlations from $\rho^{(k)}_0$ and $\delta\rho^{(k)}$.
\end{enumerate}

\end{document}